\journal{Journal of Economic Theory}
\newcommand{\dd}{{\rm d}}
\newcommand{\E}{\mathbb{E}}
\newcommand{\N}{\mathbb{N}}
\newcommand{\R}{\mathbb{R}}
\newcommand{\C}{\mathcal{C}}
\newcommand{\HH}{\mathbb{H}}
\newcommand{\Q}{\mathcal{Q}}
\newcommand{\X}{\mathcal{X}}
\newcommand{\K}{\mathcal{K}}
\newcommand{\Set}{\mathcal{S}}
\newcommand{\CC}{\boldsymbol{C}}
\newcommand{\cc}{c}
\newcommand{\x}{x}
\newcommand{\q}{q}
\newtheorem{problem}{Problem}
\def\dd{{\rm d}}
\newtheorem{theorem}{Theorem}
\newtheorem{lemma}{Lemma}
\newtheorem{corollary}{Corollary}
\newtheorem{definition}{Definition}
\begin{document}

\begin{frontmatter}

\title{Optimal auctions for networked markets with externalities}
 
\author{Benjamin Heymann\footnote{
CMAP, Inria, Ecole polytechnique, CNRS, Universit\'e Paris-Saclay, 91128, Palaiseau, France. \\
CMM, Universidad de Chile, Santiago, Chile.}}

\author{Alejandro Jofr\'e\footnote{CMM and DIM, Universidad de Chile, Santiago, Chile}}

\begin{abstract}
Motivated by the problem of market power in electricity markets, we introduced in previous works a
mechanism  for 
simplified markets of two agents with linear cost.
In standard procurement auctions, the market power resulting from the
quadratic transmission losses allows the producers to bid above their true values, which are their production cost. The mechanism proposed in the previous paper optimally reduces the producers' margin to the society's benefit.
In this paper, we extend those results to a more general market made of a finite
number of agents with piecewise linear cost functions, which makes the
problem more difficult, but simultaneously more realistic. We show
that the methodology works for a large class of externalities.
We also provide an  algorithm to solve the principal allocation
problem.
Our contribution provides a benchmark to assess the sub-optimality of the mechanisms used in practice. 
\end{abstract}

\begin{keyword}
Optimal auctions, mechanism design, allocation algorithm, electricity markets, fixed point.\\
JEL classification: D44, D62, D82
\end{keyword}

\end{frontmatter}

\section{Introduction}
Our purpose in this paper is to show how   oligopolistic
behaviors in network markets can be tackled  using mechanism
design. We point out that the optimal
mechanism we obtain has a surprisingly simple
expression. We complete this work with algorithmic tools for the computation of
this mechanism.
Following a model already discussed in
\cite{escobar2010monopolistic,escobar2008equilibrium,NicolasFigueroaAlejandroJofrBenjaminHeymann},
we consider a geographically extended market where a divisible good is traded. 
In this proposal, each market participant is  located on a node of a graph, and the
nodes are connected by edges.
The good can travel from one node to
another through those edges  at the cost of a  loss. Since our initial motivation was the electricity market, we will do the presentation with quadratic loss, but as explained thereafter, our results extend to a broad class of externalities. 
We are considering the usual transmission network constraints with the DC approximation (active power) for the losses.

We will use the word \textit{principal} to designate what could also be called
in the centralized market literature a central operator, or in the context of electricity
markets, an ISO.
The principal, who aggregates the (inelastic) demand side, has to  locally match -i.e. at each node - production and demand at the lowest
expense through a procurement auction.
As argued in~\cite{NicolasFigueroaAlejandroJofrBenjaminHeymann}, this
setting can be applied to describe real electricity markets, but it could also be used in other markets where a good is being transported.    
Either way, there is a clear antagonism between the market participants: the
operator wants to minimize his expected cost while the producers want to
maximize their expected profits. 
Therefore, at the same time that there is a transaction and a commitment between each agent and the principal, there also exists competition among the agents. 
In a standard procurement auction, the market power resulting
from the quadratic line losses allows the producers to bid above their
true values, or production cost \cite{escobar2010monopolistic}. The
mechanism  reduces the producers' margin and decreases the social
cost represented in this case by the optimal value of the
principal.
This optimal auction design was introduced by Myerson in 1981~\cite{myerson1981optimal} for a non-divisible good and no externalities.

We build on  an electricity market model  introduced by the second author in two previous papers \cite{escobar2008equilibrium} and
\cite{escobar2010monopolistic}. 
The authors wrote a  brief
presentation of this model in \cite{Heymann:2016aa}. 
Other models were proposed for example in  \cite{aussel2013electricity},
\cite{anderson2013mixed}, and \cite{hu2007using}, with a focus  on the existence
of a market equilibrium.
We pinpoint that if our initial motivation was electricity markets, network markets are used in other setting such as telecommunication \cite{Altman2006}.  
Distributed markets were also studied in \cite{Babaioff2009,Cho2003}, with a focus on efficiency and linear cost for transmissions. 
For more information on the techniques we  use  in this paper the reader can refer
to \cite{laffont2009theory}, \cite{krishna2009auction}, \cite{Roughgarden2016},  \cite{nisan2007introduction}, the chapter 45 of \cite{aumann1992handbook}  and 
\cite{topkis1998supermodularity} for general introductions on principal-agent
theory, mechanism design, 
game theory
and lattices theory respectively.

In the sequel we consider, as we did in \cite{NicolasFigueroaAlejandroJofrBenjaminHeymann}, that every participant knows the demand at each node before the interactions begin and that the production cost of each agent is private information. In a standard setting, the agents are the first to bid their costs, after which the principal,
knowing the bids,  minimizes his cost. In a standard setting, the principal is, therefore, a bid-taker. 
The producers know
they influence the allocation and compete with each other to maximize
their individual profit.
Since the demand is known by everyone, everyone can guess the
principal reaction once the bids have been announced: we can therefore virtually remove
the principal   from the interaction in the standard setting and
consider that the agents are  players of  a game with incomplete
information (since the agents do not know their fellow agents'
preferences). This equivalence is true provided that the agents are
not communicating with each other.
The mechanism changes the payoff function of this game -subject to constraints we detail in this article-  so as to minimize the principal's expected cost before the bids are announced.
Allowing the principal to act first by revealing a committing rule
gives him a strategic advantage in the negotiation.

We restrict our discussion to deterministic demand, but the reasoning
extends naturally    to random demand as long as any possible realization of the
demand satisfies the model assumptions. 
Indeed,  since the optimal mechanism constructed in this article is 
incentive compatible, then a random version (where the demand is
revealed after the producers' bidding phase, as in \cite{escobar2008equilibrium}) would be realization-wise
incentive compatible, and so incentive compatible. 
Observe that the mechanism we propose in the sequel could be adapted to elastic, piecewise linear demand.

Our first main result is actually the mechanism design characterization. The result is valid for a very general class of externalities as explained in the generalization section. 
This characterization of the optimal mechanism could be used to  assess the sub-optimality of  the mechanisms used in practice. 

Interestingly, the allocation  procedures for the optimal and the standard mechanism are the same (one just needs to modify the input of the allocation procedure of the standard mechanism to get the allocation of the optimal mechanism).
Our second main result is a principal allocation algorithm based on a fixed point. 
The fixed point could be interpreted as cooperating agents trying to minimize a global criterion by sharing relevant information. 
Our  implementation of the algorithm gives good results against
standard methods.  
We point out that  the numerical computation of the Nash equilibrium
for the procurement auction (important to compare the optimal
mechanism and the standard auction setting) requires an efficient 
algorithm to compute the allocation. 
Some other additional facts are presented within the paper: the
smoothness of the allocation functions ($q$ and $Q$), a decreasing rate
estimation for the fixed point iterations, some results of numerical
experiments with the fixed point algorithm. 

We describe the market in the next section. In \S\ref{sec:mechanism} we introduce and  solve the mechanism design problem. In \S\ref{sec:nagents}, we study the standard allocation problem
and  propose an algorithm to solve it. 
In \S\ref{sec:conclusion} we sum up and comment on our  main results and
propose possible continuations of this work.

\section{Market description}
\label{sec:market}
The production cost of each agent is assumed to be  piecewise linear, non-decreasing and convex  in the quantity produced. 
This class of functions is sufficiently rich to represent real-life
problems and is sufficiently simple for theoretical study.
In this work we need to  assume that the production levels at which
there is a slope change are known in advance and are exogenous - that is the
agents cannot choose them-. 
Then, without loss of generality, we assume that there is a quantity
$\bar{q}$ such that the changes of slope only occur at the multiples
of $\bar{q}$. 
Thus, the authors find it practical to write the production cost functions in the form 
\begin{equation}
\label{cost}
C^{c}(q) = \sum_{j=1}^N c_j \min ( (q-(j-1) \bar{q})^+, \bar{q}),
\end{equation}
where $N\in\N$ and the $c_j$ are some slopes coefficients specific to the agent, while $q$ is the quantity produced. We will sometimes refer to the vector of the $c_j$ as the cost vector (of the agent).
If we denote by $q_i^j$ the quantity produced by agent $i$ at marginal cost $c_i^j$, then  $q_i^j=\min ( (q_i-(j-1) \bar{q})^+, \bar{q})$, where $q_i$ is the total quantity produced by this agent.
Let $c_*<c^*\in \R^{*+}$ and $\CC$ a set of
non-decreasing $N$-tuples of $[c_*,c^*]$.
To each element $c$ of $\CC$ we associate the piecewise linear cost
function $q \rightarrow C^{c}(q) $. Throughout the paper we set, for any $c\in\CC$,
$c^{N+1}=c^*$ to simplify the notations in some proofs. 
Note that in practice a capacity constraint of the type $q\leq j\bar{q}$ for a given agent can be implemented by setting its $(j+1)^{th}$ slope $c_{j+1}$ equal to a big positive number.
If an agent of cost vector $\cc$ produces a quantity $q$ and
receives a transfer $x$, then its profit is
\begin{equation}
  \label{profitDef}
  u_i = x - C^{c}(q).
\end{equation}

There are $n$ agents numbered from $1$ to $n$ in the market.
We denote $I= [1\ldots n]$ and use generically the letter $i$ to refer to a
specific agent, and $-i$ to refer to $I\backslash\{i\}$. We denote $J=
[1\ldots N]$ and we will use
generically $j$ for the cost coefficients of the $jth$ segment
(starting from $1$).
The agents are dispatched on the $n$ nodes of a graph.
At each node $i$ we find the corresponding agent $i$ and a local
demand $d_i$.
The nodes are connected by undirected edges.
We write $V(i)$ the set of nodes different from $i$ connected to $i$.
Obviously if $i_1\in V(i_2)$ then $i_2\in V(i_1)$.
We denote $E = \{(i_1,i_2): i_1\in V(i_2)\}$ the set of undirected edges.
For each $(i_1,i_2)\in E$, we introduce a quadratic loss coefficient $r_{i_1,i_2}$ such that  $r_{i_1,i_2}=r_{i_2,i_1} $.
In the context of electricity markets, this quadratic coefficient
corresponds to the Joule effect within the lines.
We make the non restricting assumption that $N$ is big enough so that in what follows production at each node is smaller than $\bar{q}N$.

We assume that both the agents and the principal are risk neutral:  they maximize their expected profit.  
If the  principal proposes to pay a price $x_i$ to  agent $i$ to
make her produce a quantity $q_i$ - 
this agent being free to accept or decline the offer-
and if the agent $i$ has a production cost defined by $c_i$, then he accepts the offer if
\begin{equation}
\label{IC1}
  x_i - C^{\cc_i}(q_i)\geq 0.
\end{equation}
Then   for agent $i$, either 
$x_i\geq C^{\cc_i}(q_i)$ or 
 $q_i = 0$.
Thus,  if the principal knew the cost vectors $\cc_i$, he would solve an allocation
problem with those $\cc_i$, and then bid to the agents the quantity and
the payments corresponding to the solution of the allocation problem.
But the principal does not know the cost vectors, and instead what
happens is that the
agents  tell him some values for the $\cc_i$ (not necessarily
their real cost vectors), and then the principal decides based on those values.
In this case, previous works \cite{escobar2010monopolistic} showed that the agents
could receive  non-zero profits and  bid above their production costs. 
The question we now address is how to reduce their margins.

To do so, we need to consider an intermediate scenario between the one in which the
agent knows nothing (and is a price taker), and the one in which he
knows everything (and therefore directly optimizes the whole system as a
global optimizer). 
Each agent is characterized by an element $f_i$,  which is a probability density
of support included in  $\CC$ and an element $c_i$  of $\CC$ drawn according to $f_i$. 
Only agent $i$ knows $c_i$, which is private information. The other
agents and the principal only know the probability $f_i$ with which it
was drawn. The density $f_i$ corresponds to the public knowledge on
agent $i$'s production costs so the principal won't accept any bid $c_i$ that is not in the support of $f_i$. 
We assume that the cost slopes are not correlated  for a given agent
and between agents, i.e. their laws $f_i^j$ are independent. In
particular $f_i(c_i) =\prod_{j\in J} f_{i}^j(\cc_{i}^j)$. In such situation, it makes sense to define
\begin{equation}
  \label{eq:1}
f_{-i}(c_{-i}) = \prod_{i'\in I\backslash i} f_{i'}(\cc_{i'})  \quad \mbox{ and }
\quad f(\cc_1,..,\cc_n) = \prod_{i\in I} f_{i'}(\cc_{i}),  
\end{equation}
and $\E$ (respectively $\E_{\cc_{-i}}$) the
mean operator with respect to $f$  (respectively
$f_{-i}$).
The density  $f$ (resp. $f_{-i}$) represents the uncertainty from the
principal's (resp. agent $i$) perspective. 
To simplify notations we will use the symbol $\CC^n$ to denote the
product of the supports of the $f_i$s.
We denote by $\Q$ the set of allocation functions  - which are the applications from $ \CC^n$ to
$\R_+^{n}$, by 
$\X$ the set of payments functions -which are the applications from $ \CC^n$ to
$\R^n$, and by $\HH$ the set of flow functions - which are the applications from  $ \CC^n$ to $\R^E$ -. 
A \emph{direct mechanism}  is a triple $(\q,\x,h)\in (\Q,\X,\HH )$.
Let $(\q,\x)\in (\Q,\X)$. For this payment function and this allocation function,
the expected profit of agent $i$ of type $c_i$ and bid $c_i'$ is 
\begin{equation}
  \label{eq:4}
U_i(\cc_i,c_i')= \E_{-i} u_i =  X_i(c_i')  - \sum_{j\in J} c_i^j Q_i^j(c_i').
\end{equation}
where the capitalized quantities 
\begin{equation}
  \label{eq:2}
  Q_i^{j}(\cc_i)= \E_{-i}  \min ( ( \q_i(\cc_i,\cc_{-i}) -(j-1)
  \bar{q})^+, \bar{q})  \  \mbox{ and }
  \   X_i(\cc_i) = \E_{-i} x_i(\cc_i,\cc_{-i})
\end{equation}
correspond to the average of their  non capitalized counterpart. 
We also denote by
\begin{equation}
  \label{eq:5}
V_i(c_i)= U_i(\cc_i,\cc_i).  
\end{equation}
the expected profit of agent $i$ if he is of type $c_i$ and bids her
true production cost.

For $i\in I$, $j\in J$ and $\cc_i\in\CC_i$ let
$
K^j_i(c_i)= \int_{c_i^{j-}}^{c_i^j} f_i(\cc^{-j}_i,s) ds/f_i(c_i). 
$
We point out that by independence of the laws of the $c_i^j$,
$K^j_i(c_i)= \int_{c_i^{j-}}^{c_i^j}  f_i^j(s) ds \slash
f_i^j(c_i^j)=K_i^j(c_i^j)$.
Thus $K_i^j$ is simply  the ratio of the cumulative distribution and the
probability density for $c_i^j$.
Our main  assumption is  the \emph{discernability assumption}: for all $i\in I$ and $c_i\in\CC_i$, the
virtual cost
$J_{i,j}(c_i^j)=c_i^j+K_i^{j}(c_i^j)$ is increasing in $j$. 
As demonstrated in the next section, the virtual cost could be
interpreted as the real marginal cost augmented by a marginal
information rent.
The assumption imposes the marginal information rent to be such that for any
bid, the virtual marginal
prices are increasing, i.e. the virtual production cost function is
convex.
The assumption is necessary to show the independence property of the
reformulation in  Lemmas \ref{lemma:IC} and \ref{lemma:indep}.

This assumption implies the \emph{non overlapping working zones assumption}:
if we denote by $\CC_i$ the support of $f_i$, then $\CC_i$ should be of
the form:
\begin{equation}
\CC_i = [c_i^{1-},c_i^{1+}] \times\ldots \times[c_i^{N-},c_i^{N+}]
\end{equation}
with $c_i^{1-}<c_i^{1+} <\ldots <c_i^{N-}<c_i^{N+}$.
We could interpret each segment over which the agent has a constant
marginal cost as a working zone with identified productive assets. 
The expertise of the market participants should allow them to, based
on the working zone, assess the marginal cost of the agent.
This makes senses for instance if the setting is repeated over time.
This
estimation  need to be
precise enough so that there is no chance that it corresponds to
another working zone. We use this assumption  in particular in the proof
of lemma \ref{lemma:x}.
For simplicity we assume that
$c_i^j \rightarrow c_i^j+K_j^{i}(c_i^j)$
is increasing in $c_i^j$. 
\footnote{This is  the piecewise linear adaptation of the classic \emph{monotone likelihood ratio property assumption}}. This assumption can be 
withdrawn using the ironing technique introduced by Myerson without difficulty.
To finish with the market presentation, we introduce  the products of
the type sets $\CC^n= \prod_{i\in I}\CC^{i'}$ and  $\CC^{-i}= \prod_{i'\in I\backslash
  \{i\}}\CC^{i'}$.  
\section{Mechanism Design}
\label{sec:mechanism}
We begin with  the revelation principle as expressed in \cite{gibbons1992game}.
\begin{theorem}[Revelation Principle]
To any Bayesian Nash equilibrium of a game of incomplete information, there exists a payoff-equivalent direct revelation mechanism that has an equilibrium where the players truthfully report their types.
\end{theorem}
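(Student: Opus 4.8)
The plan is to give the standard constructive argument behind the revelation principle: starting from an arbitrary Bayesian Nash equilibrium of the original game, one builds a direct mechanism in which truthful reporting is itself an equilibrium producing the same outcomes. Write $G$ for the game of incomplete information at hand and suppose $\sigma^\ast=(\sigma_1^\ast,\ldots,\sigma_n^\ast)$ is a Bayesian Nash equilibrium of $G$, where each $\sigma_i^\ast$ sends agent $i$'s type $c_i\in\CC_i$ to the action (message or bid) she plays in $G$. The key observation is that such an equilibrium strategy already encodes the map ``type $\mapsto$ behavior''; the direct mechanism will simply internalize it.

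First I would construct the candidate direct mechanism $(\q,\x,h)$ by composition. Letting $g$ denote the outcome rule of $G$, I set the outcome associated with a reported profile $(\hat c_1,\ldots,\hat c_n)\in\CC^n$ equal to $g(\sigma_1^\ast(\hat c_1),\ldots,\sigma_n^\ast(\hat c_n))$; reading off its components gives the allocation $\q$, the payments $\x$ and the flows $h$. This is well defined as soon as the $\sigma_i^\ast$ are measurable, which holds for any equilibrium.

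Next I would check that truthful reporting is a Bayesian Nash equilibrium of this direct mechanism, which is the heart of the proof. Fix agent $i$ of true type $c_i$ and assume every other agent reports truthfully. If $i$ reports $\hat c_i$, then by construction the induced outcome is exactly the one $G$ generates when $i$ plays $\sigma_i^\ast(\hat c_i)$ against $\sigma_{-i}^\ast(c_{-i})$, and the expectation over $c_{-i}$ under $f_{-i}$ is identical in the two games. Since $\sigma^\ast$ is an equilibrium of $G$, the action $\sigma_i^\ast(c_i)$ maximizes $i$'s expected payoff against $\sigma_{-i}^\ast$ over all feasible actions, in particular over the subfamily $\{\sigma_i^\ast(\hat c_i):\hat c_i\in\CC_i\}$. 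Hence no report $\hat c_i$ can beat the truthful one, i.e. $U_i(c_i,c_i)\geq U_i(c_i,\hat c_i)$; carrying this out for every $i$ yields the claimed truthful equilibrium.

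Finally, payoff-equivalence is immediate: evaluated along the truthful profile $(c_1,\ldots,c_n)$, the constructed mechanism reproduces by definition the outcome of $G$ under $\sigma^\ast$, hence the same expected profit for each agent and the same expected cost for the principal. The only delicate point — and the main, though mild, obstacle — is the bookkeeping showing that a unilateral misreport in the direct mechanism corresponds to a genuine unilateral deviation in $G$ while respecting the information structure (each agent conditions only on her own type and the common priors $f_{-i}$). Once that correspondence is set up cleanly, the equilibrium inequality in $G$ transfers verbatim to the incentive-compatibility inequality for the direct mechanism.
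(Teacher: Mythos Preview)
Your argument is the standard constructive proof of the revelation principle and is correct. Note, however, that the paper does not actually prove this theorem: it merely quotes it from \cite{gibbons1992game} and uses it as a black box to justify restricting attention to direct truthful mechanisms. So there is no ``paper's own proof'' to compare against; your write-up supplies exactly the classical justification that the paper omits.
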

According to the revelation principle, we can look for direct truthful mechanisms. Because, there is no reason why the agents should willingly report their types  we need to add a constraint on the design to enforce  truthfulness.
This means that the profit of any agent $i$ of type $c_i$ should be maximal when agent $i$ bids her true type $c_i$
i.e. for all  $(\cc'_i,\cc_i)$
\begin{equation}
  U_i(\cc_i,\cc_i) \ge U_i(\cc_i,\cc_i'). \quad (IC)
\end{equation}
This is the incentive compatibility \emph{(IC)} constraint.
In addition, since we want all 
agents to participate in the market, we  need the \emph{participation
constraint} imposing that for all $c_i$
\begin{equation}
   U_i(c_i,c_i) \ge 0.\quad (PC)
\end{equation}
Without this constraint, the principal would optimize as if the agents would accept any deal (even deals where they would make a negative profit). 
The last constraint is that the supply should be at least equal to the
demand at every node. The supply available at a given node is equal
to the production augmented by the imports minus the exports and the
line losses. As explained earlier, there is a loss when some quantity
$h_{i,i'}$ of the divisible good is sent from one node $i$ to another $i'$. This loss is equal to $r_{i,i'}h_{i,i'}^2$, where $ r_{i,i'}$ is a multiplicative constant. 
In order to obtain symmetric expressions, we will proceed as if half
of this quantity was lost by the sender, and the other half by the
receiver (see for instance \cite{escobar2010monopolistic}). 
Note that we could have equivalently used signed flows, but we would
have lost some symmetry in the formulation.
Then the \emph{supply and demand constraint} writes, for all $i \in I$ and $c\in \CC^n$,
\begin{equation}
 q_i(\cc)  + \sum_{i'\in V(i)}h_{i',i}(\cc) - h_{i,i'}(\cc)-\frac{h^2_{i,i'}(\cc)+h^2_{i',i}(\cc)}{2}r_{i,i'} \ge d_i.  \quad (SD)\\
\end{equation} 
We point out that for an optimal allocation (see \S
\ref{sec:nagents}) , $h_{i,i'}h_{i',i}=0$.

The principal decision is a triple $(q,x,h)\in(\Q,\X,\HH)$. This
decision is made under the constraints (IC), (PC) and (SD). Since we
assume that  the principal is risk neutral, his goal is to minimize his average
cost, which translates mathematically by his criterion being equal to the expected  sum of  payments.
Finally the optimal mechanism is the solution of 
\begin{problem}
\label{pb:mechanism}
\begin{equation*} 
\begin{aligned}
& \underset{(\q,\x,h)\in(\Q,\X,\HH)}{\text{minimize}}
\sum_{i\in I} \E x_i(\cc)
 \\
& \text{subject to}\\
& \forall \cc\in\CC^n,\forall i\in I: \quad q_i(\cc)  + \sum_{i'\in
  V(i)}h_{i',i}(\cc) -
h_{i,i'}(\cc)-\frac{h^2_{i,i'}(\cc)+h^2_{i',i}(\cc)}{2}r_{i,i'} \ge
d_i  \  (SD)\\
&\forall \cc\in \CC^n, \forall (i,i')\in E:\quad h_{i,i'}(\cc) \ge 0  \\
 &\forall i\in I,\forall (c'_i,c_i) \in\CC_i^2:\quad  U_i(c_i,c_i) \ge
 U_i(c_i,c_i') \  (IC) \\
&\forall i\in I, \forall c_i\in \CC_i: \quad  U_i(c_i,c_i) \ge 0  \quad (PC).
\end{aligned}
\end{equation*}
\end{problem}
We now proceed to solve the optimal mechanism design problem, which is
a functional optimization problem with an infinity of constraints,
some of which are expressed with integrals. 
The essential observation is that this complicated problem is
equivalent to a much simpler one. The proof relies on the comparison
with two intermediate problems:
\begin{problem}
\label{pb:modified}
\begin{equation*} 
\begin{aligned}
& \underset{(\q,\x,h)\in(\Q,\X,\HH)}{\text{minimize}} \sum_{i\in I} \E 
x_i(\cc) 
 \\
& \text{subject to.}\\
& \forall \cc\in\CC^n,\forall i\in I: \quad q_i(\cc)  + \sum_{i'\in V(i)}
h_{i',i}(\cc) - h_{i,i'}(\cc) - \frac{h^2_{i,i'}(\cc)+h^2_{i',i}(\cc)}{2}r_{i,i'} \ge d_i (SD)  \\
&\forall \cc\in \CC^n, \forall (i,i')\in E:\quad h_{i,i'}(\cc) \ge 0  \\
&\forall i\in I,\forall j\in J, (\cc^{-j},t_1,t_2), 
 (c^1,\ldots,t_k,\ldots , c^N) \in\CC_i,:  V_i(c^1,..,c^{j-1},t_1,c^{j+1}..,c^N)  \\
&-V_i(c^1,..,c^{j-1},t_2,c^{j+1}..,c^N) =
\int^{t_2}_{t_1}  Q_i^j(c^1,..,c^{j-1},s,c^{j+1}..,c^N) \dd s
\quad (H1)\\
& \forall i\in I, \forall (\cc,\cc')\in \cc^2: \quad (\cc - \cc').(Q_i(\cc)-
    Q_i(\cc') )\leq 0, \quad \quad (H2)\\
&\forall i\in I, \forall c_i\in \CC_i: \quad  V_i(c_i) \ge 0  \quad (PC),
\end{aligned}
\end{equation*}
\end{problem}
and  
\begin{problem}
\label{pb:reformulation}
\begin{equation*} 
\begin{aligned}
&\underset{(\q,h)\in(\Q,\HH)}{\text{minimize}}
\E  \sum_{i\in I}  \sum_{j\in J}  q_i^j (c)(c_i^j+ K^j_i(c_i^j))\\
& \text{subject to}\\ 
&\forall (\cc,i)\in \CC^n \times I: q_i(\cc)  + \sum_{i'\in V(i)}
h_{i',i}(\cc) - h_{i,i'}(\cc) - \frac{h^2_{i,i'}(\cc)+h^2_{i',i}(\cc)}{2}r_{i,i'} \ge d_i   (SD) \\
&\forall \cc\in \CC^n, \forall (i,i')\in E:\quad h_{i,i'}(\cc) \ge
0.  \\
&\forall \cc\in\CC_i,\forall i\in I: x_i(\cc)= \sum_{j\in
  J}q_i^j(\cc)c_i^j+
\int_{c_i^j}^{c_i^{j+}}q_i^j(c_i^1\ldots
c_i^{j-1},t,c^{(j+1)+}_1\ldots c^{N+}_i;c_{-i})\dd
t.
\end{aligned}
\end{equation*}
\end{problem}
The inequality on the scalar product in (H2) is the piecewise linear equivalent of a
monotonicity condition already encountered in
\cite{NicolasFigueroaAlejandroJofrBenjaminHeymann}.  
The first two problems are very similar, but (IC) has been replaced by
(H1) and (H2) and (PC) is expressed in terms of $V$ instead of $U$. This replacement is a trick introduced by Myerson in his 1981 paper.
We will show later on how we can compare Problems 2 and 3, but note that 
Problem 3 is  simpler, as the optimization part can be solved
pointwise (and $x$ can be deduced from this pointwise optimization). 
The main result of this paper is that  the three problems have the same solution. 
\subsection{Necessary conditions for Problem \ref{pb:mechanism}}
\label{sec:equivalence of the problem}
We  derive some necessary conditions for a solution of  Problem
\ref{pb:mechanism}.
In fact, we only use constraint $(IC)$ to deduce the two next results.
The first lemma indicates that any solution of the first problem
should be such that $Q$ is monotonous. This is a classic result
already introduced  in  \cite{myerson1981optimal} and
\cite{NicolasFigueroaAlejandroJofrBenjaminHeymann}, for instance. The novelty 
here is that in the context of  piecewise linear production cost functions,  this monotonicity result is expressed  in a vectorial sense.
\begin{lemma}[$Q$ monotonicity]
\label{lemma:monotone}
  If $(q,x,h)$ is admissible for Problem \ref{pb:mechanism}, then for all agent $i\in I$ and all  $( \cc_i,\cc'_i )\in \CC_i^2$
  \begin{equation} ( \cc_i - \cc_i'
    ).(Q_i(c_i)-
    Q_i(\cc'_i ) )\leq 0
  \end{equation}
where $.$ is the scalar product in $\R^N$.
\end{lemma}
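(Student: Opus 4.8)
The plan is to derive the inequality from the incentive compatibility constraint (IC) alone, via the classical ``double deviation'' argument from mechanism design, exploiting the fact that for a fixed bid $\cc_i'$ the expected profit $U_i(\cc_i,\cc_i')$ given in \eqref{eq:4} is \emph{affine} in the true type $\cc_i$. This affine structure is precisely what lets the two incentive inequalities combine into a single scalar-product inequality.

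First I would write (IC) twice, for two distinct types each deviating to the other's truthful bid. For the agent of true type $\cc_i$ deviating to the bid $\cc_i'$,
\[
U_i(\cc_i,\cc_i)\ \ge\ U_i(\cc_i,\cc_i'),
\]
and symmetrically, for the agent of true type $\cc_i'$ deviating to the bid $\cc_i$,
\[
U_i(\cc_i',\cc_i')\ \ge\ U_i(\cc_i',\cc_i).
\]
Next I would substitute the expression $U_i(\cc_i,\cc_i')=X_i(\cc_i')-\sum_{j\in J} c_i^j\,Q_i^j(\cc_i')$ from \eqref{eq:4} into both. The decisive point is that the payment term $X_i$ depends only on the \emph{announced} bid, so after substitution the first inequality reads $X_i(\cc_i)-\sum_j c_i^j Q_i^j(\cc_i)\ge X_i(\cc_i')-\sum_j c_i^j Q_i^j(\cc_i')$, and the second reads identically with $\cc_i$ and $\cc_i'$ interchanged. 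Adding the two inequalities makes the payment terms $X_i(\cc_i)$ and $X_i(\cc_i')$ cancel, and after rearranging the allocation terms one is left with
\[
\sum_{j\in J}\bigl(c_i^j-(c_i')^j\bigr)\bigl(Q_i^j(\cc_i)-Q_i^j(\cc_i')\bigr)\ \le\ 0,
\]
which is exactly $(\cc_i-\cc_i')\cdot(Q_i(\cc_i)-Q_i(\cc_i'))\le 0$.

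I do not expect a genuine obstacle here: the argument is short and uses only (IC), not (PC) or (SD), consistent with the remark preceding the lemma. The one subtlety worth stating explicitly is why the cross terms recombine so cleanly --- this rests entirely on the linearity of $U_i$ in $\cc_i$ coming from the piecewise linear cost model, so that the coefficients multiplying the allocation components $Q_i^j$ are exactly the components of the true type. Any convex but nonlinear cost would spoil the cancellation and force one to work instead with the weaker integral form of the monotonicity condition.
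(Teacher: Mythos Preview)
Your proposal is correct and follows essentially the same approach as the paper: write (IC) for the pair of deviations $\cc_i\to\cc_i'$ and $\cc_i'\to\cc_i$, substitute the affine expression \eqref{eq:4}, sum the two inequalities so that the $X_i$ terms cancel, and read off the scalar-product inequality. Your remark that only (IC) is used, and that linearity of $U_i$ in the true type is what makes the cancellation work, is exactly the point.
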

\begin{proof}
We omit the $i$ in the proof, as it plays no role.
First, let $(\cc,\cc')\in \CC_i^2$
by the (IC) constraint,
\begin{equation} 
U(\cc,\cc)\geq U(\cc,\cc') \quad \mbox{and} \quad 
U(\cc',\cc')\geq U(\cc',\cc)
\end{equation}
i.e.
\begin{equation} 
\begin{aligned}
  X(c)  - \sum_{j\in J} c^j Q^j(c) \geq   X(c')
  - \sum_{j\in J} c^j Q^j(c')\\
 X(c')  - \sum_{j\in J} c^{j'} Q^j(c') \geq   X(c)
  - \sum_{j\in J} c^{j'} Q^j(c) .
\end{aligned}
\end{equation}
We get the lemma after the summation of the two inequalities and simplification.
\end{proof}

Lemma \ref{lemma:monotone} indicates that an agent should be 
producing less on average in his $i$th working zone if he is bidding a higher
marginal cost for this working zone.
\begin{lemma}
\label{lemma:int}
  If $(q,x,h)$ is admissible for Problem \ref{pb:mechanism} then for any agent (omitting $i$) for
  any $\cc$, $t_1$ and $t_2$
\begin{equation}
\begin{aligned}
V(c^1,\ldots,c^{j-1},t_1,c^{j+1},\ldots,c^N) =&V
(c^1,\ldots,c^{j-1},t_2,c^{j+1},\ldots,c^N) \\
 &-\int^{t_1}_{t_2}  Q^j(c^1,\ldots,c^{j-1},s,c^{j+1},\ldots,c^N)\dd s
\end{aligned}
\end{equation}
\end{lemma}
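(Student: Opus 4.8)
The plan is to use the incentive-compatibility constraint (IC) in both directions for two types that differ in a single coordinate, and then convert the resulting pair of inequalities into the claimed integral identity by a Riemann-sum argument. As in the statement I omit the agent index $i$; I write $g(t)$ for $V(c^1,\ldots,c^{j-1},t,c^{j+1},\ldots,c^N)$ and $\phi(t)$ for $Q^j(c^1,\ldots,c^{j-1},t,c^{j+1},\ldots,c^N)$, all coordinates other than the $j$-th being held fixed.

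First I would derive a sandwich inequality. Fix values $a$ and $b$ of the $j$-th coordinate and let $c_a,c_b$ be the corresponding full type vectors, which agree in every coordinate except the $j$-th. Writing the instance of (IC) stating that type $c_a$ does not gain by reporting $c_b$, that is $V(c_a)=U(c_a,c_a)\ge U(c_a,c_b)$, and substituting the definition $U(c,c')=X(c')-\sum_k c^k Q^k(c')$ from \eqref{eq:4} together with $X(c_b)=V(c_b)+\sum_k c_b^k Q^k(c_b)$, all the terms with $k\neq j$ cancel and one is left with
\begin{equation*}
g(a)-g(b)\ge (b-a)\,\phi(b).
\end{equation*}
The symmetric instance of (IC), that type $c_b$ does not gain by reporting $c_a$, gives in exactly the same way
\begin{equation*}
g(a)-g(b)\le (b-a)\,\phi(a).
\end{equation*}
These two bounds are the only place where (IC) is used, and together they squeeze the increment of $g$ between the endpoint values of $\phi$.

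Next I would record that $\phi$ is monotone and bounded. Restricting the vectorial monotonicity of Lemma \ref{lemma:monotone} to types differing only in the $j$-th coordinate yields $(b-a)\bigl(\phi(b)-\phi(a)\bigr)\le 0$, so $\phi$ is non-increasing; and since $Q^j$ is the expectation of a quantity lying in $[0,\bar{q}]$, it is bounded. A bounded monotone function is Riemann integrable, and for such a function both its left-endpoint and right-endpoint Riemann sums over an interval converge to the integral as the mesh tends to zero, their difference being at most $(\text{mesh})\times|\phi(t_1)-\phi(t_2)|$.

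Finally I would assemble the identity. Assuming without loss of generality $t_2<t_1$, take any partition $t_2=s_0<s_1<\cdots<s_m=t_1$ and apply the sandwich on each subinterval $[s_{k-1},s_k]$; summing telescopes the left-hand sides to $g(t_1)-g(t_2)$ and bounds this quantity below by minus the left-endpoint sum of $\phi$ and above by minus the right-endpoint sum. Letting the mesh tend to zero, both sums converge to $\int_{t_2}^{t_1}\phi$, which forces
\begin{equation*}
g(t_1)-g(t_2)=-\int_{t_2}^{t_1}\phi(s)\,\dd s,
\end{equation*}
exactly the stated relation (the case $t_1<t_2$ follows by exchanging the roles of $t_1$ and $t_2$). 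This is Myerson's envelope computation restricted to a single coordinate; the algebra of the (IC) manipulation is routine, and the step requiring genuine care is the passage from the pointwise sandwich to the exact integral. That passage is precisely where the monotonicity of $Q^j$ provided by Lemma \ref{lemma:monotone}, ensuring Riemann integrability and the convergence of the endpoint sums, together with its boundedness, are needed; I expect this to be the main obstacle.
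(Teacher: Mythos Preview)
Your proof is correct, but it takes a different route from the paper's own proof. The paper simply observes that $t\mapsto U((c^1,\ldots,t,\ldots,c^N),c')$ is affine in $t$ with slope $-Q^j(c')$, that $Q^j$ is bounded by $\bar q$, and then invokes the envelope theorem directly to obtain the integral representation of $V$. In other words, the paper treats the result as an instance of a black-box envelope result and does not unpack the passage from the (IC) inequalities to the integral identity.

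Your argument instead proves the envelope formula from scratch: the two-sided (IC) sandwich, summed telescopically over a partition, squeezes $g(t_1)-g(t_2)$ between left- and right-endpoint Riemann sums of $-\phi$, and the monotonicity of $\phi$ (borrowed from Lemma~\ref{lemma:monotone}) ensures both sums converge to the same integral. This is the standard elementary derivation of Myerson's envelope identity and is fully self-contained; it has the advantage of making explicit exactly which regularity of $Q^j$ is needed (boundedness and monotonicity, hence Riemann integrability), whereas the paper's invocation of the envelope theorem leaves the hypotheses of that theorem implicit. The trade-off is that your route uses Lemma~\ref{lemma:monotone} as an auxiliary input, while the paper's proof does not need it at this stage.
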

\begin{proof}
The inequality $
U(c,c)\leq U(c,c')$ implies that $c' \rightarrow
U(c,c')$ is maximal at $c$ for any $c\in\C_i$.
Moreover, \begin{equation}
t \rightarrow U((c^1,..,c^{j-1},t,c^{j+1}..,c^N),\cc)=
X(\cc)  - \sum_{k\in J \backslash \{ j\}}
c^k Q^k(\cc) -
tQ^j(\cc)
\end{equation}
 is absolutely continuous, differentiable with respect to $t$ for all
$c$, and its derivative is $-Q^j(c)$. By definition of $q^j$, 
$Q^j\leq \bar{q}$.
The envelope theorem yield the result. 
\end{proof}

\subsection{Necessary conditions for Problem \ref{pb:modified}}
We  derive some necessary conditions for a solution of  Problem \ref{pb:modified}.
\begin{lemma}
\label{lemma:V}
 If $(q,x,h)$ is an optimal  solution to Problem \ref{pb:modified} then
 (omitting $i$) for all $c\in \CC_i$
\begin{equation}
V(c) = \sum_{j\in J} \int_{c^j}^{c^{j+}}
Q^j(c^1\ldots c^{j-1},t,c^{(j+1)+},\ldots, c^{N+})\dd t.
\end{equation}
\end{lemma}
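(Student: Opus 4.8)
The plan is to obtain the formula in two stages: first represent $V$ up to its value at the upper corner of the box $\CC_i$ using only constraint (H1), and then use optimality of the solution to pin that corner value to zero.

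First I would exploit the telescoping structure of (H1). Write $\hat c = (c^{1+},\ldots,c^{N+})$ for the upper corner of $\CC_i$ (which lies in $\CC_i$ by the box form of the non overlapping working zones assumption) and introduce the coordinate-descent path $P_0 = \hat c$ and $P_j = (c^1,\ldots,c^j,c^{(j+1)+},\ldots,c^{N+})$ for $j=1,\ldots,N$, so that $P_N = c$ and $P_{j-1}$, $P_j$ differ only in their $j$th coordinate ($c^{j+}$ versus $c^j$), with every intermediate coordinate staying in its range. Applying (H1) in the $j$th coordinate with $t_1 = c^j$ and $t_2 = c^{j+}$ gives
\begin{equation*}
V(P_j) - V(P_{j-1}) = \int_{c^j}^{c^{j+}} Q^j(c^1,\ldots,c^{j-1},s,c^{(j+1)+},\ldots,c^{N+})\,\dd s .
\end{equation*}
Summing over $j$ telescopes the left-hand side to $V(c) - V(\hat c)$, which is the desired formula up to the extra term $V(\hat c)$.

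It then remains to show $V(\hat c) = 0$. Since each $Q^j$ is the expectation of the nonnegative quantity $\min((q_i-(j-1)\bar{q})^+,\bar{q})$, every integral above is nonnegative, so $V(c)\ge V(\hat c)$ for all $c\in\CC_i$; hence $\hat c$ minimizes $V$ over $\CC_i$, and the participation constraint (PC) is equivalent to the single inequality $V(\hat c)\ge 0$. Suppose for contradiction that $V(\hat c) = \delta > 0$. Replacing the payment function $x_i$ by $x_i - \delta$ lowers every $V$ by exactly $\delta$: this keeps (PC) satisfied because $\hat c$ was the minimizer; it does not touch $q$ or $h$, so it preserves (SD), the flow sign constraints, and (H2); and it preserves (H1), because that constraint only involves differences of $V$, which are invariant under subtracting a constant. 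However, it decreases the objective $\sum_{i\in I}\E x_i$ by $\delta > 0$, contradicting optimality. Therefore $V(\hat c) = 0$, and the telescoping identity reduces to the claimed expression.

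The main obstacle is the feasibility check in this shift argument, and in particular noticing that (H1) constrains only differences of $V$ and is therefore invariant under subtracting a constant from the payment, while the remaining constraints are either independent of $x$ or collapse, thanks to the nonnegativity of $Q^j$, to the single inequality $V(\hat c)\ge 0$.
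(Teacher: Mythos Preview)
Your proof is correct and follows essentially the same route as the paper: telescope along the coordinate path $P_j$ using (H1) to obtain $V(c)=V(\hat c)+\sum_j\int_{c^j}^{c^{j+}}Q^j(\ldots)\,\dd t$, and then use optimality together with (PC) to force $V(\hat c)=0$. The paper simply asserts that ``to optimize the criteria, this constant should be as small as possible'' where you spell out the explicit payment shift $x_i\mapsto x_i-\delta$ and verify feasibility constraint by constraint; both arguments are the same in substance.
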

\begin{proof}
According to (H1) 
\begin{eqnarray*}
 \sum_{j\in J}
  \int_{c_j}^{c^{j+}}Q^j(c^1\ldots c^{j-1},t,c^{(j+1)+},\ldots,
  c^{N+})\dd t=\\ \sum_{j\in J}
  V(c^1,..,c^{j-1},c^j,c^{(j+1)+},\ldots, c^{N+})
  -V(c^1,..,c^{j-1},c^{(j)+},\ldots, c^{N+}) \\
=V(\cc) - V(c^{1+},\ldots, c^{N+}). 
\end{eqnarray*}
This is an expression for $V(\cc)$ as a sum of a positive function of $\cc$ and a
constant $V(c^{1+},\ldots, c^{N+})$. It is clear that to optimize the criteria, this constant
should  be  as small as possible. The participation contraint (PC)
imposes that $V(c^{1+},\ldots, c^{N+}) \geq 0$, therefore $V(c^{1+},\ldots, c^{N+}) = 0$.
\end{proof}

A consequence of this is: 
\begin{corollary}
\label{lemma:v0}
 If $(q,x,h)$ is  an optimal  solution of Problem \ref{pb:modified}
 then for all $i\in I$, 
\begin{equation}
V_i(c^{1+}_i,\ldots, c^{N+}_i) = 0.
\end{equation}
\end{corollary}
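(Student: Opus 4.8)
The plan is to read this off directly from the characterization of $V_i$ established in Lemma~\ref{lemma:V}, by evaluating that identity at the upper corner of the type set. Recall that under the non-overlapping working zones assumption the support factorises as
\[
\CC_i = [c_i^{1-},c_i^{1+}]\times\cdots\times[c_i^{N-},c_i^{N+}],
\]
so the point $(c_i^{1+},\ldots,c_i^{N+})$ — the corner at which the agent reports the largest admissible marginal cost in every working zone — is itself an element of $\CC_i$. Consequently Lemma~\ref{lemma:V}, which holds for every $c\in\CC_i$, applies at this particular point.

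Concretely, I would substitute $c_i=(c_i^{1+},\ldots,c_i^{N+})$ into the formula of Lemma~\ref{lemma:V},
\[
V_i(c_i)=\sum_{j\in J}\int_{c_i^j}^{c_i^{j+}}Q_i^j(c_i^1\ldots c_i^{j-1},t,c_i^{(j+1)+},\ldots,c_i^{N+})\,\dd t.
\]
At this corner one has $c_i^j=c_i^{j+}$ for every $j\in J$, so each of the $N$ integrals has coinciding lower and upper limits and therefore vanishes, regardless of the value of the (bounded) integrand $Q_i^j$. Summing the $N$ zero terms gives $V_i(c_i^{1+},\ldots,c_i^{N+})=0$, as claimed.

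There is essentially no obstacle here: the statement is an immediate consequence of Lemma~\ref{lemma:V}, and indeed it is precisely the fact that the additive constant $V(c^{1+},\ldots,c^{N+})$ was forced to zero inside the proof of that lemma (via the participation constraint together with cost minimisation). The only point worth verifying is that the evaluation point legitimately lies in $\CC_i$ so that the identity may be applied there, which is guaranteed by the product structure of $\CC_i$ recalled above.
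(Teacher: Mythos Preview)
Your proof is correct and is essentially the same as the paper's, which simply refers back to the proof of Lemma~\ref{lemma:V}: there the identity $V(c)=V(c^{1+},\ldots,c^{N+})+\sum_j\int_{c^j}^{c^{j+}}Q^j(\ldots)\,\dd t$ is combined with optimality and (PC) to force $V(c^{1+},\ldots,c^{N+})=0$, and you are reading this off by evaluating the resulting formula at the upper corner. Your remark that this conclusion is already baked into Lemma~\ref{lemma:V} is exactly the point.
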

\begin{proof}
See the proof of Lemma \ref{lemma:V}.
\end{proof}

Corollary \ref{lemma:v0} means that  if an agent bids a production cost
function that is the maximum of what he could bid, he should not make
any profit, which is why he should be paid exactly his production cost. We
see with this lemma that if the public information  is
inaccurate and the real cost
of an agent is higher than what could be expected, then there is a
risk that the participation constraint is not satisfied. 
On the other hand, it should not be surprising that an 
agent can have a zero profit: remember that in the extreme case in which the principal
knows everything (discussed in \S\ref{sec:market}), the agents do not make any profit.

Another consequence of lemma \ref{lemma:V} is
\begin{lemma}
\label{lemma:x}
 If $(q,x,h)$ is an optimal  solution of Problem \ref{pb:modified},
 the expected profit of agent $i$  (over his type) is
\begin{equation}
\E V_i(\cc) = \sum_{j\in J} \int_{(\cc_1..\cc_n) \in \CC_i } Q^j_i
(c^1,\ldots,c^j,c^{(j+1)+},\ldots c^{N+}) K^j_i(c)f_i(c)\dd c.
\end{equation}
\end{lemma}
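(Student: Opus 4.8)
The plan is to integrate the pointwise formula for $V_i$ obtained in Lemma \ref{lemma:V} against the density $f_i$ over the type set $\CC_i$, and then to rearrange the resulting iterated integral by Fubini's theorem until the cumulative-distribution factor $K_i^j$ emerges. First I would write $\E V_i(\cc) = \int_{\CC_i} V_i(c) f_i(c)\,\dd c$ and substitute the expression of Lemma \ref{lemma:V}, so that $\E V_i$ splits into a sum over $j\in J$ of terms
\begin{equation*}
T_j = \int_{\CC_i} \left( \int_{c^j}^{c^{j+}} Q_i^j(c^1,\ldots,c^{j-1},t,c^{(j+1)+},\ldots,c^{N+})\,\dd t \right) f_i(c)\,\dd c .
\end{equation*}
The key structural observation is that the integrand $Q_i^j(c^1,\ldots,c^{j-1},t,c^{(j+1)+},\ldots,c^{N+})$ depends only on the coordinates $c^1,\ldots,c^{j-1}$ and on $t$, because the coordinates $j+1,\ldots,N$ have been frozen at their maximal values $c^{(j+1)+},\ldots,c^{N+}$.

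Next I would invoke the independence of the marginal laws, $f_i(c)=\prod_{k} f_i^k(c^k)$, to factorize the reference measure. Since the integrand of $T_j$ does not depend on $c^{j+1},\ldots,c^N$, integrating those coordinates out produces the factor $\prod_{k>j}\int f_i^k(c^k)\,\dd c^k = 1$ and leaves an integral over $c^1,\ldots,c^j$ and $t$ only. Applying Fubini to exchange the order of integration of $c^j$ and $t$ over the triangle $\{\,c^{j-}\le c^j\le t\le c^{j+}\,\}$ turns the inner $c^j$-integral into $\int_{c^{j-}}^{t} f_i^j(s)\,\dd s$, which by the very definition of $K_i^j$ equals $K_i^j(t)f_i^j(t)$.

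Renaming $t$ back to $c^j$ and reinserting the trivial factor $\prod_{k>j}\int f_i^k(c^k)\,\dd c^k=1$ (legitimate since the integrand is independent of these coordinates), each term becomes $T_j = \int_{\CC_i} Q_i^j(c^1,\ldots,c^j,c^{(j+1)+},\ldots,c^{N+})\, K_i^j(c^j)\, f_i(c)\,\dd c$; using $K_i^j(c^j)=K_i^j(c)$ and summing over $j\in J$ then yields the claimed identity. The main obstacle is bookkeeping rather than analysis: one must track precisely which coordinates each $Q_i^j$ actually depends on so that the Fubini exchange is applied to the correct pair of variables and the spurious coordinates integrate away cleanly. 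The absolute continuity and the boundedness $Q_i^j\le\bar{q}$ recorded in Lemma \ref{lemma:int} guarantee that every integral is finite and that Fubini's theorem indeed applies.
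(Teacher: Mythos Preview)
Your proposal is correct and follows essentially the same route as the paper: start from the expression of $V_i$ given by Lemma~\ref{lemma:V}, integrate against $f_i$, and apply Fubini on the triangle $\{c^{j-}\le c^j\le t\le c^{j+}\}$ so that the $c^j$-integral produces the cumulative factor $K_i^j(t)f_i^j(t)$ before renaming $t\mapsto c^j$. The only cosmetic difference is that you exploit independence upfront to integrate out the irrelevant coordinates $c^{j+1},\ldots,c^N$ and reinsert them at the end, whereas the paper carries $c^{-j}$ throughout and uses the general definition $K_i^j(c_i)=\int_{c_i^{j-}}^{c_i^j} f_i(c_i^{-j},s)\,\dd s/f_i(c_i)$, invoking independence only to identify $K_i^j(c_i)=K_i^j(c_i^j)$.
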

\begin{proof}

By Lemma \ref{lemma:V} and Fubini's lemma, $\E V_i(c)$ is equal to 
\begin{eqnarray*}
\E \sum_{j\in J } \int_{c^j}^{c^{j+}}
Q_i^j(c^1,\ldots,c^{j-1},t,c^{(j+1)+},\ldots c^{N+})\dd t\\
=
\sum_{j\in J }  \int_{c^{-j}\in\CC^{-j}} \int_{c^j = c^{j-}}^{c^{j+}} \int_{t=c^j}^{c^{j+}}
Q_i^j(c^1,\ldots,c^{j-1},t,c^{(j+1)+}_i,\ldots c^{N+}_i)f_i(c)    \dd t
  \dd c^j  \dd c^{-j}.
\end{eqnarray*}
Our task is now to compute the inner term.  
Applying again Fubini's lemma, this term is equal to 
\begin{eqnarray*}
\int_{c^j = c^{j-}}^{c^{j+}} \int_{t=c^j}^{c^{j+}}
Q_i^j(c^1,\ldots,c^{j-1},t,c^{(j+1)+},\ldots c^{N+})f_i(c) \dd t \dd
  c^j=\\
\int_{t = c^{j-}}^{c^{j+} } \int_{c^j=c^{j-}}^{t}
Q_i^j(c^1,\ldots,c^{j-1},t,c^{(j+1)+},\ldots c^{N+})f_i(c)   \dd c^j  \dd t=\\
\int_{t = c^{j-}}^{c^{j+}} 
Q_i^j(c^1,\ldots,c^{j-1},t,c^{(j+1)+},\ldots c^{N+}) (\int_{c^j=c^{j-}}^{t}
  f_i(c)   \dd c^j)  \dd t=\\
\int_{t = c^{j-}}^{c^{j+}}  Q_i^j(c^1,\ldots,c^{j-1},t,c^{(j+1)+},\ldots c^{N+})
  (\int_{c^j=c^{j-}}^{ t}
 \frac{f_i(c)} {f_i(c^{-j}, t)} \dd c^j)
  f_i(c^{-j}, t)   \dd t =\\
\int_{t = c^{j-} }^{c^{j+} }  Q_i^j(c^1,\ldots ,c^{j-1},t,c^{(j+1)+},\ldots c^{N+})K_i^j(t) f_i(c^{-j}, t)   \dd t =\\
\int_{c^j= c^{j-}}^{c^{j+}}  Q_i^j(c^1,\ldots,c^{j-1},c^j,c^{(j+1)+},\ldots c^{N+})
 K_i^j(c^j)
  f_i(c_i)   \dd c^j 
\end{eqnarray*}
We get the lemma  by summing all the inner terms.
\end{proof}

\begin{lemma}
\label{lemma:H1}
If  (H1) is satisfied, then for any  $(a,b)\in \CC_i^2$ (omitting $i$) 
\begin{equation}
\label{eq:xaxb}
 X(a) - X(b) = \sum_{j\in J} [ a^j Q^j(a) - b^j
 Q^j (b) + \int_{a^j}^{b^j}Q^j
 (b^1\ldots b^{j-1},t,a^{j+1} \ldots a^N)\dd t ]
\end{equation}
\end{lemma}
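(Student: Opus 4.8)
The plan is to reduce the whole statement to a single one-dimensional increment formula for $V$. From the definitions \eqref{eq:4} and \eqref{eq:5}, setting the bid equal to the type gives $V(c) = U(c,c) = X(c) - \sum_{j\in J} c^j Q^j(c)$, so that $X(c) = V(c) + \sum_{j\in J} c^j Q^j(c)$ (I drop $i$, as the statement does). Substituting this at $a$ and at $b$ yields
\begin{equation*}
X(a) - X(b) = \big(V(a) - V(b)\big) + \sum_{j\in J}\big(a^j Q^j(a) - b^j Q^j(b)\big),
\end{equation*}
so the lemma follows once I establish the identity
\begin{equation*}
V(a) - V(b) = \sum_{j\in J}\int_{a^j}^{b^j} Q^j(b^1,\ldots,b^{j-1},t,a^{j+1},\ldots,a^N)\,\dd t .
\end{equation*}

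To obtain this last identity I would pass from $a$ to $b$ one coordinate at a time along the path $p_0,\ldots,p_N$ defined by $p_k = (b^1,\ldots,b^k,a^{k+1},\ldots,a^N)$, so that $p_0 = a$ and $p_N = b$. Because $\CC_i$ is a product of intervals (the non-overlapping working zones assumption) and $a,b\in\CC_i$, each coordinate of every intermediate point $p_k$ lies in the corresponding interval; hence all the $p_k$ belong to $\CC_i$, and for each $j$ the segment over which $t$ runs stays inside the $j$-th interval, so every term written below is legitimate. Telescoping then gives
\begin{equation*}
V(b) - V(a) = \sum_{j\in J}\big(V(p_j) - V(p_{j-1})\big).
\end{equation*}

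The consecutive points $p_{j-1}$ and $p_j$ differ only in their $j$-th coordinate, which is $a^j$ in $p_{j-1}$ and $b^j$ in $p_j$, the remaining coordinates being exactly $b^1,\ldots,b^{j-1},a^{j+1},\ldots,a^N$. Applying (H1) with this fixed $c^{-j}$, $t_1 = b^j$ and $t_2 = a^j$ gives $V(p_j) - V(p_{j-1}) = \int_{b^j}^{a^j} Q^j(b^1,\ldots,b^{j-1},s,a^{j+1},\ldots,a^N)\,\dd s$. Summing over $j$ and then flipping each orientation (which turns $V(b)-V(a)$ into $V(a)-V(b)$ and $\int_{b^j}^{a^j}$ into $\int_{a^j}^{b^j}$) produces precisely the required expression for $V(a)-V(b)$, and combining it with the displayed decomposition of $X(a)-X(b)$ finishes the proof.

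I expect the only point that needs care to be the bookkeeping of the path and the orientation of the integration limits: the integrand in the statement carries the first $j-1$ coordinates at their $b$-values and the last $N-j$ at their $a$-values, and this is exactly what forces the order in which coordinates are switched along $p_0,\ldots,p_N$; any other ordering would evaluate $Q^j$ at the wrong base point. There is no analytic difficulty beyond this, since the product structure of $\CC_i$ guarantees that all intermediate arguments are admissible and (H1) supplies each one-dimensional increment directly.
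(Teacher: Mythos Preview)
Your proof is correct and uses the same telescoping path $p_k=(b^1,\ldots,b^k,a^{k+1},\ldots,a^N)$ as the paper, but the organization is different and noticeably cleaner. The paper telescopes $X$ itself: at each step it expands $X(p_{k-1})-X(p_k)$ into a $V$-difference, an integral from (H1), and a batch of cross terms $a^jQ^j(\cdot)$ and $b^jQ^j(\cdot)$ for all $j$; the bulk of the work in Appendix~\ref{appendix:lemmaH1} is then the double-sum reordering that collapses these cross terms into $\sum_j a^jQ^j(a)-b^jQ^j(b)$. You instead split off $X=V+\sum_j c^jQ^j$ at the outset, so the $\sum_j(a^jQ^j(a)-b^jQ^j(b))$ contribution appears immediately and only $V(a)-V(b)$ remains; telescoping $V$ along the same path and applying (H1) once per step gives the integrals with no cross terms to track. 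Both arguments rely on nothing beyond the definitions and (H1), and the product structure of $\CC_i$ that you invoke to keep the intermediate points admissible is exactly what the paper uses implicitly; your route simply avoids the bookkeeping.
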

\begin{proof}
Because of its length the proof is detailed in Appendix \ref{appendix:lemmaH1}
\end{proof}
\begin{lemma}
\label{lemma:IC}
If $(q,x,h)$ verifies  (H1) and (H2) and $Q_i^j$ is independent of
$c_i^{j'}$ for $j'> j$, then for all $(\cc, \tilde{\cc}) \in \CC^2$
\begin{equation}
U(\cc,\cc) \geq U(\cc,\tilde{\cc}).
\end{equation}
\end{lemma}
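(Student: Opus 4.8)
The plan is to reduce the vectorial inequality $U(\cc,\cc)\geq U(\cc,\tilde{\cc})$ to a sum of $N$ independent one-dimensional comparisons, one per working zone, each of which is rendered nonnegative by the scalar monotonicity hidden inside (H2). First I would expand the difference using the definition \eqref{eq:4}, writing
\begin{equation*}
U(\cc,\cc)-U(\cc,\tilde{\cc}) = \big[X(\cc)-X(\tilde{\cc})\big] - \sum_{j\in J} c^j\big(Q^j(\cc)-Q^j(\tilde{\cc})\big).
\end{equation*}
The purpose of (H1) is exactly to provide a tractable expression for the payment gap $X(\cc)-X(\tilde{\cc})$: substituting Lemma \ref{lemma:H1} with $a=\cc$ and $b=\tilde{\cc}$ and cancelling the $c^jQ^j(\cc)$ terms, I obtain
\begin{equation*}
U(\cc,\cc)-U(\cc,\tilde{\cc}) = \sum_{j\in J}\Big[(c^j-\tilde{c}^j)\,Q^j(\tilde{\cc}) + \int_{c^j}^{\tilde{c}^j} Q^j(\tilde{c}^1,\ldots,\tilde{c}^{j-1},t,c^{j+1},\ldots,c^N)\,\dd t\Big].
\end{equation*}
It then suffices to show that each summand is nonnegative.

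The independence hypothesis enters at this point. Since $Q^j$ does not depend on $c^{j'}$ for $j'>j$, both the integrand and the term $Q^j(\tilde{\cc})$ depend only on the first $j$ coordinates; I would therefore introduce the single-variable function $g_j(t)=Q^j(\tilde{c}^1,\ldots,\tilde{c}^{j-1},t,c^{j+1},\ldots,c^N)$ and observe that, by independence, $Q^j(\tilde{\cc})=g_j(\tilde{c}^j)$ (the trailing coordinates being irrelevant). The $j$th summand thus becomes $(c^j-\tilde{c}^j)\,g_j(\tilde{c}^j)+\int_{c^j}^{\tilde{c}^j}g_j(t)\,\dd t$.

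The decisive step is to read off from (H2) that $g_j$ is non-increasing. Applying the monotonicity inequality to two bids differing only in coordinate $j$, say with $j$th entries $t_1$ and $t_2$, the scalar product $(\cc-\cc').(Q(\cc)-Q(\cc'))$ collapses to $(t_1-t_2)\big(Q^j(\cc)-Q^j(\cc')\big)\leq 0$, which is precisely the statement that $Q^j$, and hence $g_j$, decreases in $t$. With $g_j$ non-increasing, comparing the integral of $g_j$ over $[\tilde{c}^j,c^j]$ (or $[c^j,\tilde{c}^j]$) against the rectangle of height $g_j(\tilde{c}^j)$ gives $(c^j-\tilde{c}^j)\,g_j(\tilde{c}^j)+\int_{c^j}^{\tilde{c}^j}g_j(t)\,\dd t\geq 0$ in both cases $c^j\geq\tilde{c}^j$ and $c^j<\tilde{c}^j$; summing over $j\in J$ concludes.

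The main obstacle is not any individual estimate but the bookkeeping that lets them combine. One must feed Lemma \ref{lemma:H1} its arguments in the right order, so that the integration variable occupies slot $j$ while the later slots still carry the true type $c$, and then invoke the independence property to discard those trailing coordinates. Only after this alignment can the vector monotonicity of (H2) be legitimately reinterpreted as the scalar monotonicity of $g_j$ that makes each per-zone term positive; the remaining one-dimensional inequality is then elementary.
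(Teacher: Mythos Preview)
Your proof is correct and follows essentially the same route as the paper: expand $U(\cc,\cc)-U(\cc,\tilde{\cc})$ via Lemma \ref{lemma:H1}, use the independence hypothesis to collapse each summand to a single-variable expression, then invoke (H2) on vectors differing only in coordinate $j$ to obtain the scalar monotonicity that makes each term nonnegative. The only cosmetic difference is your introduction of the helper function $g_j$, which the paper leaves implicit.
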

\begin{proof}
Since (H1) is satisfied, equation \eqref{eq:xaxb} of Lemma
\ref{lemma:H1} applies.
We combine this relation with the definition of the expected profit $U$ from
\eqref{eq:4}. We obtain:
\begin{multline*}
U(c,c)- U(c,\tilde{c})  
 =\sum_{j\in J} c^j Q^j(c) -  \tilde{c}^j Q^j(\tilde{c}) + \\
 \int_{c^j}^{\tilde{c}^j}
 Q^j(\tilde{c}^1,...,\tilde{c}^{j-1},t,c^{j+1},...c^N) \dd t
+  c^j Q^{j}(\tilde{\cc}) - c^j Q^{j}(\cc )\\
 =\sum_{j\in J}    (c^j -  \tilde{c}^j )Q^j(\tilde{c}^1,...,\tilde{c}^{j-1},\tilde{c}^j)) +
 \int_{c^j}^{\tilde{c}^j}Q^j(\tilde{c}^1,...,\tilde{c}^{j-1},t)\dd t\\
 =\sum_{j\in J}   \int_{c^j}^{\tilde{c}^j}
 Q^j(\tilde{c}^1,...,\tilde{c}^{j-1},t)-Q^j(\tilde{c}^1,...,\tilde{c}^{j-1},\tilde{c}^j)\dd t,
\end{multline*}
where we used the independence hypothesis for the second equality.
By
(H2), which implies the decreasingness of $Q^j$ with respect to
$c_i^j$ when all other quantities are fixed, if $c^j <\tilde{c}^j$ then for  any $t\in [c^j,\tilde{c}^j]$,  $Q^j(t)-Q^j(\tilde{c}^j)\geq 0 $. Otherwise,
we use  the formula $\int_a^b = -\int_b^a$ and the fact that
 any $t\in [\tilde{c}^j,c^j]$ verifies $Q^j(t)-Q^j(\tilde{c}^j)\leq 0
 $.
Therefore $U(c,c)- U(c,\tilde{c})$ is non negative. 
\end{proof}
\subsection{Necessary conditions for Problem \ref{pb:reformulation}}
We derive some properties for Problem
\ref{pb:reformulation}.
\begin{lemma}
\label{lemma:indep}
There is an optimal solution  $(q,x,h)$ for Problem
\ref{pb:reformulation} such that   $q_i^j$
(and $Q_i^j$) is
independent of $c_i^k$ for $k\neq j$.
\end{lemma}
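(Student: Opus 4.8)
The plan is to exploit the observation, recorded just after the statement of Problem \ref{pb:reformulation}, that its optimization decouples across realizations. Since the objective $\E\sum_{i\in I}\sum_{j\in J}q_i^j(\cc)\,(c_i^j+K_i^j(c_i^j))$ is the expectation of a sum evaluated pointwise in $\cc$, and since $(SD)$ and the flow nonnegativity constraint are both imposed pointwise in $\cc$ (with $x$ then deduced from $q$), an optimal $(q,h)$ can be built by solving, separately for each $\cc\in\CC^n$, the finite-dimensional dispatch of minimizing $\sum_{i,j}q_i^j\,g_i^j(c_i^j)$ subject to $(SD)$ and $h\ge 0$, where I write $g_i^j(c_i^j):=c_i^j+K_i^j(c_i^j)$ for the virtual marginal cost. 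First I would record that, by the discernability assumption together with the non-overlapping working zones assumption, $g_i^1(c_i^1)<\dots<g_i^N(c_i^N)$ for every $\cc_i\in\CC_i$, so that each agent's virtual cost is convex and its segments are merit-ordered.

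Next I would characterize the pointwise optimum through nodal prices. The dispatch is convex and, thanks to the quadratic loss term, strictly convex in the flows; introducing the multiplier $p_i(\cc)\ge 0$ of $(SD)$ at node $i$ (the nodal price), the stationarity condition in $q_i^j$ is the merit-order rule
\begin{equation*}
q_i^j=\bar q\ \text{ if } g_i^j(c_i^j)<p_i,\qquad q_i^j=0\ \text{ if } g_i^j(c_i^j)>p_i,\qquad q_i^j\in[0,\bar q]\ \text{ if } g_i^j(c_i^j)=p_i .
\end{equation*}
The crucial object is the residual demand $R_i(p;\cc_{-i})$, defined as the net injection agent $i$ must supply at node $i$ when its nodal price is fixed at $p$; it is obtained by clearing the remaining nodes and the flows given $p_i=p$, and therefore depends only on $\cc_{-i}$, the network and the demands, never on $\cc_i$. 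Using $g_i^1<\dots<g_i^N$, the cumulative supply of agent $i$ is a staircase equal to $(j-1)\bar q$ just below $g_i^j(c_i^j)$ and to $j\bar q$ just above it, and matching it against the monotone residual demand at the unique clearing price gives the closed form
\begin{equation*}
q_i^j(\cc)=\min\Big\{\max\big[\,R_i\big(g_i^j(c_i^j);\cc_{-i}\big)-(j-1)\bar q,\ 0\,\big],\ \bar q\Big\}.
\end{equation*}

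The point of this form is that its right-hand side involves $\cc_i$ only through $c_i^j$: whether segment $j$ is empty, partially filled, or saturated is decided by comparing the residual demand evaluated at segment $j$'s own threshold $g_i^j(c_i^j)$ with the cumulative capacities $(j-1)\bar q$ and $j\bar q$, and the non-overlapping zones guarantee that the infra-marginal segments $j'<j$ stay saturated and the supra-marginal segments $j'>j$ stay empty regardless of the corresponding $c_i^{j'}$. Hence $q_i^j$ is independent of $c_i^k$ for $k\ne j$, and applying $\E_{\cc_{-i}}$ transfers this independence to $Q_i^j$; the resulting $(q,x,h)$ is optimal for Problem \ref{pb:reformulation} by construction. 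The main obstacle is the second step: justifying rigorously that, despite the quadratic-loss coupling across the network, the ``rest of the system'' can be cleared given agent $i$'s nodal price so as to define a single-valued residual demand $R_i$ that is genuinely independent of $\cc_i$, and that the optimum is described by a single crossing of $S_i$ and $R_i$. This is exactly where the strict convexity furnished by the losses (uniqueness of the optimal flows and prices) and the merit-ordering from the discernability assumption must be invoked; the ties $g_i^j(c_i^j)=p_i$ occur only on a negligible set and are already accommodated by the clipping in the displayed formula.
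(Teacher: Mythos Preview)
Your reduction to a pointwise dispatch and your use of the discernability assumption to obtain strictly ordered virtual marginal costs are exactly what the paper does. Where you diverge is in the step that actually delivers the independence of $q_i^j$ from $c_i^k$ for $k\neq j$. You package this into a residual-demand function $R_i(p;\cc_{-i})$ and a clipping formula for $q_i^j$; the paper instead observes that the pointwise dispatch is an instance of Problem~\ref{pb:alloc} (with virtual costs $c_i^j+K_i^j(c_i^j)$ in place of $c_i^j$) and invokes Lemma~\ref{lemma:indep1} from \S\ref{sec:nagents}. That lemma is proved by a short case split on the nodal multiplier: if $q_i^j\in(0,\bar q)$ then $\lambda_i$ is pinned to $c_i^j$, which fixes the whole dual vector independently of the other $c_i^k$; otherwise one supposes $q_i^j$ jumps from $\bar q$ to $0$ under a change of some $c_i^k$ and obtains a contradiction via the monotonicity of the fixed-point map $\Lambda$ of \S\ref{sub:fixedpoint}. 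Your residual-demand route is a legitimate alternative and, if carried through, yields the explicit formula you display --- something the paper never writes down --- but the obstacle you correctly flag (that $R_i$ is single-valued, depends only on $\cc_{-i}$, and is monotone in $p$ despite the quadratic-loss coupling) is precisely what the paper's argument circumvents. To close your version without citing Lemma~\ref{lemma:indep1}, note that fixing $\lambda_i=p$ leaves a monotone fixed-point problem on $[c_*,c^*]^{n-1}$ with a unique solution $\lambda_{-i}^*(p)$, and establish that $p\mapsto F_i(p,\lambda_{-i}^*(p))$ is nonincreasing either by a primal comparative-statics argument (replace agent $i$'s cost by the linear function $q\mapsto pq$) or by the same contradiction device used in the proof of Lemma~\ref{lemma:indep1}.
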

\begin{proof}
First note that $x$ is not taking any role in the optimization
problem: it is defined afterward. The only real optimization variables
are then $q$ and $h$. Remember that $q_i^j$ is defined as a function
of $q$ by  $q_i^j=\min ( (q_i-(j-1) \bar{q})^+, \bar{q})$.
The constraints are defined for each $c\in \CC^n$ and the integral criterion is in fact a
sum of independent criteria depending on $q(c)$ for
$c\in\CC^n$. Therefore we can solve  Problem 3 with a pointwise optimization.
By the \emph{discernability assumption}, for any $c\in \CC^n$ and
$i\in I$,
$c_i^j+K_i^j(c_i^j)$ is increasing in $j$. Therefore for all $c \in \CC^n$,
$i\in I$, $
\sum_{j\in J}  q_i^j (c)(c_i^j+ K^j_i(c_i^j))$ is a convex criteria in
$q_i$ and therefore the pointwise problem corresponds to Problem
\ref{pb:alloc} of \S\ref{sec:nagents}. In particular, we can apply Lemma~\ref{lemma:indep1} from the
next section. Thus $q_i^j$ only depends on $c_i^j$ and $c_{-i}$. This
property is preserved by integration over the $c_{-i}$: $Q_i^j$ only
depends on $c_i^j$.  
\end{proof}

We point out that,  since the pointwise problem has a unique solution,
the pointwise optimal solution introduced in the proof is uniquely defined.
\begin{theorem}
\label{th:Qc1}
If $(q,x,h)$ is the pointwise optimal solution of Problem
\ref{pb:reformulation} and $K_i^j$ is smooth in $c_i^j$ for
$(i,j)\in I \times J$ and $c\in \CC_i$, then for all $i\in I$,  $Q_i$
is $C^\infty$ over $\CC_i$.
\end{theorem}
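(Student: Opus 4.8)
The plan is to reduce the statement to the smoothing effect of averaging over $c_{-i}$ a function that is only piecewise smooth in $c_i$. By Lemma~\ref{lemma:indep}, for the pointwise optimal solution $q_i^j$ depends on $c$ only through $(c_i^j,c_{-i})$, so that
\begin{equation}
Q_i^j(c_i^j)=\int_{\CC^{-i}} q_i^j(c_i^j,c_{-i})\,f_{-i}(c_{-i})\,\dd c_{-i}.
\end{equation}
Since each component $Q_i^j$ of $Q_i$ depends only on the single variable $c_i^j$, it suffices to show that each of these scalar integrals is $C^\infty$. The parameter $c_i^j$ enters only through the integrand, and by the \emph{discernability assumption} it enters the pointwise problem only through the virtual marginal cost $a_i^j:=c_i^j+K_i^j(c_i^j)$, which is $C^\infty$ and strictly increasing in $c_i^j$ once $K_i^j$ is smooth; I may therefore study the dependence on $a_i^j$ and transport it back.

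Next I would describe, for a fixed $c_{-i}$, the structure of $c_i^j\mapsto q_i^j(c_i^j,c_{-i})$. The pointwise problem minimizes $\sum_i\phi_i(q_i)$, where $\phi_i$ is the piecewise linear convex virtual-cost function of successive slopes $a_i^1,\dots,a_i^N$, over the convex set cut out by the $(SD)$ constraints; the quadratic loss terms make the program strictly convex, so its solution is unique (as noted before the theorem) and is governed by the KKT system. For segment $j$ three regimes occur: $q_i^j=0$, $q_i^j=\bar q$, and the interior regime $0<q_i^j<\bar q$. On the interior regime the stationarity equations form a smooth square system whose Jacobian in the active variables is nonsingular \emph{precisely because} of the strictly convex quadratic losses; the implicit function theorem then yields $q_i^j=g(a_i^j,c_{-i})$ with $g$ of class $C^\infty$, and it also makes the values of $c_i^j$ at which the active set of the network changes into $C^\infty$ graphs over $c_{-i}$. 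By the monotonicity of Lemma~\ref{lemma:monotone}, $q_i^j$ is nonincreasing in $c_i^j$, so there are thresholds $\tau_1(c_{-i})\le\tau_2(c_{-i})$ with $q_i^j=\bar q$ for $c_i^j\le\tau_1$, the smooth interior branch on $(\tau_1,\tau_2)$, and $q_i^j=0$ for $c_i^j\ge\tau_2$. The structural conclusion I need is that the non-smooth set of $q_i^j$ is contained in finitely many $C^\infty$ graphs $\{c_i^j=\tau_\ell(c_{-i})\}$, each transversal to the $c_i^j$-direction.

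Finally I would invoke a \emph{transversal smoothing principle}: if an integrand is $C^\infty$ on each piece of a partition of $(c_i^j,c_{-i})$-space whose interfaces are finitely many $C^\infty$ graphs over $c_{-i}$, and $f_{-i}$ is $C^\infty$ (which again follows from smoothness of the $K$'s, equivalently of the densities), then its integral over $c_{-i}$ is $C^\infty$ in $c_i^j$. To see the mechanism, split $\CC^{-i}$ by regime,
\begin{equation}
Q_i^j(c_i^j)=\bar q\int_{\{\tau_1\ge c_i^j\}} f_{-i}\,\dd c_{-i}+\int_{\{\tau_1<c_i^j<\tau_2\}} g(a_i^j,c_{-i})\,f_{-i}\,\dd c_{-i},
\end{equation}
and differentiate by the Leibniz/coarea formula: the surface term from the moving boundary $\{\tau_1=c_i^j\}$ carries the value $\bar q=g|_{\tau_1}$ and cancels the derivative of the first term, while the surface term from $\{\tau_2=c_i^j\}$ carries the value $0=g|_{\tau_2}$ and vanishes, leaving
\begin{equation}
\frac{\dd}{\dd c_i^j}Q_i^j(c_i^j)=\int_{\{\tau_1<c_i^j<\tau_2\}}\partial_{c_i^j}g(a_i^j,c_{-i})\,f_{-i}\,\dd c_{-i},
\end{equation}
an integral of exactly the same form, to which the principle applies again; iterating produces derivatives of every order, each a sum of a volume integral of smooth data and finitely many surface integrals of smooth data over the smooth graphs $\tau_\ell$. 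Hence $Q_i^j\in C^\infty$, and since $Q_i=(Q_i^1,\dots,Q_i^N)$ with each component smooth in its own argument, $Q_i\in C^\infty$ over $\CC_i$. I expect the main obstacle to lie in the second paragraph: establishing the nondegeneracy of the KKT Jacobian so that the interior branch $g$ and the interface graphs $\tau_\ell$ are genuinely $C^\infty$ with $\tau_\ell$ transversal to the $c_i^j$-axis. This is exactly where the strict convexity coming from the quadratic losses is indispensable, and where the smoothness of $K_i^j$ and of the densities is needed to legitimate both the reparametrization $c_i^j\mapsto a_i^j$ and the surface integrals used in the last step.
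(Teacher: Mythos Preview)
Your approach is broadly correct and shares the paper's structural ingredients---uniqueness of the pointwise solution, the reduction via the smooth bijection $c_i^j\mapsto c_i^j+K_i^j(c_i^j)$, and smoothness of the KKT branches away from a thin singular set obtained from an invertible Jacobian (the paper's Lemma~\ref{lemma:M})---but the way you pass from pointwise to averaged regularity is genuinely different.

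The paper does \emph{not} use a Leibniz/coarea argument. Instead it proves (Lemma~\ref{lemma:nullMeasure2}) that for each fixed $c_i$ the slice $\Set_i(c_i)=\{c_{-i}:(c_i,c_{-i})\in\Set\}$ has Lebesgue measure zero, removes a countable union of such slices, and applies dominated convergence to the difference quotients $(Q_i^{(l)}(c_i^0)-Q_i^{(l)}(c_i^k))/(c_i^0-c_i^k)$, inducting on $l$. No transversality, no boundary integrals, no graph structure of the interfaces is needed---only that the bad set is null slice by slice and that $q_i$ and its derivatives are bounded on the complement.

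Your route is more explicit (it yields formulas for the derivatives) but demands more. Two points to tighten: first, the picture with exactly two thresholds $\tau_1,\tau_2$ is too coarse. The smooth strata of $q_i^j$ are delineated by the active set of the \emph{entire} network---the paper indexes them by $S(I_A,I_B,I_C,j,j')$---so within $(\tau_1,\tau_2)$ there are further interfaces where some other node changes regime and the branch formula for $g$ changes (while $q_i^j$ stays continuous). Your first-derivative cancellation survives this, since continuity across every interface kills the surface terms. Second, from the second derivative on the surface terms no longer cancel (the one-sided derivatives $\partial_{c_i^j}g$ need not match across interfaces), so you accumulate surface integrals over the level sets $\{\tau_\ell(c_{-i})=c_i^j\}$, and to conclude smoothness you must argue these vary smoothly in $c_i^j$---essentially a coarea/Sard argument on top of the transversality you already flagged. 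The paper's measure-zero-plus-DCT route sidesteps all of this, at the price of being less constructive.
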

\begin{proof}
Remember that $c_i^j \rightarrow c_i^j + K_i^j(c_i^j)$ is increasing,
so by composition with smooth bijection,  we
can do the proof as if the costs involved were $c_i^j$ instead of $c_i^j + K_i^j(c_i^j)$.
According to Lemma \ref{lemma:qc1}, $q_i$ is continuous. 
Since $q_i$ is bounded, we can apply the dominated convergence theorem
to show that $Q_i$ is continuous.
We can then  we proceed by mathematical induction.
Assume that $Q_i$ is $C^l$, then take $c_i^0\in \CC_i$ and
$c_i^k $ a sequence in $\CC_i$ that converges to $c_i^0$. 
Since $\hat{\Set} = \cup_{k\in\N} \Set(c_i^k)$ is  a countable union
of null measured set (by Lemma \ref{lemma:nullMeasure2}), its measure
is zero. Without changing the results, we can compute the integrals on
$\CC^{-i} \backslash \hat{\Set}$ instead of $\CC^{-i}$. Since $q_i$
and its derivatives are bounded, we can apply the dominated
convergence theorem to compute the limit of $\frac{Q_i^{(l)}(c_i^0)-
  Q_i^{(l)}(c_i^k)}{c_i^0-c_i^k}$ as $k$ goes to $+\infty$ as the
integral of a limit.
Since we removed the point over which this limit was not defined, we
get that  $\frac{Q_i^{(l)}(c_i^0)-
  Q_i^{(l)}(c_i^k)}{c_i^0-c_i^k}$ has a limit, and this limit does not
depend on the sequence $c_i^k$. So $Q_i$ is $l+1$ times derivable at $c_i$, for
all $c_i$. 
We conclude by induction.
\end{proof}

\subsection{Resolution of the mechanism design problem}
Last but not least, we state the main result of the Section.
\begin{theorem}
\label{th:mainResult}
Let $(q_i^j,h)$ be defined such that for any $c\in\CC^n$,
$(q_i^j(c_i^j,c_{-i}),h(c))$ solves
\begin{equation*} 
\begin{aligned}
&\underset{\q_i^j,h}{\text{minimize}}
\sum_{i\in I}  \sum_{j\in J}  q_i^j (c_i^j,c_{-i})(c_i^j+ K^j_i(c_i^j))\\
& \text{subject to}\\ 
&0\le q_i^j\le \bar{q}\\
&\sum_{j\in J}  q_i^i(\cc_i^j,c_{-i})  + \sum_{i'\in V(i)}
h_{i',i}(\cc) - h_{i,i'}(\cc) - \frac{h^2_{i,i'}(\cc)+h^2_{i',i}(\cc)}{2}r_{i,i'} \ge d_i   \\
& h_{i,i'}(\cc) \ge
0,
\end{aligned}
\end{equation*}
and set 
\begin{equation}
q_i (c)= \sum_{j\in J}q_i^j(\cc_i^j,c_{-i}) \mbox{ and } 
x_i(\cc)= \sum_{j\in
  J}q_i^j(\cc_i^j,c_{-i})c_i^j+
\int_{c_i^j}^{c_i^{j+}}q_i^j(t,c_{-i})\dd
t,
\end{equation}
then $(q,h,x)$ solves the optimal mechanism design problem (Problem 1). 
\end{theorem}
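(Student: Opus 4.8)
The plan is to prove that Problems~\ref{pb:mechanism}, \ref{pb:modified} and \ref{pb:reformulation} share a common optimal value, attained by the explicitly constructed triple, via a sandwich that uses Problem~\ref{pb:modified} as a bridge. The guiding observation is that it suffices to show two things: that every triple admissible for Problem~\ref{pb:mechanism} is admissible for Problem~\ref{pb:modified}, and that the constructed triple is \emph{both} a minimizer of Problem~\ref{pb:modified} \emph{and} admissible for Problem~\ref{pb:mechanism}. Indeed, once $\mathrm{feasible}(\text{P}1)\subseteq\mathrm{feasible}(\text{P}2)$ is established, a minimizer of Problem~\ref{pb:modified} that happens to lie in $\mathrm{feasible}(\text{P}1)$ is automatically a minimizer of Problem~\ref{pb:mechanism}. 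The relaxation inclusion is the easy half: the constraints (SD), the flow positivity, and the participation constraint (written through $V_i$) are literally identical in both problems, so only (IC) must be traded away, and Lemma~\ref{lemma:monotone} shows that (IC) forces the vectorial monotonicity (H2) while Lemma~\ref{lemma:int} shows that (IC) forces the integral identity (H1).

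Next I would eliminate the payment from Problem~\ref{pb:modified} and recognize Problem~\ref{pb:reformulation}. For any admissible triple of Problem~\ref{pb:modified} the identity (H1) yields the integral representation of $V_i$ of Lemma~\ref{lemma:V}, and since the boundary profit $V_i(c_i^{1+},\ldots,c_i^{N+})$ enters the cost only through an additive, sign-definite term, minimizing payments drives it to zero, as in Corollary~\ref{lemma:v0}. Writing $\sum_i\E x_i=\sum_i\E V_i+\E\sum_{i,j}c_i^jQ_i^j$ and substituting the information rent computed by Fubini in Lemma~\ref{lemma:x}, the objective turns into the virtual-cost criterion $\E\sum_{i,j}q_i^j\bigl(c_i^j+K_i^j(c_i^j)\bigr)$ of Problem~\ref{pb:reformulation}, provided the boundary-point evaluation $Q_i^j(c^1,\ldots,c^j,c^{(j+1)+},\ldots,c^{N+})$ appearing in Lemma~\ref{lemma:x} collapses to $Q_i^j(c_i^j)$. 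Problem~\ref{pb:reformulation} then separates over $\cc$ and over $i$ into the pointwise convex allocation problem, whose unique minimizer is precisely the $(q_i^j,h)$ of the statement, and Lemma~\ref{lemma:indep} guarantees that this minimizer enjoys exactly the independence of $Q_i^j$ with respect to the higher-index coordinates $c_i^{j'}$, $j'>j$, that the collapse requires.

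The last step is to lift the pointwise solution back into Problem~\ref{pb:mechanism}. Choosing $x$ as in the statement and averaging over $\cc_{-i}$ gives $V_i(\cc)=\sum_j\int_{c_i^j}^{c_i^{j+}}Q_i^j\,\dd t\ge 0$ with vanishing boundary value, which furnishes both (H1) and the participation constraint (PC); the monotonicity (H2) follows from the fact that the pointwise minimizer of a convex criterion whose virtual marginal prices increase in $\cc$ is nonincreasing in $\cc$; and with (H1), (H2) and the independence property in hand, Lemma~\ref{lemma:IC} delivers the full incentive constraint (IC). Hence the constructed triple is admissible for Problem~\ref{pb:mechanism}. Because it carries the independence property, the reduction of the second paragraph applies with equality, so its cost equals the optimal value of Problem~\ref{pb:reformulation}; combining this with the relaxation inclusion closes the sandwich and identifies it as a solution of Problem~\ref{pb:mechanism}.

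The delicate point, and where I would concentrate the rigor, is the tightness of the relaxation. Dropping (IC) for (H1)+(H2) a priori enlarges the feasible set, and the objective rewriting is legitimate only when $Q_i^j$ does not depend on the higher-index coordinates $c_i^{j'}$, $j'>j$; equivalently, one must control the discrepancy between the boundary-point evaluation of the payment formula and the true allocation. Both gaps are bridged by the same structural fact, the independence property, which Lemma~\ref{lemma:indep} secures for the pointwise optimum and which Lemma~\ref{lemma:IC} converts back into (IC). Establishing that this independent optimum re-enters $\mathrm{feasible}(\text{P}1)$ rather than merely $\mathrm{feasible}(\text{P}2)$ — so that minimizing over the larger set of Problem~\ref{pb:modified} certifies optimality for Problem~\ref{pb:mechanism} — is the linchpin of the argument.
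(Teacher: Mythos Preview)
Your proposal is correct and follows essentially the same route as the paper: establish $\mathrm{feasible}(\text{P}1)\subseteq\mathrm{feasible}(\text{P}2)$ via Lemmas~\ref{lemma:monotone} and~\ref{lemma:int}, rewrite the objective of Problem~\ref{pb:modified} as the virtual-cost criterion of Problem~\ref{pb:reformulation} via Lemmas~\ref{lemma:V},~\ref{lemma:x} and Corollary~\ref{lemma:v0}, solve the latter pointwise, and then check that the constructed triple satisfies (H1), (H2), (PC) and---through Lemma~\ref{lemma:IC} together with the independence supplied by Lemma~\ref{lemma:indep}---the full (IC). Your identification of the independence property as the linchpin that both justifies the collapse $Q_i^j(c^1,\ldots,c^j,c^{(j+1)+},\ldots,c^{N+})=Q_i^j(c_i^j)$ and enables the return from (H1)+(H2) to (IC) is exactly the crux the paper relies on.
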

\begin{proof}
\begin{itemize}
\item First note that $(q,h,x)$ is the pointwise solution of Problem 3 so it is
  optimal for Problem 3, moreover, by construction $(q,h,x)$ satisfies
  (SD) and $h\geq 0$.
\item Then note that by Lemma \ref{lemma:x}, $(q,h,x)$ solves a
  relaxation of Problem 2, but is it admissible for Problem 2 ?
\item By definition of $V$ (omitting $i$), 
\begin{eqnarray*}
V(c_1 \ldots a_j\ldots  c_N) - V(c_1 \ldots b_j\ldots c_N)=\\\E x(c_1
  \ldots a_j\ldots  c_N) - x(c_1 \ldots a_j\ldots  c_N)
  -[Q^j(a^j)a^j - Q^j(b^j)b^j]= \\
\E q_i^j(a^j,c_{-i})a^j+
\int_{a^j}^{c_i^{j+}}q_i^j(t,c_{-i})\dd
t- \E q_i^j(b^j,c_{-i})b^j-
\int_{b^j}^{c_i^{j+}}q_i^j(t,c_{-i})\dd
t     \\
-[Q^j(a^j)a^j - Q^j(b^j)b^j]=
\E\int_{a^j}^{b^j}q_i^j(t,c_{-i})\dd
t=     \int_{a^j}^{b^j}Q_i^j(t)\dd
t
\end{eqnarray*}
where we used the definition of $x$, the definition of $Q$ and Fubini
lemma's for the second, third and fourth equalities. 
Threfore $(q,h,x)$ satisfies (H1).
\item By construction, $q_i^j$ is non-increasing in $c_i^j+
  K_i^j(c_i^j)$ then using the third assumption, $q_i^j$ is non-increasing in $c_i^j$ so for any
   $(a,b,c_{-i})\in \CC^2\times \CC^{-i}$, $(a_i^j
  - b_i^j)(q_i^j(a_i^j,c_{-i})-q_i^j(b_i^j,c_{-i}))\leq 0$, so  by
  integration with respect to $c_{-i}$, $(a_i^j
  - b_i^j)(Q_i^j(a_i^j)-Q_i^j(b_i^j)\leq 0$ and then by summation over
  $j$, 
$(\cc - \cc').(Q_i(\cc)-
    Q_i(\cc') )\leq 0$, i.e. (H2) is satisfied.
\item Since (H1)  is satisfied, $V_i(c_i)\geq V_i(c_i^+) $. Moreover, $V_i(c_i^+) =0$ by
  construction of $x$. Therefore the participation constraint (PC) is satisfied.
\item Therefore $(q,h,x)$ is admissible for Problem 2. So it solves
  Problem 2.
\item Since $(q,h,x)$ solves Problem 2, by Lemma \ref{lemma:IC} the
  incentive compatibility constraint (IC) is satisfied. Moreover, by
  Lemma \ref{lemma:V}, (PC) is satisfied. Thus $(q,h,x)$ is admissible
  for Problem 1, but is it optimal ?
\item By Lemmas \ref{lemma:monotone} and \ref{lemma:int}, any optimal
  solution of Problem 1 should be admissible for Problem 2. Since the
  criteria are the same, we conclude that $(q,h,x)$ is an optimal
  solution of Problem 1. 
\end{itemize}
\end{proof}

\subsection{Comments}

In the optimal mechanism, the agents are paid at a marginal
price which is equal to their bid augmented by an information rent.
This information rent depends on the problem structure since  it is built from a collection of  allocation problems, and it depends on the
available information by the fact that, in these optimization problems,
the marginal prices are replaced by the virtual marginal prices
$c_i^j+K_i^j(c_i^j)$. 
We point out that, as already noted  in
\cite{correa2009planner}, the computation of such rent may  pose  a
practical difficulty for large problems.

Notice that, by construction, the optimal  mechanism is incentive
compatible  no
matter  the value of $K$ 
since (H1) is verified anyway as long as the hypotheses are satisfied. 
If this market is repeated over time, the principal can learn the distribution of the producers' cost parameters.

The model extends to the case in which  some nodes do not
have a producer and where for others, the demand is null. 
In particular, we can consider the buyer/suppliers setting where there is demand
only at one node.   

One may argue that one limit of the current result is that it does not
take into account  any
network constraints. 
Nonetheless, the structure of the proof makes it clear that  we exploited only some
properties of the allocation problem. Therefore, the optimal mechanism
construction is valid for any market for which the allocation problem
satisfies these properties. We detail this argument in the next section.

In addition,  the optimal mechanism
construction is valid for  limiting case in which $r=0$ at some edges. In
this case, one needs to specify the definition of  $q$  since the
solution of the allocation problem may not be a singleton.

\section{Extension to General Network Constraints}
We now explain why the optimal mechanism proposal can be extended to a large variety of network constraints. 
This extension is of particular importance for power market networks, since the admissibility of an allocation is subject to its physical feasibility. While this difficulty can be avoided as long as the network is radial, the general case is known to bring its lot of technical challenges. 
As argued in \cite{Philpott2012}
, the allocation problem can be written:

\begin{equation}
\label{eq:counterflow}
    \begin{array}{c}{ \min \sum_{i} J_{i}
    \left(q_{i}\right)} \\ {\text { s.t. } g_{i}(h)+q_{i}=d_{i}, \quad i=1,2, \ldots, n} \\ {A h+B q=b} \\ {h \in H, \quad q \in Q}\end{array}
\end{equation}

With $g_i$ being concave functions,
$A$ and $B$ are $p\times m$ and $p\times n$ real matrices; $b\in\R^p$, and $H$ and $Q$ are (convex) products of segments  in $\R^m$ and $\R^n$ respectively.

They observe that whenever the multipliers of the first set of constraints in 
\eqref{eq:counterflow} are positive (*), then the optimal solution of \eqref{eq:counterflow} is also a solution of 
\begin{equation}
\label{eq:counterflow2}
    \begin{array}{c}{ \min \sum_{i} J_{i}
    \left(q_{i}\right)} 
  \text { s.t. } (h,q)\in C \end{array}
    \end{equation}
    where
    \begin{equation}
    \begin{array}{c}
    C =\{(h,q)\in X
    {\text { s.t. } g_{i}(h)+q_{i}\geq d_{i},
    \quad i=1,2, \ldots, n} \}\end{array}
\end{equation}

and 
    \begin{equation}
    \begin{array}{c}
    X =\{(h,q)
    \text { s.t. }  {A h+B q=b}  {h \in H, \quad q \in Q}\}.\end{array}
\end{equation}
We assume that (*) is satisfied. 
Otherwise said, we require the sub-gradient of the value of \eqref{eq:counterflow2} to be positive. 
We could equivalently require the uniqueness of the solution of \eqref{eq:counterflow2}.

 We can then focus on the study of \eqref{eq:counterflow2}.
We denote by $\delta_C$ the support function of $C$ and set $U=\left\{u=\left(u_{1}, \ldots, u_{n}\right) | u_{i} \leq 0 \right\}$
Applying Theorem 10.1 from \cite{Rockafellar2009},
we get that a necessary and sufficient condition for an  allocation to be optimal is that:
\begin{equation}
\label{eq:FOC1}
    0\in \partial \sum_i  J_i(q_i) + \delta_C(h,q), 
\end{equation}
Now observe that 
\begin{eqnarray}
\partial  \delta_C(h,q) = N_C(h,q) = \\
\left\{z - \sum_i
y_{i} \nabla (g_{i}(h) +q_i)(h,q_i) | y \in N_{U}([g_i(h) +q_i]_i),
z \in N_{X}(h,q)\right\}
\end{eqnarray}
The last equation requires 
the qualification constraint (Q) from \cite{Rockafellar1993} to be satisfied, so one can use  Theorem 4.3 from \cite{Rockafellar1993}.
Still, note that no matter Q being satisfied, $N_C$ does not depends on $c$.
\begin{theorem}
If (*) 
is satisfied, then $q_i^j$ is independent of $c_i^{k}$ for $k\neq j$, 
moreover, the  mechanism proposed ib the previous section can be applied to \eqref{eq:counterflow}. 
\end{theorem}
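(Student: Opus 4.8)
The plan is to prove the two assertions in turn, relying throughout on the single structural fact already extracted above, namely that the normal cone $N_C(h,q)$ carries no dependence on $c$. First I would fix the role of $c$: in the allocation problem \eqref{eq:counterflow} the cost $J_i$ is the virtual production cost $q_i\mapsto \sum_{j\in J} q_i^j\,(c_i^j+K_i^j(c_i^j))$, which by the discernability assumption is piecewise linear and convex with marginal slopes strictly increasing in $j$. Under (*) the problem \eqref{eq:counterflow} coincides with the convex program \eqref{eq:counterflow2}, whose solution is unique, so $q(\cdot)$ is a single-valued function of $c$ and the supply multipliers $\lambda_i$ are nonnegative. Writing the optimality condition \eqref{eq:FOC1} and isolating its $q_i$-block, the only term carrying any dependence on $c$ is the subdifferential $\partial_{q_i}J_i$; everything inherited from $N_C$ is $c$-free.

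For the independence claim I would fix $i$, $c_{-i}$ and $c_i^j$, and perturb a coordinate $c_i^k$ with $k\neq j$ inside its working zone. Because the virtual marginal slopes increase with $j$, the optimal $q_i$ saturates the segments in order of increasing index, so each segment $k$ is either full ($q_i^k=\bar q$), empty ($q_i^k=0$), or — for at most one index — marginal, the marginal segment pinning $\lambda_i$ to its own virtual slope. Here the non overlapping working zones assumption does the decisive work: the virtual-cost ranges $[c_i^{k-}+K_i^k,\,c_i^{k+}+K_i^k]$ are pairwise disjoint and ordered, so moving $c_i^k$ across its whole working zone cannot change whether segment $k$ sits below or above $\lambda_i$. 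A full segment therefore stays full and an empty segment stays empty, and in either case the perturbation changes the objective only by the constant $\bar q\,(c_i^k+K_i^k)$ or not at all. The residual minimization that fixes $q_i^j$, the flows $h$ and the rivals' allocations is thus untouched, which yields $q_i^j$ as a function of $c_i^j$ and $c_{-i}$ alone. Checking the three possible statuses of segment $j$ itself (full, empty, marginal) confirms that any change in total production forced by the perturbation is absorbed entirely by segment $k$ and never by segment $j$. This reproduces the conclusion of Lemma \ref{lemma:indep1} for the abstract set $C$ in place of the explicit network.

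With independence secured, the second assertion reduces to checking that nothing in Section \ref{sec:mechanism} used more than this property. The monotonicity (H2), i.e. $q_i^j$ non-increasing in $c_i^j$, follows from the same shadow-price comparison together with $\lambda_i\ge 0$: raising $c_i^j$ raises segment $j$'s virtual slope and can only shrink $q_i^j$. Defining $x$ by the formula of Theorem \ref{th:mainResult} reproduces the envelope identity (H1) verbatim, since its derivation used only the definitions of $x$ and $Q$ and Fubini's theorem, not the shape of the constraints. Lemma \ref{lemma:IC} then delivers (IC), Lemma \ref{lemma:V} delivers (PC), and the admissibility-versus-optimality comparison in the proof of Theorem \ref{th:mainResult} carries over word for word with \eqref{eq:counterflow} as the underlying allocation problem.

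The hard part is the independence claim. In the quadratic-loss model it could be read off the explicit stationarity system, whereas here I must rule out any indirect dependence of $q_i^j$ on $c_i^k$ routed through the common shadow prices and the coupling buried in $N_C$. The non overlapping working zones assumption is exactly the tool that localizes each coordinate $c_i^k$ to its own segment, and (*) is what makes $q(c)$ single-valued and the monotonicity sign-definite; the subtle step is arguing that perturbing $c_i^k$ leaves the optimal pair $(h,q)$ locally unchanged and not merely its objective value, which is where uniqueness of the minimizer is invoked.
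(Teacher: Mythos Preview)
Your strategy differs from the paper's in a meaningful way, and the difference matters. The paper does a case split on the status of the \emph{target} segment $j$ (interior versus at a corner of its box) and argues directly from the first-order condition \eqref{eq:FOC1}: when $q_i^j\in(0,\bar q)$ the only slope appearing in $\partial J_i(q_i)$ is $c_i^j+K_i^j$, so perturbing $c_i^k$ with $k\neq j$ leaves \eqref{eq:FOC1} literally unchanged at $\mathbf q$; uniqueness then forces the perturbed optimum to equal $\mathbf q$ on a neighbourhood, and Berge's maximum principle supplies the continuity that turns local constancy into global constancy. At the corner $q_i^j=\bar q$ the subdifferential is $[c_i^j+K_i^j,\,c_i^{j+1}+K_i^{j+1}]$, so only $k=j+1$ is dangerous; that residual case is dispatched by a short monotonicity-plus-uniqueness contradiction.

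Your argument instead tracks the status of the \emph{perturbed} segment $k$ and asserts that ``a full segment stays full and an empty segment stays empty'' because the virtual-cost ranges are disjoint. This step is circular: whether segment $k$ sits below or above $\lambda_i$ depends on $\lambda_i$, which is itself a function of $c_i^k$. In fact segment $k$ need \emph{not} keep its status --- at the kink $q_i=k\bar q$, raising $c_i^k$ can push $q_i$ strictly below $k\bar q$, so segment $k$ passes from full to marginal. Your constant-shift observation only shows that the minimiser over $\{q:q_i^k=\bar q\}$ is unchanged; it does not rule out the global minimiser leaving that set. The right object to freeze is not segment $k$ but segment $j$: one must show that however $q_i$ moves in response to $c_i^k$, it never crosses the thresholds $(j-1)\bar q$ or $j\bar q$. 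That requires the FOC-plus-uniqueness contradiction (if $q_i$ entered a segment $j'\neq k$, the FOC at the new point would be $c_i^k$-free, hence also valid for the original parameters, contradicting uniqueness) together with continuity of $q$ in $c_i^k$ --- and you never invoke Berge's theorem, which the paper uses precisely here. Once the independence of $q_i^j$ is established rigorously, your second paragraph on lifting the mechanism is fine and matches the paper's intent.
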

\begin{proof}
Fix $c$ and consider the associated optimal allocation $\mathbf{q}$.
Observe that either $\mathbf{q}_{i}^{j}\in ]0,\bar{q}_i^j[$ or $\mathbf{q}_i^j\in \{0,\bar{q}_i^j\}$.

\paragraph{First case}
If $\mathbf{q}_i^j\in ]0,\bar{q}_i^j[$, take $k\neq j$
then  $c_i^k$ does not appear in the first order condition \eqref{eq:FOC1}.
By Berge's Maximum Principle \cite{Berge1997}, the optimal allocation is upper hemicontinuous with respect to the parameter $c_i^k$, by unicity of the solution of \eqref{eq:counterflow2}, we get that $q_i$ is continuous with respect to $c_i^k$. 
Thus there is a neighbourhood of $c_i^k$ such that $q_i^j$ is still in $ ]0,\bar{q}_i^j[$. 
In this neighbourhood, condition  \eqref{eq:FOC1}
is satisfied for $q_i^j = \mathbf{q}_i^j$, by unicity of the solution, $q_i^j $ is constant with respect to $c_i^k$ on this neighbourhood.

\paragraph{Second case}
$\mathbf{q}_i^j\in \{0,\bar{q}_i^j\}$. 
Without loss of generality, let us assume that 
$\mathbf{q}_i^j=\bar{q}_i^j$.
Here, we need to observe that  the sub-differential of the criteria with respect to $q_i$ is $[c_i^j,c_i^{j+1}]$, thus the reasoning of the first case can be reproduce whenever $k\neq j+1$. 
So we only need to deal with the situation where $k =j+1$.
Moreover, since $q_i$ is non-increasing in $c_i^{j+1}$, only an increase of $c_i^{j+1}$ can potentially trigger a change in $q_i^j$. Observe that  by Berge's Maximum Principle, $q_i^j$ is continuous with respect to the parameter of interest $c_i^{j+1}$.
If it happens to take a value different than $\bar{q}_i^j$, then this value is also  a solution to \eqref{eq:FOC1} for the initial parameters, which is in contradiction with the unicity of the solution of \eqref{eq:FOC1}. 
\end{proof}

\subsection{Discussion on the non-overlaping zone assumption}

We show here  how the non-overlaping zone structure naturally emerges if we  envision slightly different context  and adapt the notations accordingly:

(1) we focus on a specific producer, and refer to him implicitly in this paragraph, 
(2) we assume the competition is known,
(3) we only suppose the types distribution to have a density $f$ over $C$ \footnote{we still keep non decreasing  marginal costs, otherwise  we would loose the continuity of $q$  }, denoting by $f_j$ the  marginals,
(4) it will prove to be convenient to  use  $q*c:= C^c(q)$ and denote by $j(q)$  the integer part of  $q /\bar{q}$.

Let  $q$ and $x$ be some optimal  allocation and payment  rules for the producer. We assume $q$ and $x$ are continuous and have derivatives almost everywhere.
The producer's profit is $x(s)-  q(s)*c$ whenever his type is $c$   and he signals himself as  of  type $s$.
\textbf{When the competition is known}, we  pinpoint that for any value $q=q(c)$ of the  allocation function, there should be a unique payment that we denote by $x_q$. If it was not the case,  then the  producer would be  better off  revealing what is required to get the  highest possible payment at $q$, which violates  the incentive compatibility constraints (IC).

We observe that, by (IC),  $x(q) - q*c$ is maximal at $q_c$, which implies in particular that   $c_{j(q)}\in \partial_q x(q_c)$. Hence,  almost everywhere and whenever $q$ is not a multiple of $\bar{q}$,  $c_{j(q)}$ is uniquely defined.   Setting $c_q = c_{j(q)}$, \textbf{we observe that 
$x_q = \int_0^q c_t \dd t $}($\star$). 

If we take $s\in C$ such that $q(s)$ is not  a multiple of $\bar{q}$, then by (IC) we know that  $c \rightarrow x(c) -q(c) *s$ should be maximal at $s$. Take $t$ such that $t_{j(q(c))}=s_{j(q(c))}$ and  $q(t)\neq q(s)$, then we should have  $j(q(t))\neq j(q(s))$ which would imply a discontinuity of $q$ somewhere. 
Hence: \textbf{whenever $s_{j(q(s))} = t_{j(q(s))}$, $q(s) = q(t)$} ($\star\star$).
Let $C_k = \{c \in C, j(q(c)) = k \}$, 
$c_k^- = \inf \{c_k, c \in C_k\}$, and 
$c_k^+ = \sup \{c_k, c \in C_k\}$.
By combining ($\star\star$) with the monotony of $q$, we get that 
\textbf{the intersection of the $[c_j^-, c_j^+]$  has an empty interior.}

\subsection{Examples with log-concave functions}
We point out that a sufficient condition to check the monotone
likelihood ratio property is that $F / f$ is increasing. 
If $F$ is a smooth cumulative distribution function with $f$ being the
corresponding smooth and positive density, then
$F/f $ is increasing iff  $f/F$ is decreasing iff $\ln
F'$ is decreasing iff $\ln F$ is concave.
A function $f$ is said to be \emph{log-concave} if $\ln f$ is concave.
Many density functions encountered in  economic and engineering
literature  are \emph{log-concave}: the uniform, the normal, the
exponential and the power function and the Laplace distribution all have
log-concave density functions. 
We refer to \cite{bagnoli2005log} for the results we use on this class
of functions. 
The class of \emph{log-concave} is stable by monotonic
transformation and truncation. Moreover, it happens that if a probability density
distribution is log-concave, then the corresponding cumulative
distribution is log-concave.
In mechanism design theory, it is standard to assume $F$ is 
log-concave \cite{laffont1988dynamics}.

We want to see the implication of the \emph{discernability
  assumption}. 
This assumption imposes a gap $\Delta$ equals to  $K_i^j(c_i^{j+})$ between
$c_i^{j+}$ and $c_i^{(j+1)-}$. We compute this gap for some
standard cases.
To simplify the notations and the computation, we assume without loss
of generality that $c^{j-}=0$ and write $c^{j+}=c^+$. This results in  the following table:
\begin{table}[H]
\centering
\caption{The gap $\Delta$ for some standard probabilities}
\label{my-labelX}
\resizebox{13cm}{!}{
\begin{tabular}{|l|l|l|l|l|}
\hline
Name                   & $\propto f(x)$ & $\propto F(x)$ & $K(x)$ & $\Delta$ \\ \hline
Uniform                & $1$         &   $x$      &   $x $    &  $c^+$        \\ \hline
Power Function        & $\lambda (\frac{x}{c^+})^{\lambda-1}$         &$ c^+(\frac{x}{c^+})^{\lambda} $       &   $\frac{x}{\lambda} $    &     $\frac{c^+}{\lambda} $      \\ \hline
Weibull  & $\lambda (\frac{x}{c^+})^{\lambda-1}e^{(-\frac{x}{c^+})^\lambda} $        &  $ c^+(1-e^{-(\frac{x}{c^+})^\lambda})$      & $\frac{c^+}{\lambda}   (\frac{c^+}{x})^{\lambda-1}(e^{(\frac{x}{c^+})^\lambda}-1)$      &    $c^+ \frac{e-1}{\lambda} $    \\ \hline
Laplace      &$\frac{1}{2}e^{-\lambda |x-\frac{c^+}{2}|} $        &
                                                                
                                                         $x>\frac{c^+}{2}$,
                                                           $\frac{2-
                                                                    e^{-\lambda\frac{c^+}{2}}
                                                                        e^{-\lambda(x-\frac{c^+}{2})}}{2\lambda}$&
                                                                                    
                                                                  &      $\frac{2}{\lambda}(e^{\frac{c^+}{2}\lambda}-1)
                                                                                                           $     \\ \hline
Exponential (reversed) & $\lambda e^{-(c^+-x)\lambda}$        & $e^{-c^+\lambda}(e^{x\lambda}-1) $       &   $\frac{1-e^{-x\lambda}}{\lambda}$     &    $\frac{1-e^{-c^+\lambda}}{\lambda}$        \\ \hline
\end{tabular}
}
\end{table}

We truncate the probabilities so that they have support in $[0,c^+]$. 
The symbole $\propto$ means that we express $f$ and $F$ modulo the
multiplication by a common constant (due to the truncation) and
$\lambda$ is a positive parameter that should be greater than $1$ for
the Power function and the Weibull probability.
For the uniform distribution, we see that the intervals should be of
non-decreasing sizes. For instance, one could take $c^1\in
[\bar{c},2\bar{c}]$,   $c^2\in
[3\bar{c},4\bar{c}]$, $c^3\in [5\bar{c},6\bar{c}]$, etc.
For the Power,   the Weibull  and the exponential functions, 
we see that the gap could be made smaller.

\section{Study of the allocation problem}
\label{sec:nagents}
\subsection{The standard auction problem}
The previous section motivates the study of the allocation problem for different reasons. Firstly, as we have already pointed out in the proofs, the results of \S \ref{sec:mechanism} rely on some properties of the solution of the standard allocation problem. In addition to those properties, we derive in this section an algorithm to compute the solution of the standard allocation problem.   
According to \ref{th:mainResult}, these algorithms can be used for both the original auction problem and the optimal mechanism design. 
To benchmark the  mechanism design equilibrium against an equilibrium of the Bayesian game related to the standard auction, numerical efficiency is pivotal: indeed, the Bayesian equilibrium estimation requires a large number of  allocations computations.

Let us first introduce the standard allocation problem. In a standard mechanism, the principal solves an allocation problem
based on the bids he receives. 
Those bids will be denoted  by $c_i^j$, where as before $i\in I$ corresponds to
the  ith agent and $j\in J$ corresponds to the jth  working zone with
constant marginal price. To model the fact that the production costs
are piecewise linear, we use some positive variables $q_i^j$ so that
$q_i^j\leq \bar{q}$, for any $i\in I$, the quantity produced by agent
$i$ is  $q_i=\sum_{j\in J} q_i^j$ and the related production cost is
$\sum_{j\in J} c_i^j q_i^j$.
As before, an allocation should satisfy the constraint that production  exceeds demand. 
We end up with Problem \ref{pb:alloc}:
\label{sec:fixed-point-theorem}
\begin{problem}
\label{pb:alloc}
\begin{equation} 
\begin{aligned}
& \underset{(\q,h)}{\text{minimize}}
& & \sum_{i\in I}\sum_{j\in J} q_i^j c_{i}^j  \\
& \text{subject to}
& &\forall i\in I:  \sum_{j\in J } q_i^j  + \sum_{i'\in V(i)}
h_{i',i} - h_{i,i'} - \frac{h^2_{i,i'}+h^2_{i',i}}{2}r_{i,i'} \ge d_i  \quad (\lambda_i) \\
& & &\forall (i,i')\in E: h_{i,i'} \ge 0  \quad (\gamma_{i,i'})\\
& & &\forall (i,j) \in I\times J: q_i^j\ge 0  \quad (\mu_{i,j})\\
&&&\forall (i,j)\in I\times J:  q_i^j \leq \bar{q} \quad (\nu_{i,j}).
\end{aligned}
\end{equation}
\end{problem}
The notations for the dual the variables associated with each
constraint are indicated in parentheses. Those variables are in $\R_+$.

For any node $i\in I$, we define the function $F_i$ for $\lambda \in [\min_{i} c_i^1,\max_i c^N_i]^n$
\begin{equation}
  F_i(\lambda_i, \lambda_{-i}) =d_i + \sum_{i'\in V(i)} \frac{\lambda_{i'}
  - \lambda_i}{r_{i,i'} (\lambda_i +\lambda_{i'})} +\frac{(\lambda_{i'}
  - \lambda_i)^2}{2r_{i,i'} (\lambda_i +\lambda_{i'})^2}.
\end{equation}
Later on we justify that this function could be interpreted as the production of agent $i$ when the
multipliers are $\lambda_i$ and $\lambda_{-i}$.
Its partial derivative with respect to $\lambda_i$ is 
\begin{equation}
\label{eq:der1}
 \partial_{\lambda_i}F_i(\lambda_i,\lambda_{-i}) = - \sum_{i'\in V(i)}
  \frac{4}{r_{i,i'}} \frac{\lambda_{i'}^2}{(\lambda_i+\lambda_{i'})^3}<0.
\end{equation}
The derivative is negative: when $i$ increases its price
 it is assigned smaller production quantities.
The partial derivative of $F_i$  for $i'\in I\backslash \{i\}$ is 
\begin{equation}
\label{eq:der2}
 \partial_{\lambda_{i'}}F_i(\lambda_i,\lambda_{-i})  =\begin{cases}
 \frac{4}{r_{i,i'}} \frac{\lambda_{i'}
    \lambda_i}{(\lambda_i+\lambda_{i'})^3}>0 & \text{if }   i'\in V(i)\\
0 & \text{else. } 
\end{cases}
\end{equation}
When another agent becomes less competitive, $i$ is assigned more
production. 
Let $k\in J\cup \{0\}$.
The limit at $+\infty$ and $0$ of  $F_i(x,\lambda_{-i})-k\bar{q} $ are
\begin{equation}
\lim_{x\rightarrow +\infty}F_i(x,\lambda_{-i})-k\bar{q} = d_i -k\bar{q} -\sum_{j\in V(i)}\frac{1}{2 r_{i,j}} 
\end{equation}
and    
\begin{equation}
\lim_{x\rightarrow +\infty}F_i(x,\lambda_{-i})-k\bar{q} =d_i -k\bar{q}+\sum_{j\in V(i)}\frac{3}{2 r_{i,j}}.
\end{equation}
Without loss of generality (otherwise we could impose capacity constraints), we assume, the first term to be strictly negative and the second to be strictly positive, hence  
by the intermediate value theorem, $F_i -k\bar{q}$ has a zero. 
Since $F_i -k\bar{q}$ is decreasing in $\lambda_i$, this solution is unique.
Now we define for $i\in I$ and $k\in J\cup \{0\}$, $g^k_i$ as the function that
associates any $\lambda_{-i}  \in [\min_{i} c_i^1,\max_i c^N_i]^{n-1}$ 
with the unique $x$ such that
and 
$F_i(x,\lambda_{-i}) = k\bar{q} $ and $ x> 0$:
\begin{equation}
\begin{aligned}
&F_i(g^k_i(\lambda_{-i}),\lambda_{-i}) = k\bar{q} \\
&g^k_i(\lambda_{-i})> 0.
 \end{aligned}
\end{equation}
\begin{lemma}
For any $i\in I$, $k\in J\cup \{0\}$, $\lambda^{-i}\in [\min_{i}
c_i^1,\max_i c^N_i]^{n-1}$ and $i'\in V(i)$
\begin{equation}  
  \partial_{\lambda_{i'}} g_i^k (\lambda_{-i}) > 0. 
\end{equation}
In particular, $g_i^k$ is increasing in $\lambda_{i'}$ for $i'\in V(i)$.
\end{lemma}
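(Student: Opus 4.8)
The plan is to obtain the sign of $\partial_{\lambda_{i'}} g_i^k$ directly from the implicit function theorem applied to the defining relation $F_i(g_i^k(\lambda_{-i}),\lambda_{-i}) = k\bar{q}$. This is legitimate because $F_i$ is $C^1$ on the relevant domain: its expression is a rational function of the $\lambda$'s whose denominators involve only powers of $(\lambda_i + \lambda_{i'})$, which stays bounded away from zero since all multipliers lie in $[\min_i c_i^1, \max_i c_i^N]^n \subset (\R^{*+})^n$. Moreover, by \eqref{eq:der1} the partial derivative $\partial_{\lambda_i} F_i$ is strictly negative, hence nonzero, so the hypothesis of the implicit function theorem is satisfied and $g_i^k$ is differentiable with respect to each $\lambda_{i'}$.

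Next I would differentiate the identity $F_i(g_i^k(\lambda_{-i}),\lambda_{-i}) = k\bar{q}$ with respect to $\lambda_{i'}$ for a fixed $i' \in V(i)$. Since the right-hand side is constant, the chain rule gives
\begin{equation}
\partial_{\lambda_i}F_i(g_i^k(\lambda_{-i}),\lambda_{-i}) \cdot \partial_{\lambda_{i'}} g_i^k(\lambda_{-i}) + \partial_{\lambda_{i'}}F_i(g_i^k(\lambda_{-i}),\lambda_{-i}) = 0,
\end{equation}
so that
\begin{equation}
\partial_{\lambda_{i'}} g_i^k(\lambda_{-i}) = - \frac{\partial_{\lambda_{i'}}F_i(g_i^k(\lambda_{-i}),\lambda_{-i})}{\partial_{\lambda_i}F_i(g_i^k(\lambda_{-i}),\lambda_{-i})}.
\end{equation}

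It then suffices to read off the signs of the two partial derivatives of $F_i$ computed earlier. By \eqref{eq:der2}, for $i' \in V(i)$ the numerator $\partial_{\lambda_{i'}}F_i$ is strictly positive, while by \eqref{eq:der1} the denominator $\partial_{\lambda_i}F_i$ is strictly negative. The quotient of a positive number by a negative number is negative, and the leading minus sign flips it, yielding $\partial_{\lambda_{i'}} g_i^k > 0$. This establishes the strict inequality and, in particular, the stated monotonicity of $g_i^k$ in $\lambda_{i'}$. I do not anticipate any serious obstacle here: the only point requiring care is verifying the nondegeneracy condition $\partial_{\lambda_i}F_i \neq 0$ so that the implicit function theorem applies, but this is immediate from the strict negativity in \eqref{eq:der1}; the economic reading is that raising a neighbour's price $\lambda_{i'}$ makes that neighbour less competitive, so agent $i$ must raise its own price $g_i^k$ to keep its production pinned at the level $k\bar{q}$.
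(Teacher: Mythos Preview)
Your proof is correct and follows exactly the same approach as the paper: apply the implicit function theorem to the defining relation $F_i(g_i^k(\lambda_{-i}),\lambda_{-i})=k\bar q$ to obtain $\partial_{\lambda_{i'}} g_i^k = -\partial_{\lambda_{i'}}F_i / \partial_{\lambda_i}F_i$, then read off the sign from \eqref{eq:der1} and \eqref{eq:der2}. The paper's proof is just a one-line citation of this formula, so your version is simply a more carefully spelled-out rendition of the same argument.
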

\begin{proof}
According to the implicit function theorem
\begin{equation}
\label{eq:implicite}
  \frac{\partial g_i^k (\lambda_{-i})}{\partial \lambda_{i'}}  =  - \frac{\partial F_i }{\partial \lambda_{i'}} /  \frac{\partial F_i}{\partial \lambda_i} ,
\end{equation}
\end{proof}

It is clear that $g_i^k(\lambda_{-i})$ is decreasing in $k$.
We  proceed with the  computation of the dual of Problem
\ref{pb:alloc}. 
If a strong duality theorem applies, then we should have
\begin{eqnarray*}
 \min_{q,h}\max_{\lambda,\gamma,\nu ,\mu}\sum_{i\in I,j\in J} q_i^j c_i^j  +\\
\sum_{i\in I} \lambda_i \{d_i - (\sum_{j\in J} q_i^j  + \sum_{i'\in V(i)}
h_{i',i} - h_{i,i'} - \frac{h^2_{i,i'}+h^2_{i',i}}{2}r_{i,i'} )\}\\
-\sum_{i \in I,j\in J}  \gamma_{i,j} h_{i,j}   +
\sum_{i\in I,j \in J} \nu_{i,j}(q_i^j -\bar{q})  - \mu_{i,j}q_i^j \\
 =\max_{\lambda,\gamma,\nu\mu} 
 \min_{q,h}
 \sum_{i\in I} \lambda_i d_i-\sum_{i\in I,j\in J} \nu_{i,j}\bar{q}+
q_i^j(c_i^j +\nu_{i,j}  - \lambda_i- \mu_{i,j})\\
 + \sum_{(i,i')\in E}
h_{i,i'}\{\lambda_i-\lambda_{i'}- \gamma_{i,j}\} +
h^2_{i,i'}r_{i,i'} \frac{ \lambda_i+\lambda_{i'} }{2},\\
\end{eqnarray*}
so that for any $(i,i')\in E$, by necessary and sufficient first order condition 
\begin{eqnarray}
\label{eq:h}
h_{i,i'} = \frac{\gamma_{i,i'}+ \lambda_{i'}-\lambda_{i} }{r_{i,i'}(\lambda_{i'}+\lambda_{i} )}.
\end{eqnarray}

By replacing $h$ by its expression in the dual variables we get something equivalent to 
\begin{equation} 
\begin{aligned}
& \underset{(\lambda,\gamma,\mu,\nu)}{\text{maximize}}
& & \sum_{i\in I} \{\lambda_i d_i-\sum_{j\in J} \nu_{i,k}\bar{q} -
\sum_{i'\in V(i)}
\frac{(\lambda_i-\lambda_{i'}- \gamma_{i,j})^2}{2r_{i,i'}(\lambda_i+\lambda_{i'} )} \} \\
& \text{subject to}
& & \forall (i,j)\in I\times J \quad c_i^j +\nu_{i,j}  \geq \lambda_i+ \mu_{i,j}.
\end{aligned}
\end{equation}
The expression of $\gamma$ with respect to $\lambda$ follows. For any
$(i,i')\in E$
\begin{equation}
 \gamma_{i,i'}=\begin{cases}
0 & \text{if }  \lambda_i\le\lambda_{i'}\\
\lambda_i-\lambda_{i'}& \text{else} 
\end{cases}
\end{equation}
thus the dual problem is equivalent to  
\begin{equation} 
\begin{aligned}
& \underset{(\lambda,\mu,\nu)}{\text{maximize}}
& & \sum_{i\in I} \{\lambda_i d_i-\sum_{j\in J} \nu_{i,j}\bar{q} -
\sum_{i'\in V(i)}
\frac{(\lambda_i-\lambda_{i'})^2}{4r_{i,i'}(\lambda_i+\lambda_{i'} )} \} \\
& \text{subject to}
& &\forall (i,j)\in I\times J \quad c_i^j +\nu_{i,j}  \geq \lambda_i+ \mu_{i,j},
\end{aligned}
\end{equation}
because $\mu$ does not play any role in the admissibility of the other
variables nor in the objective, this is equivalent to 
\begin{equation} 
\begin{aligned}
& \underset{(\lambda,\nu)}{\text{maximize}}
& & \sum_{i\in I} \{\lambda_i d_i-\sum_{j\in J} \nu_{i,j}\bar{q} -
\sum_{i'\in V(i)}
\frac{(\lambda_i-\lambda_{i'})^2}{4r_{i,i'}(\lambda_i+\lambda_{i'} )} \} \\
& \text{subject to}
& & \forall (i,j)\in I\times J  \quad c_i^j +\nu_{i,j}  \geq \lambda_i,
\end{aligned}
\end{equation}
The expression of $\nu$ follows. For any
$(i,j)\in I\times J$
\begin{equation}
\nu_{i,j}=\begin{cases}
0 & \text{if }  \lambda_i \leq c_{i}^j\\
\lambda_i- c_{i}^j& \text{else.} 
\end{cases}
\end{equation}
We can now  justify that we have strong duality: 
the operator is continuous, convex-concave and the dual variables are
restricted to be in a bounded set.

The dual of the allocation problem is therefore written: 
\begin{equation} 
\begin{aligned}
& \underset{\lambda\geq 0}{\text{maximize}}
& & \sum_{i\in I} \{\lambda_i d_i-\bar{q} \sum_{j\in J} (\lambda_i -
c_i^j)\delta_{\lambda_i \ge c_{i}^j}- \sum_{i'\in V(i)}
\frac{(\lambda_i-\lambda_{i'})^2}{4r_{i,i'}(\lambda_i+\lambda_{i'} )} \}, \\
\end{aligned}
\end{equation}
where 
\begin{equation}
\delta_{x\geq y}=\begin{cases}
1 & \text{if }  x\geq y  \\
0& \text{else. } 
\end{cases}
\end{equation}
For $i \in I$ we maximize the criteria
 \begin{equation}
\label{eq:criteria}
 \lambda_i d_i-\bar{q} \sum_{j \in J} (\lambda_i -
 c_i^j)\delta_{\lambda_i \ge c_i^j}- \sum_{i'\in V(i)}
\frac{(\lambda_i-\lambda_{i'})^2}{4r_{i,i'}(\lambda_i+\lambda_{i'} )},
\end{equation}
which  is strictly concave for any $\lambda_{-i}$ (sum of concave and
strictly concave functions). We denote by $\Lambda_i(\lambda_{-i})$ its  maximizer. 
The first order necessary and sufficient condition on  $\Lambda_i$ is: 
 \begin{equation}
\label{eq:FOC}
0\in F_i(\Lambda_i, \lambda_{-i}) - K_i(\Lambda_i),
\end{equation}
where 
\begin{equation}
K_i(\lambda_i)=\begin{cases}
0 & \text{if }  \lambda_{i} <  c_i^1 \\
[j-1,j]\bar{q} & \text{if }  \lambda_{i} =  c_i^j \\
j\bar{q} & \text{if } \lambda_{i}\in ]c_i^j,c_i^{j+1}[,
  j\neq N\\
N\bar{q} & \text{if } \lambda_{i}\in \lambda_{i}\in ]c_i^N,\bar{c}[, 
\end{cases}
\end{equation}
We conclude 
\begin{lemma}
\label{lemma:unicity}
For any $i\in I$ and any $\lambda^{-i}\in [\min_{i} c_i^1,\max_i c^N_i]^{n-1}$,   $\Lambda_i(\lambda_{-i}) $ is the unique solution of
\begin{eqnarray} 
F_i(\Lambda_i, \lambda_{-i}) \in K_i(\Lambda_i).
\end{eqnarray}
\end{lemma}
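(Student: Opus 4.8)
The plan is to read this lemma as the optimality characterization of the strictly concave node criterion \eqref{eq:criteria}, and to prove existence and uniqueness of the inclusion by playing the strict monotonicity of $F_i$ against the monotonicity of the correspondence $K_i$. Two structural facts do all the work. First, by \eqref{eq:der1} the map $x \mapsto F_i(x,\lambda_{-i})$ is continuous and strictly decreasing on $[\min_i c_i^1,\max_i c_i^N]$. Second, $K_i$ is exactly the subdifferential of the convex piecewise-linear penalty $\bar{q}\sum_{j\in J}(\lambda_i - c_i^j)^+$, so it is a (maximal) monotone set-valued map: for $x < x'$, every $a \in K_i(x)$ and $b \in K_i(x')$ satisfy $(a-b)(x-x') \ge 0$, hence $a \le b$.

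For uniqueness I would argue by contradiction. Suppose $x < x'$ both solve the inclusion, i.e.\ $F_i(x,\lambda_{-i}) \in K_i(x)$ and $F_i(x',\lambda_{-i}) \in K_i(x')$. Monotonicity of $K_i$ forces $F_i(x,\lambda_{-i}) \le F_i(x',\lambda_{-i})$, whereas strict decrease of $F_i$ gives $F_i(x,\lambda_{-i}) > F_i(x',\lambda_{-i})$, a contradiction. Hence at most one $\Lambda_i(\lambda_{-i})$ can solve the inclusion.

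For existence I would exploit the threshold structure already built in the excerpt. The decreasing curve $x \mapsto F_i(x,\lambda_{-i})$ meets the value $k\bar{q}$ at the unique point $g_i^k(\lambda_{-i})$, and these thresholds are ordered decreasingly in $k$; the staircase $K_i$ climbs from $(j-1)\bar{q}$ to $j\bar{q}$ as $x$ crosses $c_i^j$. Comparing the two monotone objects on the compact interval, and using the sign of the limits assumed after \eqref{eq:der2} so that $F_i$ sweeps across the whole range $[0,N\bar{q}]$ of $K_i$, one obtains at least one crossing: either $F_i(x,\lambda_{-i}) = j\bar{q} = K_i(x)$ on some flat piece $]c_i^j,c_i^{j+1}[$, or $F_i$ lands in the jump interval $[(j-1)\bar{q},j\bar{q}] = K_i(c_i^j)$ at a breakpoint $x = c_i^j$. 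Either way the inclusion holds somewhere. Equivalently, since \eqref{eq:criteria} is strictly concave with the sign of its superdifferential flipping from positive to negative across the interior (again by the limit assumptions), its maximizer $\Lambda_i(\lambda_{-i})$ exists and is characterized by the first-order condition \eqref{eq:FOC}, which is precisely the inclusion.

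I expect the only delicate point to be the bookkeeping at the breakpoints $x = c_i^j$, where $K_i$ is genuinely set-valued: the existence argument must allow the crossing to occur inside a jump interval rather than on a flat, and the uniqueness argument must use the correspondence-level monotonicity $(a-b)(x-x') \ge 0$ rather than a pointwise single-valued comparison. Once these are handled, identifying the unique crossing with the strict maximizer $\Lambda_i(\lambda_{-i})$ of \eqref{eq:criteria} closes the proof.
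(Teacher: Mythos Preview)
Your proposal is correct and matches the paper's approach: the paper simply presents the lemma as the conclusion of the preceding discussion, namely that \eqref{eq:criteria} is strictly concave, $\Lambda_i(\lambda_{-i})$ is defined as its unique maximizer, and \eqref{eq:FOC} is the necessary and sufficient first-order condition. Your direct monotonicity argument for uniqueness (strict decrease of $F_i$ against monotonicity of $K_i$) is an equivalent unpacking of why the FOC pins down a single point, and your existence discussion lands on the same optimization-based justification the paper relies on.
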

We point out that the primal (and dual) solution unicity is a
desirable property that is not
systematic for the allocation problems of centralized market
models.
The expression of $h$ with respect to $\lambda$ \eqref{eq:h} together with the fact the fact the
supply constraint should be binding at optimality justify the
interpretation of $F_i$ proposed at the beginning of this
subsection. In the following sequel we use this property many times.

\subsection{Some properties of the solution}
\label{sec:property}
If $r $ and $d$ are set, we can see the solution of Problem 4 as a
function of the vector $c\in\CC^n$. We denote by $q(c)$ the solution of Problem 4 with the
cost vector  $c$. Similarly, we define $q_i(c)$, $q_i^j(c)$, $\lambda(c)$ and $\lambda_i(c)$.
We give here two properties of the allocation problem solution. 
By
integration, we showed in the previous section that the solution of
the mechanism design inherits those properties.

\begin{lemma}
\label{lemma:indep1}
Let $(q(\cc),h(\cc))$ be a solution of Problem \ref{pb:alloc}, then $q_i^j(\cc)$
does not depend on $c_i^{l}$ for $l\neq j$:
\begin{equation}
 q_i^j(c^1,\ldots c^{j-1},c^j,c^{j+1}\ldots,c^N;c^{-i}) =  q_i^j(s^1,\ldots s^{j-1},c^j,s^{j+1}\ldots,s^N;c^{-i}) 
\end{equation}
\end{lemma}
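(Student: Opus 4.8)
The plan is to exploit the dual characterisation of Problem \ref{pb:alloc} obtained above, which shows that the optimal allocation is completely determined by the shadow prices $\lambda$ through a merit-order rule. By the first order conditions and complementary slackness, once $\lambda$ is known, $q_i^j=\bar{q}$ when $\lambda_i>c_i^j$, $q_i^j=0$ when $\lambda_i<c_i^j$, and $q_i^j\in[0,\bar{q}]$ when $\lambda_i=c_i^j$; moreover the binding supply constraint gives $\sum_{l\in J}q_i^l=F_i(\lambda_i,\lambda_{-i})$. Since, by the non-overlapping working zones assumption, the intervals $[c_i^{l-},c_i^{l+}]$ are disjoint and ordered, the ranking of the segments of agent $i$ is fixed independently of where each $c_i^l$ sits in its interval, so the cheaper segments $l<j$ are always filled to $\bar{q}$ and the more expensive ones $l>j$ are always empty whenever segment $j$ is marginal. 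Uniqueness of the optimiser (Lemma \ref{lemma:unicity} together with the strong duality established above) is what lets us turn these pointwise conditions into a genuine functional dependence statement.

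First I would fix $i$, $j$ and an index $l\neq j$, freeze every coordinate except $c_i^l$, and regard $q_i^j$ as a function of $c_i^l$ ranging over the connected interval $[c_i^{l-},c_i^{l+}]$. The strategy is to prove that this function is both continuous and locally constant, which forces it to be constant. Continuity follows from the continuity of $q_i$ (Lemma \ref{lemma:qc1}) together with uniqueness of the solution, via Berge's maximum theorem. For local constancy I would argue by cases on the value of $q_i^j$.

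In the interior case $q_i^j\in(0,\bar{q})$ the segment $j$ is marginal, which pins $\lambda_i=c_i^j$; then the equilibrium relations $F_{i'}(\lambda_{i'},\lambda_{-i'})\in K_{i'}(\lambda_{i'})$ for $i'\neq i$, together with the pinned identity $\lambda_i=c_i^j$, do not involve $c_i^l$ at all, so by uniqueness $\lambda$ is locally unaffected by $c_i^l$, and since $q_i^j=F_i(\lambda_i,\lambda_{-i})-(j-1)\bar{q}$ it is locally constant. In the boundary case $q_i^j\in\{0,\bar{q}\}$, say $q_i^j=\bar{q}$, the subdifferential of the cost in $q_i$ at the optimum is the interval $[c_i^j,c_i^{j+1}]$, so the coordinates $c_i^l$ with $l\neq j,j+1$ do not enter the first order condition and $q_i^j$ stays at $\bar{q}$; the only delicate coordinate is $l=j+1$, for which I would use the monotonicity of $q_i$ in $c_i^{j+1}$ to note that only an increase can trigger a change, and then invoke continuity plus uniqueness to rule out $q_i^j$ leaving $\bar{q}$ (the symmetric statement handles $q_i^j=0$ and $l=j-1$).

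I expect the boundary case, and in particular the adjacent coordinate $c_i^{j+1}$, to be the main obstacle: one must argue that a perturbation of $c_i^{j+1}$ cannot dislodge the full segment $j$, which relies on combining the explicit merit-order structure with uniqueness of the optimiser rather than on a naive first order computation. The transition points where $\lambda_i$ exactly equals a breakpoint $c_i^m$ are of measure zero and are absorbed by the continuity argument, so they do not affect the conclusion; a continuous and locally constant function on the interval $[c_i^{l-},c_i^{l+}]$ is constant, which is exactly the claim.
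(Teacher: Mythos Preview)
Your strategy—vary one coordinate $c_i^l$ at a time and combine continuity of $q_i^j$ with local constancy—differs from the paper's proof, which directly compares two arbitrary vectors $c,s\in\CC_i$ sharing $c^j=s^j$. The paper's interior case matches yours, but its boundary case assumes for contradiction that $q_i^j(c)=\bar q$ while $q_i^j(s)=0$, deduces $\lambda_i^c\geq c^j\geq\lambda_i^s$, and then invokes the monotonicity of the fixed-point components $\Lambda_{i'}$ (which are the same for $c$ and $s$ at every $i'\neq i$) to force $\lambda^c\geq\lambda^s$ componentwise and reach a production-balance contradiction. No subdifferential bookkeeping or local analysis is used.

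Your boundary case contains a real gap. You assert that when $q_i^j=\bar q$ the subdifferential of the cost at the optimal $q_i$ is $[c_i^j,c_i^{j+1}]$, hence only $c_i^j$ and $c_i^{j+1}$ enter the first order condition. But $q_i^j=\bar q$ merely says $q_i\geq j\bar q$; the optimal $q_i$ may well lie strictly inside segment $m>j$ or at a kink $m\bar q$ with $m>j$, and then the subdifferential at $q_i$ is $\{c_i^m\}$ or $[c_i^m,c_i^{m+1}]$, which \emph{does} involve the coordinate $c_i^l$ you are moving whenever $l\in\{m,m+1\}$. The conclusion you want is still correct, but for a simpler reason: $q_i^j=\bar q$ is equivalent to $\lambda_i\geq c_i^j$; if the inequality is strict, continuity of $\lambda$ in the parameters (Berge's theorem on the strictly concave dual) preserves it under perturbation of $c_i^l$ with $c_i^j$ held fixed, so $q_i^j$ stays at $\bar q$; the non-strict sub-case lies in $\Set$ and is absorbed by your continuity remark. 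Cleaner still, observe that your interior argument actually yields \emph{global}, not merely local, constancy in $c_i^l$ (the same $\lambda$ satisfies the first order condition for every value of $c_i^l$); so if $q_i^j$ were non-constant on $[c_i^{l-},c_i^{l+}]$ it would, by the intermediate value theorem, hit $(0,\bar q)$ somewhere, contradicting the interior case.
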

\begin{proof}
Let $i\in I$, $j\in J$, $c_{-i}\in\CC^{n-1}$, $c=(c^1,\ldots,c^N)\in \CC$ and 
$s=(s^1,\ldots,s^N)\in \CC$ such that $s^j=c^j$. We shall prove that $q_i^j(s,c^{-i}) = q_i^j(c,c^{-i})$.
We denote by $\lambda^c$ (resp. $\lambda^s$) the dual variables associated with the nodal contraints for the allocation problem parametrized with $c$ (resp. $s$).
First if 
\begin{equation}
q_i^j(c,c^{-i})\in ]0,\bar{q}[,
\end{equation}
then by lemma \ref{lemma:unicity} 
$\lambda^c_i = c_i^j$ and  using Lemma \ref{lemma:unicity} again,
$\lambda^s_i = c_i^j$. Therefore $\lambda^s= \lambda^c$, from which we
deduce  that $ q_i^j(c,c^{-i}) =  q_i^j(s,c^{-i}) $.

Therefore without loss of generality, we can assume that 
\begin{equation}
\label{eq:24}
q_i^j(c,c^{-i})=\bar{q}\quad\mbox{and}\quad
q_i^j(s,c^{-i})=0.
\end{equation}
Then using Lemma \ref{lemma:unicity} 
we get 
\begin{equation}
\lambda^c_i \geq c^k\quad\mbox{and}\quad
\lambda^s_i\leq c^k,
\end{equation}
so that $\lambda^c_i\geq \lambda^s_i$. 
If $\lambda^c_i > \lambda^s_i$, then $\lambda^c_{-i} \geq
\lambda^s_{-i}$ by non-decreasingness of $\Lambda_{i'}$, $i' \in
I\backslash \{i\}$ (explained in \S\ref{sub:fixedpoint})
Therefore all the other agents are producing less, which is absurd since
$i$ is already producing less.  

\end{proof}

We extend the notations by setting for all $i\in I$, $c_i^0=c_*$.
We consider the subset $\Set$ of $\CC$ for which at  some
nodes $i$, the multiplicator $\lambda_i$ is equal to the marginal cost \emph{and} the
production is a multiple of $\bar{q}$ (i.e. stuck in an angle):
\begin{equation}
\Set = \{ c\in\CC^n, q_i(c) = j\bar{q} \ \mbox{and} \
\lambda_i(c) = c_i^{j'} \ \mbox{for some} \  i\in I, j\in J\cup\{0\},
j'\in \{j,j+1 \}\}.
\end{equation}
The set $\Set$ corresponds to the points of transition  between the two possibilities
defined by the first order condition \eqref{eq:FOC}. Because of the
angle, it is natural to think that this is where irregularities may
happen (see the proof of the next lemma).
We introduce this set to show some regularity properties of $q$ and
$Q$. 
We detail the proof in the Appendix.  The approach consists in showing
that $\Set$ is a finit  union of sets of zero measure.
This is also true for the projection of $\Set$ on the $\{c_i\}\times\C^{-i}$. 
Then we observe that on $\CC\backslash \Set$, the relations between  the
primal and dual variables are smooth. 
\begin{lemma}
\label{lemma:qc1}
The function $q$ is $C^\infty$ on $\CC^n  \backslash \Set $ and $C^0$
on $\CC^n$.
\end{lemma}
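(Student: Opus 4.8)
The plan is to treat continuity and smoothness separately, deducing both from the characterisation of the dual optimum $\lambda(c)$ provided by the first order condition \eqref{eq:FOC} and by the uniqueness in Lemma \ref{lemma:unicity}. Recall that at optimality the supply constraint binds, so $q_i=F_i(\lambda_i,\lambda_{-i})$, and that every multiplier satisfies $\lambda_i\geq c_*>0$.

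For the $C^0$ part I would invoke Berge's maximum theorem \cite{Berge1997}. The objective of Problem \ref{pb:alloc} is jointly continuous in $(q,h,c)$, the feasible set does not depend on $c$ and may be taken compact (the boxes $q_i^j\in[0,\bar{q}]$ together with an a priori bound on the flows, since flows beyond what is needed only increase the losses), and the minimiser is unique. Hence $c\mapsto(q(c),h(c))$ is single valued and continuous on all of $\CC^n$; running the same argument on the strictly concave dual gives continuity of $\lambda(c)$ as well.

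For the $C^\infty$ part, fix $c^0\in\CC^n\setminus\Set$. By \eqref{eq:FOC}, which reads $F_i(\lambda)\in K_i(\lambda_i)$, and Lemma \ref{lemma:unicity}, each node is, at $c^0$, in exactly one of two mutually exclusive regimes: either (A) $\lambda_i\in(c_i^{j_i},c_i^{j_i+1})$ for some index $j_i$, which forces $q_i=j_i\bar{q}$; or (B) $\lambda_i=c_i^{j_i}$ with $q_i\in((j_i-1)\bar{q},j_i\bar{q})$. The excluded case where $\lambda_i$ equals a marginal cost \emph{and} $q_i$ is a multiple of $\bar{q}$ is exactly membership in $\Set$, so it does not occur. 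All the defining inequalities being strict and $\lambda,q$ being continuous by the first step, the regime assignment $i\mapsto(\text{A or B},j_i)$ is locally constant near $c^0$. On such a neighbourhood I would write the square system $\Phi(\lambda,c)=0$ with $\Phi_i=F_i(\lambda)-j_i\bar{q}$ on the regime-(A) nodes and $\Phi_i=\lambda_i-c_i^{j_i}$ on the regime-(B) nodes. Since the denominators of $F_i$ stay bounded away from zero (all multipliers are $\geq c_*>0$), $\Phi$ is $C^\infty$ in $(\lambda,c)$ and vanishes at $(\lambda(c^0),c^0)$.

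The crux, and the step I expect to be the main obstacle, is the invertibility of $\partial_\lambda\Phi$. The regime-(B) rows are standard basis vectors, so $\det\partial_\lambda\Phi$ reduces to that of the principal submatrix $\hat{M}=(\partial_{\lambda_{i'}}F_i)_{i,i'\in A}$ over the set $A$ of regime-(A) nodes. Because $\partial_{\lambda_i}D=F_i-K_i$ for the dual objective $D$ and the $K_i$ part is locally constant in regime (A), this submatrix is precisely the $A$-block of the Hessian of $D$; it is symmetric (compare \eqref{eq:der2} at $(i,i')$ and $(i',i)$, using $r_{i,i'}=r_{i',i}$) and negative semidefinite since $D$ is concave. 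The only obstruction to negative definiteness is the Hessian kernel, which by Euler's identity for the degree-one-homogeneous smooth part of $D$ is, on each connected component of the graph, the single radial direction $\lambda$. A nonzero vector supported on $A$ that is radial on a component would force $\lambda_i=0$ at the regime-(B) nodes of that component; as the multipliers are positive, this cannot happen once every component contains a regime-(B) node. That last fact follows from uniqueness: were a whole component in regime (A), the degree-zero homogeneity of the $F_i$ visible in \eqref{eq:der1}--\eqref{eq:der2} would let me rescale $\lambda$ on that component while preserving both $F_i=j_i\bar{q}$ and the open regime inequalities, yielding a one-parameter family of dual optima and contradicting Lemma \ref{lemma:unicity}. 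Hence $\hat{M}$ is negative definite, $\partial_\lambda\Phi$ is invertible, and the implicit function theorem produces a $C^\infty$ map $c\mapsto\lambda(c)$ near $c^0$ which, by local constancy of the regimes and uniqueness, agrees with the true multiplier. Then $q_i$ is either the constant $j_i\bar{q}$ (regime A) or the smooth composition $F_i(\lambda(c))$ (regime B), the flows are smooth through \eqref{eq:h} and the expression of $\gamma$, and each $q_i^j=\min((q_i-(j-1)\bar{q})^+,\bar{q})$ stays on a fixed branch and is therefore $C^\infty$. Since $c^0\in\CC^n\setminus\Set$ was arbitrary, $q\in C^\infty(\CC^n\setminus\Set)$.
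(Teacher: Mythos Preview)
Your overall architecture matches the paper exactly: Berge's maximum theorem for $C^0$; then, away from $\Set$, split the nodes into two regimes according to the first order condition \eqref{eq:FOC}, and apply the implicit function theorem to the system that pins down $\lambda$ (hence $q_i=F_i(\lambda)$) locally. Your regimes (A) and (B) are the paper's $I_B$ and $I_A$, respectively, so the submatrix whose invertibility you need is precisely the paper's $M$ of Lemma~\ref{lemma:Minvert}.

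Where you genuinely diverge is in the invertibility argument. The paper proves Lemma~\ref{lemma:Minvert} by writing out $(\partial_{\lambda_{i'}}F_i)$ explicitly, substituting $a_i=\alpha_i/\lambda_i$, and running a discrete maximum-principle: each $a_i$ is a weighted average of its neighbours' values, so on a connected component the $a_i$ are constant, and a node in $I_A$ forces that constant to zero. To guarantee $I_A\neq\emptyset$ the paper \emph{invokes an extra genericity assumption} ($\star$): one can perturb the demand so that it is impossible to satisfy all nodal constraints exactly with every $q_i$ a multiple of $\bar q$. Your alternative is cleaner: if a whole connected component sat in your regime (A) (the paper's $I_B$), the degree-zero homogeneity of the $F_i$ in $\lambda$ would let you rescale $\lambda$ on that component and still satisfy the FOC, contradicting uniqueness of the dual optimum. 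This avoids ($\star$) entirely and is worth keeping.

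There is, however, one genuine gap in your argument. You assert that the kernel of the full Hessian is, on each component, exactly the radial direction $\lambda$, ``by Euler's identity''. Euler only gives the inclusion $\lambda\in\ker$; the reverse inclusion is the substance. Concretely, from $H\tilde v=0$ you need to deduce $\tilde v\propto\lambda$ componentwise, and this requires precisely the maximum-principle step above: writing $w_i=\tilde v_i/\lambda_i$ and observing that the $i$th row of $H\tilde v=0$ reads $w_i=\sum_{i'\in V(i)}w_{i'}\,b_{i,i'}/\sum_{i'\in V(i)}b_{i,i'}$ with positive weights $b_{i,i'}=\lambda_{i'}^2/[r_{i,i'}(\lambda_i+\lambda_{i'})^3]$, whence $w$ is constant on each component. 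Once you insert this short computation, your Hessian-based proof goes through and, combined with your rescaling argument in place of ($\star$), is arguably tidier than the paper's version.
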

\begin{proof}
We postpone the proof to Appendix \ref{appendix:qc1}
\end{proof}
\subsection{Fixed point}
\label{sub:fixedpoint}
In this subsection we show that the solution of the dual problem is
the unique fixed point of a monotone operator.
We define 
\begin{equation}
\Lambda(\lambda_1,...,\lambda_n) = (\Lambda_1(\lambda_{-1}),...,\Lambda_n(\lambda_{-n})).
\end{equation}

\begin{lemma}
For any $i\in I$, $\Lambda_i$ is non-decreasing.
\end{lemma}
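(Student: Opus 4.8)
The plan is to argue by monotone comparative statics on the first-order condition that characterizes $\Lambda_i$. By Lemma~\ref{lemma:unicity}, $\Lambda_i(\lambda_{-i})$ is the \emph{unique} $\lambda_i$ solving $F_i(\lambda_i,\lambda_{-i})\in K_i(\lambda_i)$, so I would work directly with this inclusion rather than with the criterion. Three monotonicity facts are already available and suffice: $F_i$ is strictly decreasing in its first argument by \eqref{eq:der1}; $F_i$ is non-decreasing in each $\lambda_{i'}$ by \eqref{eq:der2} (strictly increasing for $i'\in V(i)$, and constant for $i'\notin V(i)$); and $K_i$, being the subdifferential of the convex penalty $\lambda_i\mapsto \bar q\sum_{j\in J}(\lambda_i-c_i^j)^+$, is a monotone correspondence, so that $a>\tilde a$ forces $y\ge\tilde y$ for every selection $y\in K_i(a)$ and $\tilde y\in K_i(\tilde a)$.

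Next I would reduce the statement to comparing two arguments $\lambda_{-i}\le\tilde\lambda_{-i}$ (componentwise) and proving $\Lambda_i(\lambda_{-i})\le\Lambda_i(\tilde\lambda_{-i})$. Writing $a=\Lambda_i(\lambda_{-i})$ and $\tilde a=\Lambda_i(\tilde\lambda_{-i})$, I would assume for contradiction that $a>\tilde a$. On one side, $F_i(a,\lambda_{-i})\in K_i(a)$ and $F_i(\tilde a,\tilde\lambda_{-i})\in K_i(\tilde a)$, so the monotonicity of $K_i$ together with $a>\tilde a$ gives $F_i(a,\lambda_{-i})\ge F_i(\tilde a,\tilde\lambda_{-i})$. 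On the other side, combining the strict decrease of $F_i$ in its first argument with its non-decrease in the remaining arguments yields the strict chain $F_i(a,\lambda_{-i})<F_i(\tilde a,\lambda_{-i})\le F_i(\tilde a,\tilde\lambda_{-i})$. These two are incompatible, hence $a\le\tilde a$, which is the claim. For coordinates $i'\notin V(i)$ the conclusion is immediate, since $F_i$ does not depend on $\lambda_{i'}$ and $\Lambda_i$ is constant there; all the content lies in the neighbours $i'\in V(i)$, already covered.

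The step I expect to require the most care is the set-valued nature of $K_i$ at the kink points $\lambda_i=c_i^j$, where $\Lambda_i$ passes between the branches of \eqref{eq:FOC}. The clean way around this is precisely to treat $K_i$ as a monotone operator rather than as a function: the inequality $F_i(a,\lambda_{-i})\ge F_i(\tilde a,\tilde\lambda_{-i})$ then holds for the very selections dictated by the first-order inclusion, with no case analysis on whether $a$ or $\tilde a$ falls in an open interval $(c_i^j,c_i^{j+1})$ or at an endpoint. A slightly more computational alternative would be to express $\Lambda_i$ through the functions $g_i^k$ and invoke the increasingness of each $g_i^k$ in $\lambda_{i'}$ established just above; but the contradiction argument is preferable because it avoids tracking which branch is active.
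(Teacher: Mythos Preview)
Your proposal is correct and follows essentially the same contradiction argument as the paper: assume $\Lambda_i(\lambda_{-i})>\Lambda_i(\tilde\lambda_{-i})$ for $\lambda_{-i}\le\tilde\lambda_{-i}$, use the monotonicity of $K_i$ to get one inequality between the two values of $F_i$, use the monotonicity of $F_i$ in each argument to get the reverse strict inequality, and conclude by contradiction. Your write-up is in fact more careful than the paper's, since you make explicit why $K_i$ is monotone (as a subdifferential) and why the set-valued kinks pose no difficulty.
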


\begin{proof}
Let  $\lambda_{-i}<\lambda_{-i}'$ and the corresponding $\Lambda_i$ and $\Lambda_i'$.
Assume $\Lambda_i >\Lambda_i'$. 
Since $F_i$ 
is  decreasing in the first variable and increasing in the second  
\begin{equation}
F_i(\Lambda_i, \lambda_{-i})  < F_i(\Lambda_i', \lambda_{-i}') 
\end{equation}
Moreover for any $x \in K(\Lambda_i')$ and $y \in K(\Lambda_i)$, $x\leq y$
and
$F_i(\Lambda_i, \lambda_{-i}) \in K(\Lambda_i)$,
$F_i(\Lambda_i', \lambda_{-i}') \in K(\Lambda_i')$.
Therefore 
$  F_i(\Lambda_i', \lambda_{-i}') \leq F_i(\Lambda_i, \lambda_{-i})$
which is absurd. 
\end{proof}

We will use the following classical result (see
\cite{topkis1998supermodularity} for a proof and  definition of 
complete lattice). 
\begin{theorem}[Knaster-Tarski fixed point]
Let $L$ be a complete lattice and let $f$ an application from $L$ to $L$ and order preserving. 
Then the set of fixed points of $f$ in $L$ is a complete lattice.
\end{theorem}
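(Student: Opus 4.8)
The plan is to prove first the classical special fact that $f$ possesses a least and a greatest fixed point, and then to bootstrap this into the full lattice statement by restricting $f$ to suitable order intervals.

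First I would introduce the set of post-fixed points $A = \{x \in L : x \le f(x)\}$, which is nonempty since it contains the bottom element $\bot_L$ of $L$. Let $u = \sup_L A$, which exists because $L$ is complete. The key observation is that $u$ is a fixed point: for every $x \in A$ one has $x \le f(x)$, and since $x \le u$ gives $f(x) \le f(u)$ by order preservation, we get $x \le f(u)$; hence $f(u)$ is an upper bound of $A$ and so $u \le f(u)$. Applying $f$ once more yields $f(u) \le f(f(u))$, so $f(u) \in A$ and therefore $f(u) \le u$, giving $f(u) = u$. Since every fixed point lies in $A$, this $u$ is in fact the greatest fixed point. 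The dual argument applied to $B = \{x \in L : f(x) \le x\}$ produces the least fixed point $\inf_L B$. In particular the set of fixed points $P$ is nonempty and has both a top and a bottom element, which already covers the suprema and infima of the empty subset.

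The second and decisive step is to show that every $S \subseteq P$ admits a supremum and an infimum inside $P$. Fixing such an $S$, I would set $a = \sup_L S$. For each $s \in S$ we have $s = f(s) \le f(a)$, so $f(a)$ dominates $S$ and hence $a \le f(a)$. Consequently $f$ maps the order interval $[a, \top_L]$ into itself: if $x \ge a$ then $f(x) \ge f(a) \ge a$. This interval is itself a complete lattice, so I would apply the first step to the restriction of $f$ to it, obtaining a least fixed point $m$. By construction $m$ is a fixed point of $f$ lying above every element of $S$, and any fixed point $p$ that is an upper bound of $S$ satisfies $p \ge a$, hence lies in $[a, \top_L]$ and dominates $m$; therefore $m = \sup_P S$. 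The existence of $\inf_P S$ follows by the dual construction using $b = \inf_L S$ and the interval $[\bot_L, b]$.

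The main obstacle is conceptual rather than computational: one must resist taking $\sup_L S$ itself as the supremum in $P$, since it need not be a fixed point. The device that overcomes this is precisely the reduction to the interval $[a, \top_L]$ together with the observation that $f$ preserves it, which recasts the computation of $\sup_P S$ as an instance of the least-fixed-point statement already established. Once this reduction is in place, checking that the resulting fixed point is genuinely the least upper bound in $P$ is a short monotonicity argument, and the dual statements follow by symmetry, so that $P$ is a complete lattice.
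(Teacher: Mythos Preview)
Your argument is correct and is the standard proof of the Knaster--Tarski theorem: first exhibit the greatest fixed point as $\sup\{x:x\le f(x)\}$ (and dually the least), then for an arbitrary $S\subseteq P$ observe that $f$ sends the interval $[\sup_L S,\top_L]$ into itself and apply the first step there to obtain $\sup_P S$. All the verifications you sketch go through as stated.

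For comparison, the paper does not give its own proof of this theorem at all: it quotes the statement as a classical result and refers the reader to Topkis' book for both the proof and the definition of a complete lattice. So there is nothing to compare your approach against; you have simply supplied what the paper chose to cite.
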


In particular, the set of fixed points is non empty.
Since $\Lambda$ is order preserving and $[c_*,c^*]^n$ is a lattice when we consider the natural order, there is a fixed point, and the set of fixed points is a lattice.
\begin{lemma}
$\lambda$ is optimal for the dual $\Leftrightarrow$ $\lambda$ is a fixed point of $\Lambda$.
\end{lemma}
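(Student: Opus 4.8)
The plan is to read the whole dual program as the maximization of a single concave function
$G(\lambda)=\sum_{i\in I}\{\lambda_i d_i-\bar q\sum_{j\in J}(\lambda_i-c_i^j)\delta_{\lambda_i\ge c_i^j}-\sum_{i'\in V(i)}\frac{(\lambda_i-\lambda_{i'})^2}{4r_{i,i'}(\lambda_i+\lambda_{i'})}\}$ over the relevant box, and then to exploit that a concave program is solved exactly at its stationary points. Concavity of $G$ is the convex--concave property already used to justify strong duality: $\lambda_i d_i$ is linear, each $-(\lambda_i-c_i^j)^+$ is concave, and each loss term is concave because $(a,b)\mapsto(a-b)^2/(a+b)$ is convex on $\{a+b>0\}$. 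Consequently $\lambda$ is optimal for the dual if and only if $0\in\partial G(\lambda)$, where I add the normal cone of the box to the subdifferential and handle it as indicated below.

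The structural observation that drives the proof is that $G$ splits as a smooth part (the quadratic loss terms, differentiable wherever $\lambda>0$) plus a coordinate-separable part $\sum_i[\lambda_i d_i-\bar q\sum_j(\lambda_i-c_i^j)^+]$ whose only nonsmoothness is internal to a single coordinate. For such a function the subdifferential factorizes, $\partial G(\lambda)=\prod_{i\in I}\partial_{\lambda_i}G(\lambda)$, so that $0\in\partial G(\lambda)$ holds if and only if $0\in\partial_{\lambda_i}G(\lambda)$ for every $i$ separately. Because the constraints $\lambda\ge 0$ and $\lambda\in[\min_i c_i^1,\max_i c_i^N]^n$ decouple across coordinates, adding their normal cones preserves the factorization; and since the very definition of $\Lambda_i$ already builds in the relevant one-sided bound ($g_i^k(\lambda_{-i})>0$ and the working range), no coordinate constraint is active in a way not captured by $\Lambda_i$.

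The heart of the argument, and the step I expect to be most delicate, is to identify the per-coordinate stationarity $0\in\partial_{\lambda_i}G(\lambda)$ with the fixed-point equation $\lambda_i=\Lambda_i(\lambda_{-i})$. When differentiating $G$ in $\lambda_i$ one must remember that the loss of each edge $\{i,i'\}$ enters the objective twice --- once in node $i$'s bracket and once in node $i'$'s --- so the $\lambda_i$-derivative of the loss terms doubles, and a direct computation collapses $\partial_{\lambda_i}G(\lambda)$ exactly to the set $F_i(\lambda_i,\lambda_{-i})-K_i(\lambda_i)$, the set-valued $K_i$ supplying the interval at the kinks $\lambda_i=c_i^j$. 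Thus $0\in\partial_{\lambda_i}G(\lambda)$ reads $F_i(\lambda_i,\lambda_{-i})\in K_i(\lambda_i)$, which is precisely the first-order condition \eqref{eq:FOC}; by Lemma \ref{lemma:unicity} its unique solution in $\lambda_i$ is $\Lambda_i(\lambda_{-i})$. Getting this identity exactly right --- the bookkeeping of the doubled edge terms and the matching of the subdifferential of $(\lambda_i-c_i^j)^+$ with $K_i$ at the transition points of $\mathcal S$ --- is the main obstacle.

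Putting the pieces together yields both implications at once. If $\lambda$ is optimal, then $0\in\partial_{\lambda_i}G(\lambda)$ for all $i$, hence $\lambda_i=\Lambda_i(\lambda_{-i})$ for all $i$, i.e.\ $\Lambda(\lambda)=\lambda$. Conversely, if $\Lambda(\lambda)=\lambda$ then every coordinate satisfies $F_i\in K_i$, so $0\in\partial_{\lambda_i}G(\lambda)$ for all $i$; by the factorization $0\in\partial G(\lambda)$, and concavity promotes this stationary point to a global maximizer of the dual. This closes the equivalence and, combined with the preceding lemma and Knaster--Tarski, shows the dual optimum is the unique such fixed point.
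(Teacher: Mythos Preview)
Your argument is correct and is essentially the paper's own proof spelled out in more detail: the paper simply observes that $\Lambda_i$ is by definition the coordinate-wise maximizer of the (strictly) concave dual objective, so dual optimality and coordinate-wise optimality coincide, which is exactly the fixed-point condition $\lambda=\Lambda(\lambda)$. Your careful bookkeeping of the doubled edge terms is the right way to reconcile the full dual's $\lambda_i$-derivative with $F_i$ and hence with the first-order condition \eqref{eq:FOC}.
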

\begin{proof}
\begin{itemize}
\item If $\lambda$ is optimal for the dual, then each component $i$ maximizes
the criteria \eqref{eq:criteria}, thus $\lambda$   is a fixed point of $\Lambda$.
\item If $\lambda$ is a fixed point of $\Lambda$, then by definition,
  each component $i$ maximizes the criteria \eqref{eq:criteria}. Hence since the problem is (strictly) concave,
$\lambda$ is optimal.
\end{itemize}
\end{proof}

A consequence of the previous lemma is that
\begin{lemma}
The set of fixed points of $\Lambda$ is a singleton.
\end{lemma}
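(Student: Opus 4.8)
The plan is to reduce the statement to a uniqueness statement for the dual problem, for which the convex-analytic and lattice structures already established can be combined. By the previous lemma the fixed points of $\Lambda$ are exactly the maximizers of the (concave) dual problem, so the set $M$ of fixed points is convex; by the Knaster--Tarski theorem it is also a complete lattice, and in particular it has a least element $\underline{\lambda}$ and a greatest element $\bar{\lambda}$, both of which are fixed points, hence dual optima. It therefore suffices to prove $\underline{\lambda}=\bar{\lambda}$.

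First I would exploit that both endpoints are maximizers of a concave objective, so the whole segment $[\underline{\lambda},\bar{\lambda}]$ consists of maximizers and the dual objective is \emph{affine} along $v:=\bar{\lambda}-\underline{\lambda}$. Writing the objective as the sum of the linear terms $\lambda_i d_i$, the concave penalties $-\bar{q}\sum_j(\lambda_i-c_i^j)^+$, and the coupling terms $-\tfrac{(\lambda_i-\lambda_{i'})^2}{4r_{i,i'}(\lambda_i+\lambda_{i'})}$, each summand must itself be affine along $v$. The linear terms are automatically affine; the key computation is that the coupling term is homogeneous of degree one, hence strictly convex in every direction \emph{except} the radial one $(\lambda_i,\lambda_{i'})\mapsto(t\lambda_i,t\lambda_{i'})$. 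Requiring it to be affine along $v$ therefore forces $v_i/\underline{\lambda}_i=v_{i'}/\underline{\lambda}_{i'}$ across every edge, and since the network is connected (distinct components being treated separately) this yields $\bar{\lambda}=(1+\kappa)\underline{\lambda}$ for a single scalar $\kappa\ge 0$.

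It then remains to show $\kappa=0$. Here the crucial structural observation is that the same homogeneity makes both $F_i$ and the flow expression $h_{i,i'}$ of \eqref{eq:h} invariant under $\lambda\mapsto t\lambda$, so the two dual optima $\underline{\lambda}$ and $\bar{\lambda}$ induce \emph{the same} nodal production levels $q_i=F_i(\lambda)$ and the same flows $h$. At any node whose production contains a strictly interior piece $q_i^j\in(0,\bar{q})$, Lemma \ref{lemma:unicity} pins the multiplier to $\lambda_i=c_i^j\ge c_*>0$ at both optima; comparing with $\bar{\lambda}_i=(1+\kappa)\underline{\lambda}_i$ gives $\kappa\,c_i^j=0$, whence $\kappa=0$ and $\underline{\lambda}=\bar{\lambda}$.

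The main obstacle is precisely this homogeneity of the loss terms: it is what prevents the dual objective from being strictly concave, so the naive ``strictly concave $\Rightarrow$ unique maximizer'' argument fails along the radial direction, and it is also what must ultimately be broken. The delicate case is the one in which \emph{every} node produces an exact multiple of $\bar{q}$, so that no interior piece is available to pin the scale; there one has to invoke the cost-breakpoint structure, namely that the terms $(\lambda_i-c_i^j)^+$ are the only non-homogeneous part of the objective, to rule out a nontrivial rescaling of an optimal $\lambda$. I expect this fully-saturated configuration to require the most care.
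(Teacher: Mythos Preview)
Your argument is correct and, in fact, more careful than the paper's. The paper treats the lemma as an immediate corollary of the equivalence ``fixed points of $\Lambda$ $\Leftrightarrow$ dual optima'' together with uniqueness of the dual optimum, but it never justifies the latter; implicitly it seems to rely on strict concavity of the dual, which, as you correctly diagnose, fails along the radial direction because the loss term $(\lambda_i-\lambda_{i'})^2/(\lambda_i+\lambda_{i'})$ is homogeneous of degree one. Your route---force the dual objective to be affine on the segment between two optima, use the edge-wise convexity of the loss terms to conclude the direction must be radial on each connected component, then pin the scale via a node with an interior piece $q_i^j\in(0,\bar q)$---is the right way to close this gap.

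Two remarks. First, the Knaster--Tarski detour is unnecessary: since the fixed points coincide with the maximizers of a concave function, the set is convex and you can run the argument with \emph{any} two distinct maximizers; the least/greatest fixed points play no special role. Second, the ``fully-saturated'' case you flag (every $q_i$ an integer multiple of $\bar q$) is not a genuine obstacle here: the paper rules it out by the genericity assumption~($\star$) stated in the appendix (proof of Lemma~\ref{lemma:Minvert}), namely that after an innocuous perturbation of demands it is impossible to meet all nodal constraints with each $q_i\in\bar q\,\mathbb{N}$. Under~($\star$) there is always at least one node with $q_i\notin\bar q\,\mathbb{N}$, so your scale-pinning step applies and $\kappa=0$. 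With that assumption invoked, your proof is complete.
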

\begin{definition}[Continuous for monotone sequence]
We consider the natural partial order on $\R^n$.
We say that a function $G$ is continuous for monotone
(resp. increasing, decreasing) sequences if for
any monotone (resp. increasing, decreasing) sequence $x_n$ converging to a point $x$ in the domain of $G$,
$G(x_n)$ goes to $G(x)$ as $n$ goes to infinity. 
\end{definition}

Clearly, a function is continuous for monotone sequences if and only
if it is continuous for  increasing and decreasing sequences.

\begin{lemma}
The operator $\Lambda$ is continuous for monotone sequences.
\end{lemma}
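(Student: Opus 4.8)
The plan is to reduce the statement to the continuity along monotone sequences of each scalar coordinate map $\Lambda_i$, and then to pass to the limit in the first order condition of Lemma~\ref{lemma:unicity}, using the continuity of $F_i$ together with the closedness of the graph of the monotone operator $K_i$. First I would observe that the $i$-th coordinate of $\Lambda(\lambda)$ is $\Lambda_i(\lambda_{-i})$, and that the projection $\lambda\mapsto\lambda_{-i}$ sends a monotone sequence of $\R^n$ to a monotone sequence of $\R^{n-1}$; hence it suffices to show each $\Lambda_i$ is continuous for monotone sequences. So I fix $i$ and take a monotone sequence $\lambda_{-i}^{(k)}\to\lambda_{-i}$, say non-decreasing without loss of generality. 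Since $\Lambda_i$ is non-decreasing (the monotonicity lemma just proved) and takes values in the compact interval $[\min_i c_i^1,\max_i c_i^N]$, the image sequence $\mu_k:=\Lambda_i(\lambda_{-i}^{(k)})$ is itself monotone and bounded, so it converges to some limit $\ell$. The whole task is then to identify $\ell$ with $\Lambda_i(\lambda_{-i})$.

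To do this I would pass to the limit in the characterization of $\mu_k$. By Lemma~\ref{lemma:unicity} we have $F_i(\mu_k,\lambda_{-i}^{(k)})\in K_i(\mu_k)$ for every $k$. Writing $y_k:=F_i(\mu_k,\lambda_{-i}^{(k)})$, the explicit formula for $F_i$ shows it is continuous — indeed smooth, the denominators $\lambda_i+\lambda_{i'}$ being bounded away from zero on the domain of positive prices — so $y_k\to F_i(\ell,\lambda_{-i})$. The point is that $K_i$ is the subdifferential of the convex piecewise linear function $\lambda_i\mapsto \bar{q}\sum_{j\in J}(\lambda_i-c_i^j)^+$, so its graph is closed; from $y_k\in K_i(\mu_k)$, $\mu_k\to\ell$ and $y_k\to F_i(\ell,\lambda_{-i})$ I conclude $F_i(\ell,\lambda_{-i})\in K_i(\ell)$. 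The uniqueness part of Lemma~\ref{lemma:unicity} then forces $\ell=\Lambda_i(\lambda_{-i})$, which is exactly the convergence $\Lambda_i(\lambda_{-i}^{(k)})\to\Lambda_i(\lambda_{-i})$ we wanted.

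The one delicate step is the passage to the limit in the inclusion $y_k\in K_i(\mu_k)$, since $K_i$ is set-valued and jumps precisely at the breakpoints $c_i^j$, so a naive single-valued continuity argument fails there. The clean way around it is the closed-graph (outer semicontinuity) property of the monotone operator $K_i$, which I would justify either by exhibiting $K_i$ as the subdifferential above, or directly by noting that at each kink $c_i^j$ the value $[(j-1)\bar{q},j\bar{q}]$ bridges the two adjacent constant levels, so no accumulation point of the graph can escape it. Everything else — extraction of the monotone bounded limit $\ell$, the boundedness of the domain, and the final identification via uniqueness — is routine once the monotonicity of $\Lambda_i$ and Lemma~\ref{lemma:unicity} are in hand.
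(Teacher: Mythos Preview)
Your proof is correct and considerably more streamlined than the paper's. The paper proceeds by a case analysis on the position of $\Lambda_i(\bar\lambda_{-i})$: when it lies strictly between two breakpoints $c_i^j$ and $c_i^{j+1}$, the implicit function theorem applied to $F_i(\psi(\lambda_{-i}),\lambda_{-i})=j\bar q$ gives local continuity; when it sits at a breakpoint $c_i^j$ with $F_i$ strictly inside $](j{-}1)\bar q,j\bar q[$, $\Lambda_i$ is locally constant; and the corner situations where both coincide are treated separately for increasing and for decreasing sequences. You bypass this entirely by passing to the limit in the inclusion $F_i(\mu_k,\lambda_{-i}^{(k)})\in K_i(\mu_k)$, using the closed graph of the maximal monotone multifunction $K_i=\partial\big(\bar q\sum_j(\cdot-c_i^j)^+\big)$ and then the uniqueness in Lemma~\ref{lemma:unicity}. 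This is close in spirit to the Berge maximum theorem route that the paper itself flags as an alternative, kept aside only for pedagogical reasons; what the paper's longer argument buys is an explicit local description of $\Lambda_i$ (constant or implicitly defined) that is reused in the appendix when studying the regularity of $q$. As a side remark, your argument actually delivers full sequential continuity of $\Lambda_i$, not just continuity along monotone sequences: the monotonicity of $\lambda_{-i}^{(k)}$ is used only to avoid a subsequence extraction for $\mu_k$, and compactness of the range would serve equally well.
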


The intuition of the proof is that we can use the monotony of the
sequence and Lemma \ref{lemma:unicity} to characterize the behaviour of
$\Lambda$ on the neighborhood.  We find that $\Lambda$ is either 
constant or characterized by 
the implicit function theorem.
\begin{proof}
\newcommand{\lambi}{\bar{\lambda}_{-i}}
\newcommand{\La}{\Lambda}
Let $\bar{\lambda}_{-i}$, $j\in[1\ldots N]$, we first deal with the 'nice' case, that
corresponds to $F_i(\La (\lambi ),\lambi) \in
]j-1,j[\bar{q}$

\begin{itemize}
\item If $\La_i (\lambi)\in ]c_i^j,c_i^{j+1}[$ (we do not treat the case
$j=N$, which is very similar to what follows)
then since $F_i$ is $C^{\infty}$ and of invertible   derivative (non
zero) in $\lambda_i$,
the implicit function theorem 
 tells us that the solution $\psi$ of $F_i(\psi(\lambi),\lambi)=
 j\bar{q}$ is continous in a neighborhood $B$ of $\lambi$.
Thus we can make $B$ small enough so that for $\lambda_{-i}\in B$,
$\psi (\lambda_{-i}) \in ]c_{i}^j,c_{i}^{j+1}[$. On this neighborhood,
$\psi$ satisfies the first order conditions and so  by unicity of the
solution of the optimization problem, since those conditions are
sufficient,  $\psi = \La_i$ on $B$.
Therefore $\La_i$ is continous  at $\lambi$.

\item If $\La_i (\lambi) = c_{i}^j$
(as before, we do not treat the case
$j=N$),
then by Lemma \ref{lemma:unicity} $F_i(\La_i (\lambi),\lambi) = [j-1,j]\bar{q}$, if $F_i\in
]j-1,j[\bar{q}$ (we  deal with the border case in the next point)
then since $F_i$ is continuous, there is a neighborhood $B$ of $\lambi$ such that  
$F_i(\La_i (\lambi),\lambda_{-i}) \in ]j-1,j[\bar{q}$, so on $B$ $\La_i$ is
constant and therefore continuous.

\item We proceed with the borders.
If $F_i(\La_i (\lambi), \lambi) = (j-1) \bar{q}$ and $\La_i (\lambi)
  = c_{i}^j$.

\begin{itemize}
\item Decreasing case:
Let us take $\epsilon\in \R^{n-1}_+$ such that $F_i(\La_i (\lambi), \lambi
+\epsilon) \in [j-1,j]\bar{q}$ ($F_i$ is continuous and increasing in $\lambda_{-i}$).
Then $\La_i(\lambi +\epsilon) = \La_i (\lambi)$ checks the first order
condition so $\La$ is constant, and we get the continuity for decreasing
sequences.

\item Increasing case: 
$F_i(\La_i (\lambi), \lambi) = (j-1) \bar{q}$  hence there exists a
ball $B$ such that  the implicit function theorem applies and
there exists $\psi$ such that $F_i(\psi(\lambi-\epsilon),
\lambi-\epsilon) = (j-1) \bar{q}$ and $\psi(\lambi)= \La_i (\lambi)=c_{i}^{j}$
(remember that $\La_i (\lambi) = c_{i}^{j}$ by hypothesis) .
Since $F_i$ is 
increasing in the second variable and decreasing in the first, $\psi$
is increasing. 
For $\epsilon$ of positive components and
sufficiently small, $\psi(\lambi-\epsilon) \in ]c_{i}^{j-1},c_{i}^{j}[$
(since $\psi(\lambi)= \La_i (\lambi)=c_{i}^j$) and  check the first
order condition.
Therefore  for $\epsilon$ of positive components and
sufficiently small, $\psi = \La_i$ by uniqueness of the solution.
Thus $\La_i$ is continuous for increasing sequence.
\end{itemize}

\item We do the same analysis    if $F_i(\La_i (\lambi), \lambi) = j \bar{q}$ and $\La_i (\lambi)
  = c_{i}^j$.
\end{itemize}
The conclusion follows.
\end{proof}

We could have alternatively used the Berge Maximum theorem for strictly
concave criterion to get the continuity of $\Lambda$.
Yet, we chose to present this proof for pedagogical reasons because it
contains some key ideas (see appendix). 

\begin{theorem}
The sequence $(\Lambda^k(c_1^N...c_n^N))_{k\in\N} $ converges to the solution of the dual.
\end{theorem}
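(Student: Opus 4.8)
The plan is to show that the iteration, when started from the point $\lambda^0=(c_1^N,\ldots,c_n^N)$, produces a componentwise non-increasing sequence that is bounded below, hence convergent, and then to identify its limit with the unique fixed point of $\Lambda$ using the continuity and uniqueness results established above. Throughout, $\leq$ denotes the natural (componentwise) partial order on $\R^n$, and I note that $\lambda^0$ lies in the admissible box $[\min_i c_i^1,\max_i c_i^N]^n$ since the cost tuples are non-decreasing, so $c_i^1\le c_i^N$ for each $i$.

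The crucial first step is to verify that $\Lambda(\lambda^0)\leq \lambda^0$, i.e. that $\Lambda_i(\lambda^0_{-i})\leq c_i^N$ for every $i\in I$. By Lemma \ref{lemma:unicity}, $\Lambda_i(\lambda_{-i})$ is the unique solution of $F_i(\Lambda_i,\lambda_{-i})\in K_i(\Lambda_i)$, and for any $\lambda_i>c_i^N$ one has $K_i(\lambda_i)=\{N\bar{q}\}$. Since we assumed $N$ large enough that the production at each node stays strictly below $N\bar{q}$, we have $F_i<N\bar{q}$, so the required equality $F_i(\Lambda_i,\lambda_{-i})=N\bar{q}$ cannot hold; hence $\Lambda_i(\lambda_{-i})\leq c_i^N$ for every admissible $\lambda_{-i}$, and in particular $\Lambda(\lambda^0)\leq\lambda^0$. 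Because $\Lambda$ is order preserving (the monotonicity lemma proved above), applying it repeatedly and arguing by induction gives $\Lambda^{k+1}(\lambda^0)\leq \Lambda^{k}(\lambda^0)$ for all $k\in\N$, so the sequence $(\Lambda^k(\lambda^0))_k$ is non-increasing.

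This sequence is bounded below (each coordinate remains in $[\min_i c_i^1,\max_i c_i^N]$), so being monotone and bounded it converges in $\R^n$ to some $\lambda^\infty$. As $(\Lambda^k(\lambda^0))_k$ is a decreasing sequence converging to $\lambda^\infty$, the lemma asserting that $\Lambda$ is continuous for monotone sequences yields $\Lambda^{k+1}(\lambda^0)=\Lambda(\Lambda^k(\lambda^0))\to\Lambda(\lambda^\infty)$; since the left-hand side also converges to $\lambda^\infty$, we obtain $\Lambda(\lambda^\infty)=\lambda^\infty$, so $\lambda^\infty$ is a fixed point of $\Lambda$. By the uniqueness of the fixed point and the equivalence between fixed points of $\Lambda$ and optima of the dual, both established above, $\lambda^\infty$ is exactly the solution of the dual, which is the desired conclusion. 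The main obstacle is the first step: the whole argument hinges on $\lambda^0$ sitting above the fixed point so that monotonicity forces a decreasing trajectory, and this in turn rests on the production bound $F_i<N\bar{q}$ that caps $\Lambda_i$ at $c_i^N$.
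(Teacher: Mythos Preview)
Your proof is correct and follows essentially the same route as the paper's: establish $\Lambda(\lambda^0)\le\lambda^0$, use order preservation to get a non-increasing bounded sequence, and invoke continuity for monotone sequences plus uniqueness of the fixed point. You in fact give more detail than the paper on the first step, justifying $\Lambda_i(\lambda_{-i})\le c_i^N$ via the production bound $F_i<N\bar{q}$, whereas the paper simply asserts the inequality.
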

\begin{proof}
Since $\Lambda(c_1^N...c_n^N) \leq (c_1^N...c_n^N)$, and since
$\Lambda$ is order preserving,  the sequence $\Lambda^k(c_1^N...c_n^N) =\lambda^k$
is non increasing and bounded, therefore it converge to a point $x$. Since $\Lambda$
is continuous for monotone sequence, 
$x$ is a fixed point.
\end{proof}

\begin{theorem}
For any $i\in I$, $\lambda_{-i}\in [c_*,c^*]^{n-1}$, 
$\Lambda_i(\lambda_{-i})$ has the  following  explicite expression:
\begin{equation}
\label{eq:G}
\Lambda_i(\lambda_{-i})=\min \{ c_i^N, 
\min_{j\in J} \{c_i^j 1_{F_i( c_i^j,\lambda_{-i})< j\bar{q}}\},
\min_{k\in[0..N-1]}\{ g_i^k(\lambda_{-i}) 1_{g_i^k(\lambda_{-i})\in[c_i^k,c_i^{k+1}]}\}\}
\end{equation}
\end{theorem}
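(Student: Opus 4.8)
The plan is to solve the first order inclusion $F_i(\Lambda_i,\lambda_{-i})\in K_i(\Lambda_i)$ of Lemma~\ref{lemma:unicity} explicitly, exploiting that $x\mapsto F_i(x,\lambda_{-i})$ is continuous and strictly decreasing (by \eqref{eq:der1}) while $x\mapsto K_i(x)$ is a non-decreasing staircase correspondence running from $0$ to $N\bar{q}$. Throughout I read the right-hand side of \eqref{eq:G} as the minimum taken only over those candidate values whose indicator condition holds (a false indicator discards its term), the three families being the cap $c_i^N$, the breakpoints $c_i^j$ (the ``risers'' of $K_i$), and the solutions $g_i^k(\lambda_{-i})$ of $F_i(\cdot,\lambda_{-i})=k\bar{q}$ (the ``treads''). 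I first record that the assumption that $N$ is large forces $F_i(\Lambda_i,\lambda_{-i})<N\bar{q}$, so the solution never sits on the top tread $(c_i^N,c^*)$ and hence $\Lambda_i\le c_i^N$; this is what the cap encodes and what guarantees the minimand is nonempty. Writing $\phi:=F_i(\cdot,\lambda_{-i})$ and $x^\star:=\Lambda_i(\lambda_{-i})$, the argument then splits into two halves.

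First I would show that $x^\star$ is always one of the passing candidates. If $\phi(x^\star)=k\bar{q}$ is a multiple of $\bar{q}$, then $x^\star=g_i^k(\lambda_{-i})$ by injectivity of $\phi$ (and $x^\star\ge c_*>0$), and testing the membership $\phi(x^\star)\in K_i(x^\star)$ at the three possible positions of $x^\star$ (interior of a tread, or the breakpoint $c_i^k$, or the breakpoint $c_i^{k+1}$) shows in each case that $x^\star\in[c_i^k,c_i^{k+1}]$ with $k\le N-1$, so the tread test is met. If instead $\phi(x^\star)$ is strictly between two consecutive multiples of $\bar{q}$, then $x^\star$ cannot lie in the interior of any tread (there $\phi$ equals a multiple), so $x^\star=c_i^{j^\star}$ is a breakpoint with $\phi(c_i^{j^\star})\in((j^\star-1)\bar{q},j^\star\bar{q})$, and the strict riser test $\phi(c_i^{j^\star})<j^\star\bar{q}$ holds. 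The closed interval in the tread test and the strict inequality in the riser test are chosen precisely so that the two boundary values $\phi(x^\star)=(j^\star-1)\bar{q}$ and $\phi(x^\star)=j^\star\bar{q}$ are absorbed by the tread family; this boundary bookkeeping is the step I expect to be the most delicate, since it is what makes the two one-sided indicators mesh into an exhaustive test.

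Second I would show that every passing candidate is $\ge x^\star$, so that the minimum is attained exactly at $x^\star$. For a passing tread $g_i^k\in[c_i^k,c_i^{k+1}]$ with $\phi(g_i^k)=k\bar{q}$, evaluating $K_i$ at $g_i^k$ (again splitting into interior/left/right positions) gives $\phi(g_i^k)\in K_i(g_i^k)$, whence $g_i^k=x^\star$ by the uniqueness in Lemma~\ref{lemma:unicity}; thus the tread family contributes only the value $x^\star$. For a passing riser $c_i^j$ with $\phi(c_i^j)<j\bar{q}$, suppose $c_i^j<x^\star$: strict monotonicity of $\phi$ gives $\phi(c_i^j)>\phi(x^\star)$, and $c_i^j<x^\star$ together with the location of $x^\star$ established above forces $j\bar{q}\le\phi(x^\star)$, whence $\phi(c_i^j)>j\bar{q}$, contradicting the test; hence $c_i^j\ge x^\star$. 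Finally $c_i^N\ge x^\star$ was noted above. Combining the two halves, the set of passing candidates is nonempty, bounded below by $x^\star$, and contains $x^\star$, so its minimum equals $x^\star=\Lambda_i(\lambda_{-i})$, which is exactly \eqref{eq:G}.
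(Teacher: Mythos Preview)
Your argument is correct and uses the same key ingredient as the paper---the uniqueness statement of Lemma~\ref{lemma:unicity}---but the two proofs run in opposite directions. The paper sets $G_i$ equal to the right-hand side of \eqref{eq:G} and checks directly that $F_i(G_i,\lambda_{-i})\in K_i(G_i)$: when $G_i$ lands on a breakpoint $c_i^j$ the delicate step is to show $F_i(c_i^j,\lambda_{-i})\ge(j-1)\bar q$, which is done by contradiction using the minimality of $G_i$ (a smaller riser $c_i^{j-1}$ or a tread $g_i^{j-1}$ would otherwise have passed). You instead start from $x^\star=\Lambda_i$, certify it as a passing candidate, and then bound every other passing candidate from below by $x^\star$; your contradiction for a riser $c_i^j<x^\star$ uses only the monotonicity of $\phi$ and of the correspondence $K_i$, not the combinatorics of the minimum. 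A pleasant by-product of your route is that you show every passing tread candidate actually \emph{equals} $x^\star$ (not merely dominates it), which explains why the formula never produces spurious strictly smaller values. One phrasing to tighten: ``there $\phi$ equals a multiple'' should read ``there the inclusion forces $\phi(x^\star)$ to equal a multiple of $\bar q$''; as written it could be misread as $\phi$ being constant on the open intervals $(c_i^\ell,c_i^{\ell+1})$.
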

\begin{proof}
We denote by  $G_i$ the RHS of \eqref{eq:G} and show that 
for any $i$
\begin{eqnarray}
F_i(G_i(\lambda_{-i}), \lambda_{-i}) \in K(G(\lambda_{-i})),
\end{eqnarray} 
and then we conclude with a uniqueness argument.

If there is $j\in J$ such that  $G_i(\lambda_{-i}) = c_i^j$, then 
either $F_i(c_i^j,\lambda_{-i})<j\bar{q}$ or $g_i^j(\lambda_{-i}) =
c_i^{j}$. This last possibility implies by definition of $g_i^j$ that $F_i(c_i^j,\lambda_{-i}) = j\bar{q}$.
Thus   $F_i(c_i^j,\lambda_{-i}) \leq j \bar{q}$. 
Remember that $K(G(\lambda_{-i})) = [j-1,j]\bar{q}$.
So we need to prove that $F_i(c_i^j,\lambda_{-i}) \geq (j-1) \bar{q}$.
Suppose the contrary, i.e. $F_i(c_i^j,\lambda_{-i}) < (j-1) \bar{q}$.
Then since $G_i(\lambda_{-i}) = c_i^j$, $F(c_i^j,\lambda_{-i})
<(j-1)\bar{q}$, which in turn implies that 
\begin{equation}
\label{eq:g1}
g_i^j(\lambda_{-i}) < c_i^j.
\end{equation}
Now observe that  since  $G_i(\lambda_{-i}) = c_i^j$,
$F(c_i^{j-1},\lambda_{-i}) > (j-1) \bar{q}$, which implies that 
\begin{equation}
\label{eq:g2}
g_i^j(\lambda_{-i}) >c_i^{j-1}.
\end{equation}
Combining \eqref{eq:g1} and \eqref{eq:g2} with the definition of $G$,
we see that $G(\lambda_{-i}) \leq g_i^j(\lambda_{-i})$. But
$G(\lambda_{-i})=c_i^j$ and $g_i^j(\lambda_{-i}) < c_i^j$, which is
absurd.
Therefore $F_i(c_i^j,\lambda_{-i}) \geq (j-1) \bar{q}$.

Otherwise,  let us assume that there is not such $j$. 
Then there is $k\in [0\ldots N-1]$ such that $G_i(\lambda_{-i}) =
g_i^k(\lambda_{-i})$. By definition of $g_i^k$, $F_i(G_i(\lambda_{-i})
,\lambda_{-i}) )= k\bar{q}$ and by definition of $G$,
$G_i(\lambda_{-i})\in [c_i^k,c_i^{k+1}]$.
So  again $ F_i(G_i(\lambda_{-i}), \lambda_{-i})) \in K(G_i(\lambda_{-i}))$.
We can now conclude that $\Lambda = G$.
\end{proof}

We can interpret the fixed point algorithm as if some benevolent agents
situated at each node of the network 
were exchanging information.
They collectively try to minimize the total cost and, to do so, they
communicate their current marginal costs. This marginal cost is the
minimum of their local marginal cost and the marginal cost of
importation from the  adjacent nodes.
At each iteration, the agents compute how much they are going to
produce based on their current marginal cost.
They then update their marginal cost based on the
information they just received and transmit this marginal cost to the
adjacent  nodes.
We point out that the information used by each agent is local.

\subsection{Decreasing Rate}
We derive in this section  an estimate for the decreasing rate. 
We denote 
$\alpha = \max_{(e,e')\in E^2} r_{e}/r_{e'}$.
We have the following bound:
\begin{lemma}
For any $(i,i', k,\lambda_{-i})\in E \times [0,N] \times [c_*,c^*]^{n-1}$, 
\begin{equation}
\partial_{\lambda_{i}}g_{i'}^k(\lambda_{-i})  \geq \frac{1}{N\alpha}(\frac{c_*}{c^*})^5. 
\end{equation}
\end{lemma}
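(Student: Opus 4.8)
The plan is to differentiate the defining identity $F_{i'}(g_{i'}^k(\lambda_{-i'}),\lambda_{-i'})=k\bar{q}$ by the implicit function theorem, exactly as in \eqref{eq:implicite}, and then to bound the resulting quotient crudely, summand by summand, using that every multiplier (and the value of $g_{i'}^k$ itself) lies in $[c_*,c^*]$. Since $(i,i')\in E$ we have $i\in V(i')$, so $\lambda_i$ is one of the arguments of $g_{i'}^k$ and the partial derivative is well defined.

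First I would write $x=g_{i'}^k(\lambda_{-i'})$ for brevity, so that $x$ occupies the role of $\lambda_{i'}$, and substitute the explicit partial derivatives \eqref{eq:der1} and \eqref{eq:der2} of $F_{i'}$ evaluated at this point. The common factor $4$ cancels and one gets
\begin{equation}
\partial_{\lambda_i} g_{i'}^k(\lambda_{-i'}) = -\frac{\partial_{\lambda_i}F_{i'}}{\partial_{\lambda_{i'}}F_{i'}} = \frac{\frac{1}{r_{i',i}}\frac{\lambda_i x}{(x+\lambda_i)^3}}{\sum_{i''\in V(i')}\frac{1}{r_{i',i''}}\frac{\lambda_{i''}^2}{(x+\lambda_{i''})^3}}.
\end{equation}
This is positive (as the already-proven monotonicity demands), so it suffices to bound it from below by bounding the numerator below and the denominator above. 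For the numerator, $\lambda_i,x\ge c_*$ and $x+\lambda_i\le 2c^*$ give $\frac{1}{r_{i',i}}\frac{\lambda_i x}{(x+\lambda_i)^3}\ge \frac{1}{r_{i',i}}\frac{c_*^2}{8c^{*3}}$. For the denominator, I would first pull out $\frac{1}{r_{i',i}}$ and use that $\frac{r_{i',i}}{r_{i',i''}}\le \alpha$ for every pair of edges sharing the node $i'$, so each coefficient satisfies $\frac{1}{r_{i',i''}}\le \frac{\alpha}{r_{i',i}}$; then $\lambda_{i''}\le c^*$ and $x+\lambda_{i''}\ge 2c_*$ bound each summand by $\frac{\alpha}{r_{i',i}}\frac{c^{*2}}{8c_*^3}$, and there are at most $N$ such terms. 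Forming the quotient, the factor $r_{i',i}$ and the factors of $8$ cancel, leaving $\frac{c_*^5}{N\alpha\, c^{*5}}=\frac{1}{N\alpha}\left(\frac{c_*}{c^*}\right)^5$, which is the claimed bound.

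The two delicate points, rather than the arithmetic, are where the real care is needed. The hard part is justifying that $x=g_{i'}^k(\lambda_{-i'})$ itself lies in $[c_*,c^*]$, which is used for both $x\ge c_*$ in the numerator and $x+\lambda_{i''}\ge 2c_*$ in the denominator; I would obtain this from the regime in which $g_{i'}^k$ is actually relevant, namely that its values feed into $\Lambda$ only through the indicator $g_{i'}^k\in[c_{i'}^k,c_{i'}^{k+1}]\subseteq[c_*,c^*]$ appearing in \eqref{eq:G}, so it is exactly on this set that the estimate is required. The second point is the count: the number of summands equals the degree $|V(i')|$, which I bound by $N$ under the standing convention that $N$ is taken large enough, and it is precisely this count that produces the factor $\tfrac{1}{N}$ in the statement.
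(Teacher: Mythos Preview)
Your argument is exactly the paper's: apply the implicit function formula \eqref{eq:implicite}, substitute the explicit partial derivatives \eqref{eq:der1}--\eqref{eq:der2}, and bound the resulting quotient term by term using $\lambda_{i''},x\in[c_*,c^*]$ and $r_{i',i}/r_{i',i''}\le\alpha$. You are in fact more careful than the paper's one-line proof, since you flag both genuine subtleties it leaves implicit: that the bound on $x=g_{i'}^k$ is only needed (and only guaranteed) on the set where the indicator in \eqref{eq:G} is active, and that the factor $N$ arises from bounding the degree $|V(i')|$, which requires reading the paper's ``$N$ big enough'' convention as covering this (the intended constant here is almost certainly $n$ rather than $N$).
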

\begin{proof}
We combine \eqref{eq:implicite} with \eqref{eq:der1} and \eqref{eq:der2}.
\end{proof}
\begin{lemma}
Since $(\lambda^k_i)_{k\in\N}$ is non-increasing for all $i\in I$, there is a finite number of $k$ for which
at least 
one coordinate $\lambda_i^k$ satisfies
\begin{equation}
 \lambda_i^k >c_i^q \quad \mbox{and} \quad \lambda_i^{k+1}
\leq c_i^q
\end{equation}
 or  
\begin{equation}
\lambda_i^k = c_i^q \quad \mbox{and} \quad \lambda_i^{k+1}
< c_i^q.
\end{equation}
  We denote by $\K$ this set.
Let  $(k_1, k_2)\in \N^2$ such that $[k_1-1, k_2+1] \cap \K =\emptyset$.
Then for $k\in [k_1,k_2]$ and $i\in I$ such that $\lambda^{k-1}_i \neq
 \lambda^{k}_i $
\begin{eqnarray}
\lambda^{k}_i -\lambda^{k+1}_i 
\geq \frac{1}{N\alpha}(\frac{c_*}{c^*})^5 \quad \max_{i'\in V(i)} ( \lambda^{k-1}_{i'} - \lambda^{k}_{i'} )
\end{eqnarray}
\end{lemma}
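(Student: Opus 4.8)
The plan is to prove the two assertions separately, starting with the finiteness of $\K$. Here I would argue that each coordinate sequence $(\lambda_i^k)_k$ is non-increasing and valued in $[c_*,c^*]$, while the thresholds $c_i^q$, $(i,q)\in I\times J$, are finite in number (at most $nN$ of them). A non-increasing sequence can trigger either of the two listed events at a fixed threshold $c_i^q$ at most once: after the step at which $\lambda_i^k$ first drops to or below $c_i^q$, monotonicity forbids it from ever being strictly above $c_i^q$ again, so no later step can satisfy $\lambda_i^k>c_i^q$, and once it is strictly below $c_i^q$ it can no longer equal $c_i^q$. Hence each pair $(i,q)$ contributes at most one event, so $|\K|\le nN<\infty$.

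For the rate estimate, the crux is that outside $\K$ each coordinate stays in a single regime of the first-order condition. Fixing $k\in[k_1,k_2]$ and $i$ with $\lambda^{k-1}_i\neq\lambda^k_i$, the hypothesis $[k_1-1,k_2+1]\cap\K=\emptyset$ guarantees that neither $k-1$ nor $k$ lies in $\K$. I would first observe that $\lambda^{k-1}_i$ cannot sit on a kink $c_i^q$: otherwise, since the sequence is non-increasing and $\lambda^k_i\neq\lambda^{k-1}_i$, we would have $\lambda^k_i<c_i^q$, which is exactly the second defining event of $\K$ at index $k-1$. For the same reason $\lambda^{k-1}_i$ cannot be strictly above a threshold that $\lambda^k_i$ reaches or crosses (the first defining event). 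Consequently $\lambda^{k-1}_i$ and $\lambda^k_i$ lie in a common open interval $]c_i^{j},c_i^{j+1}[$, and since $k\notin\K$ the next value $\lambda^{k+1}_i$ stays in the same interval. On this interval the first-order condition (Lemma \ref{lemma:unicity}) forces $F_i=j\bar{q}$, so by the explicit expression \eqref{eq:G} one has $\Lambda_i(\lambda_{-i})=g_i^{j}(\lambda_{-i})$; in particular $\lambda^k_i=g_i^{j}(\lambda^{k-1}_{-i})$ and $\lambda^{k+1}_i=g_i^{j}(\lambda^{k}_{-i})$.

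It then remains to compare these two values. Since $g_i^{j}$ is smooth and $\lambda^{k-1}_{-i}\ge\lambda^{k}_{-i}$ componentwise, integrating the gradient along the segment joining $\lambda^{k}_{-i}$ to $\lambda^{k-1}_{-i}$ gives
\begin{equation*}
\lambda^{k}_i-\lambda^{k+1}_i
= g_i^{j}(\lambda^{k-1}_{-i})-g_i^{j}(\lambda^{k}_{-i})
= \int_0^1\sum_{i'\in V(i)}\partial_{\lambda_{i'}}g_i^{j}\bigl(\lambda^{k}_{-i}+t(\lambda^{k-1}_{-i}-\lambda^{k}_{-i})\bigr)\,(\lambda^{k-1}_{i'}-\lambda^{k}_{i'})\,\dd t.
\end{equation*}
All increments $\lambda^{k-1}_{i'}-\lambda^{k}_{i'}$ are non-negative, each partial derivative is bounded below by $\frac{1}{N\alpha}(\frac{c_*}{c^*})^5$ by the preceding lemma, and monotonicity of $g_i^{j}$ keeps the integrand on branch $j$ throughout the segment. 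Bounding the sum below by its largest term then yields
\begin{equation*}
\lambda^{k}_i-\lambda^{k+1}_i
\ge \frac{1}{N\alpha}\Bigl(\frac{c_*}{c^*}\Bigr)^5\sum_{i'\in V(i)}(\lambda^{k-1}_{i'}-\lambda^{k}_{i'})
\ge \frac{1}{N\alpha}\Bigl(\frac{c_*}{c^*}\Bigr)^5\max_{i'\in V(i)}(\lambda^{k-1}_{i'}-\lambda^{k}_{i'}),
\end{equation*}
which is the claim. The main obstacle is the regime bookkeeping of the second paragraph: one must combine the precise two-case definition of $\K$ with monotonicity to certify that the active branch $g_i^{j}$ is the \emph{same} at steps $k$ and $k+1$, so that the difference is genuinely $g_i^{j}(\lambda^{k-1}_{-i})-g_i^{j}(\lambda^{k}_{-i})$ and not a difference of two distinct branches, nor a difference involving a sticky kink value (which would force the left-hand side to vanish and break the bound).
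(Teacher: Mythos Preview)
Your argument is correct and follows the same strategy as the paper: identify a single branch $g_i^{j}$ on which both $\Lambda_i(\lambda^{k-1}_{-i})$ and $\Lambda_i(\lambda^{k}_{-i})$ lie, then use the uniform lower bound on $\partial_{\lambda_{i'}}g_i^{j}$ from the preceding lemma to get the rate. The paper's proof simply asserts ``by construction'' that the same $j$ works at both steps and then invokes monotonicity of $g_i^j$ together with the derivative bound, while you supply the missing justification by using the definition of $\K$ and the hypothesis $[k_1-1,k_2+1]\cap\K=\emptyset$ to pin $\lambda_i^{k-1},\lambda_i^{k},\lambda_i^{k+1}$ inside a common open interval $]c_i^{j},c_i^{j+1}[$; your mean-value integral for the increment is also a cleaner replacement for the paper's one-line monotonicity estimate.
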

\begin{proof}
By definition of $\lambda^k$,
$
\lambda^{k}_{i} -\lambda^{k+1}_{i}  = \Lambda^i(\lambda^{k-1}_{-i} ) -
  \Lambda^i(\lambda^{k}_{-i}).
$
By construction, there exists  $j\in [0,N-1]$ such that
$\Lambda^i(\lambda^{k-1}_{-i} )=g_i^j(\lambda_{-i}^{k-1})$ and
$\Lambda^i(\lambda^{k}_{-i} )=g_i^j(\lambda_{-i}^{k})$.
Then by monotony of $g$,   $g_i^j(\lambda_{-i}^{k})-g_i^j(\lambda_{-i}^{k-1})$ is lower bounded
 by 
\begin{equation}
|\partial_{\lambda_{i'}}g_i^j|_{\infty}  ( \lambda^{k-1}_{i'} - \lambda^{k}_{i'} ),
\end{equation}
for $i'\in V(i)$. 
We then  take the $i'\in V(i)$ that maximizes  $( \lambda^{k-1}_{i'} -
\lambda^{k}_{i'} )$ and use the previous lemma to get the result. 
\end{proof}

\subsection{Algorithm Implementation}
We implemented this algorithm in Matlab. We used a dichotomy to
compute the $g_i^k$.
Note that for linear cost  the analysis is similar.  
We define
$g_i(\lambda_{-i})$ as the unique $x$ such that 
$f_i(x,\lambda_{-i}) = 0 $ and $ x\geq 0 $
and  define $\Lambda$ such that   
\begin{equation}
\Lambda_i(\lambda) = \min(c_i, g_i(\lambda_{-i}))
\end{equation}

We performed some numerical comparisons with CVX, a package for
specifying and solving convex programs \cite{gb08,cvx} for both linear
and piecewise linear production cost functions.
We generated a graph with 100 nodes connected randomly. To generate the
graph, we used a Barabasi-Albert model \cite{barabasi1999emergence} to ensure some scaling
properties. 
The experiment was performed on a personal laptop (OSX, 4 Go,1.3 GHz
Intel Core i5). 
The networks randomly generated to test the implementations are
displayed in Figures \ref{fig:fig1} and \ref{fig:fig2}, and the results are summarized in
Table \ref{table:resultssimu}.

Both CVX and the fixed point algorithm converges to an estimate of  the optimal value. 
We did not try to optimize the numerical algorithm, but some trick could be used to avoid the costly estimation of the $g$s.
Still, the linear version of the fixed point algorithm was about ten times
faster than the CVX resolution.
Note that the algorithm could be distributed, since at each iteration,
the computation at each node only depends on the values of the
previous iteration.

\begin{table}
\centering
\begin{subtable}{.5\textwidth}
\label{table:linear}
\centering
\begin{tabular}{|l|l|l|}
\hline
           & \textbf{Fixed Point} & \textbf{CVX} \\ \hline
\textbf{cost}  &83.2                & 83.195              \\ \hline
\textbf{time (s)} &2.03              & 30.23      \\ \hline
\end{tabular}
\end{subtable}%
\begin{subtable}{.5\textwidth}
\centering
\label{generic}
\begin{tabular}{|l|l|l|}
\hline
           & \textbf{Fixed Point} & \textbf{CVX} \\ \hline
\textbf{cost}  &4971.4             & 4971.4              \\ \hline
\textbf{time (s)} &28.39              & 35.23      \\ \hline
\end{tabular}
\end{subtable}
\caption{Results for a linear (a) and piecewise linear (b) instances
  of the problem solved with the  fixed point algorithm and CVX. }
\label{table:resultssimu}
\end{table}

\begin{figure}
\begin{subfigure}[b]{0.45\textwidth}
  \centering
  \includegraphics[width=\textwidth]{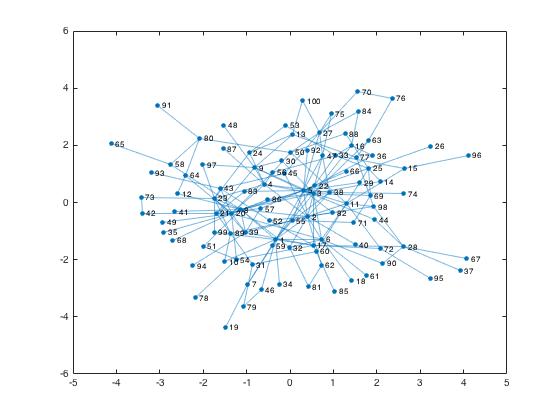}
\caption{The network generated to test the linear implementation of
  the algorithm}
\label{fig:fig1}
\end{subfigure}
\begin{subfigure}[b]{0.45\textwidth}
  \centering
  \includegraphics[width=\textwidth]{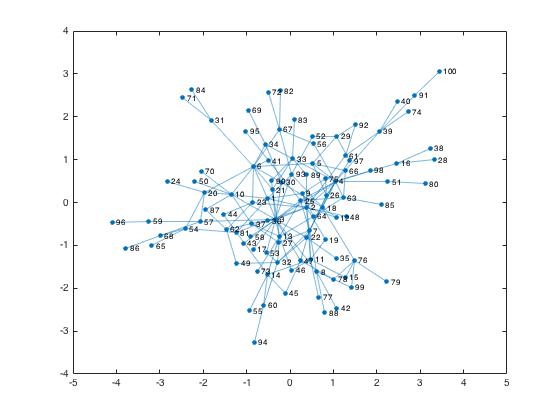}
\caption{The network generated to test the generic implementation of
  the algorithm}\label{fig:fig2}
\end{subfigure}
\end{figure}

\section{Conclusion}

\label{sec:conclusion}

In this paper we have shown how to characterize and compute  the optimal mechanism for a network market. 
We observed in particular that the allocation problem for the optimal and the standard mechanism are the same.
We have proposed an algorithm based on a fixed point to solve the allocation problem and derived regularity properties of the solution.
Our contribution provides a direction to benchmark mechanism proposals.

\bibliographystyle{elsarticle-num}

\begin{thebibliography}{10}
\expandafter\ifx\csname url\endcsname\relax
  \def\url#1{\texttt{#1}}\fi
\expandafter\ifx\csname urlprefix\endcsname\relax\def\urlprefix{URL }\fi
\expandafter\ifx\csname href\endcsname\relax
  \def\href#1#2{#2} \def\path#1{#1}\fi

\bibitem{escobar2010monopolistic}
J.~F. Escobar, A.~Jofr{\'e}, Monopolistic competition in electricity networks
  with resistance losses, Economic theory 44~(1) (2010) 101--121.

\bibitem{escobar2008equilibrium}
J.~F. Escobar, A.~Jofr{\'e}, Equilibrium analysis of electricity auctions,
  Department of Economics Stanford University.

\bibitem{NicolasFigueroaAlejandroJofrBenjaminHeymann}
N.~Figueroa, A.~Jofr{\'e}, B.~Heymann, Cost-minimizing regulations for a
  wholesale electricity market.

\bibitem{myerson1981optimal}
R.~B. Myerson, Optimal auction design, Mathematics of operations research 6~(1)
  (1981) 58--73.

\bibitem{Heymann:2016aa}
B.~Heymann, A.~Jofr{\'e}, Mechanism design and auctions for electricity network
  (2016).

\bibitem{aussel2013electricity}
D.~Aussel, R.~Correa, M.~Marechal, Electricity spot market with transmission
  losses, Management 9~(2) (2013) 275--290.

\bibitem{anderson2013mixed}
E.~J. Anderson, P.~Holmberg, A.~B. Philpott, Mixed strategies in discriminatory
  divisible-good auctions, The RAND Journal of Economics 44~(1) (2013) 1--32.

\bibitem{hu2007using}
X.~Hu, D.~Ralph, Using epecs to model bilevel games in restructured electricity
  markets with locational prices, Operations research 55~(5) (2007) 809--827.

\bibitem{Altman2006}
E.~Altman, T.~Boulogne, R.~El-Azouzi, T.~Jim{\'{e}}nez, L.~Wynter, {A survey on
  networking games in telecommunications}, Computers and Operations Research
  33~(2) (2006) 286--311.
\newblock \href {http://dx.doi.org/10.1016/j.cor.2004.06.005}
  {\path{doi:10.1016/j.cor.2004.06.005}}.

\bibitem{Babaioff2009}
M.~Babaioff, E.~Pavlov, N.~Nisan, {Mechanisms for a Spatially Distributed
  Market}, Games and Economic Behavior (GEB) 66.

\bibitem{Cho2003}
I.-K. Cho, {Competitive Equilibrium in a Radial Network}, The RAND Journal of
  Economics 34~(3) (2003) 438.
\newblock \href {http://dx.doi.org/10.2307/1593740}
  {\path{doi:10.2307/1593740}}.

\bibitem{laffont2009theory}
J.-J. Laffont, D.~Martimort, The theory of incentives: the principal-agent
  model, Princeton university press, 2009.

\bibitem{krishna2009auction}
V.~Krishna, Auction theory, Academic press, 2009.

\bibitem{Roughgarden2016}
T.~Roughgarden, Twenty lectures on algorithmic game theory, Cambridge
  University Press, 2016.

\bibitem{nisan2007introduction}
N.~Nisan, Introduction to mechanism design (for computer scientists),
  Algorithmic game theory 209 (2007) 242.

\bibitem{aumann1992handbook}
R.~J. Aumann, S.~Hart, Handbook of game theory with economic applications,
  Vol.~2, Elsevier, 1992.

\bibitem{topkis1998supermodularity}
D.~M. Topkis, Supermodularity and complementarity, Princeton university press,
  1998.

\bibitem{gibbons1992game}
R.~Gibbons, Game theory for applied economists, Princeton University Press,
  1992.

\bibitem{correa2009planner}
J.~R. Correa, N.~Figueroa, On the planner's loss due to lack of information in
  bayesian mechanism design, in: Algorithmic Game Theory, Springer, 2009, pp.
  72--84.

\bibitem{Philpott2012}
R.~Palma-Benhke, A.~Philpott, A.~Jofr{\'{e}}, M.~Cort{\'{e}}s-Carmona,
  {Modelling network constrained economic dispatch problems}, Optimization and
  Engineering 14~(3) (2013) 417--430.
\newblock \href {http://dx.doi.org/10.1007/s11081-012-9203-5}
  {\path{doi:10.1007/s11081-012-9203-5}}.

\bibitem{Rockafellar2009}
R.~T. Rockafellar, R.~J.-b. Wets, {Variational analysis}.

\bibitem{Rockafellar1993}
R.~T. Rockafellar, {Lagrange Multipliers and Optimality}, SIAM Review 35~(2)
  (1993) 183--238.
\newblock \href {http://dx.doi.org/10.1137/1035044}
  {\path{doi:10.1137/1035044}}.

\bibitem{Berge1997}
C.~Berge, {Topological Spaces: including a treatment of multi-valued functions,
  vector spaces, and convexity}, 1997.

\bibitem{bagnoli2005log}
M.~Bagnoli, T.~Bergstrom, Log-concave probability and its applications,
  Economic theory 26~(2) (2005) 445--469.

\bibitem{laffont1988dynamics}
J.-J. Laffont, J.~Tirole, The dynamics of incentive contracts, Econometrica:
  Journal of the Econometric Society (1988) 1153--1175.

\bibitem{gb08}
M.~Grant, S.~Boyd, Graph implementations for nonsmooth convex programs, in:
  V.~Blondel, S.~Boyd, H.~Kimura (Eds.), Recent Advances in Learning and
  Control, Lecture Notes in Control and Information Sciences, Springer-Verlag
  Limited, 2008, pp. 95--110.

\bibitem{cvx}
M.~Grant, S.~Boyd, {CVX}: Matlab software for disciplined convex programming,
  version 2.1 (Mar. 2014).

\bibitem{barabasi1999emergence}
A.-L. Barab{\'a}si, R.~Albert, Emergence of scaling in random networks, science
  286~(5439) (1999) 509--512.

\bibitem{sundaram1996first}
R.~K. Sundaram, A first course in optimization theory, Cambridge university
  press, 1996.

\end{thebibliography}

\appendix

\section{Proof of Lemma \ref{lemma:H1}}
\label{appendix:lemmaH1}
\begin{proof}
By definition
\begin{eqnarray*}
X(a^1\ldots a^{k-1},b,a^{k+1} \ldots a^N) -
X(a^1\ldots a^{k-1},c,a^{k+1}\ldots a^N)=\\
V(a^1\ldots b\ldots a^N) -
V(a^1\ldots c\ldots a^N)+\\
\sum_{j\neq k}a^j [
Q^j(a^1\ldots b\ldots a^N) -
Q^j(a^1\ldots c\ldots a^N)]\\
+bQ^k(a^1\ldots b\ldots a^N) -
cQ^k(a^1\ldots c\ldots a^N)\\
=\int^c_b Q^k(a^1\ldots s\ldots a^N)\dd s+\sum_{j\neq k} a^j [
Q^j(a^1\ldots b\ldots a^N) -
Q^j(a^1\ldots c\ldots a^N) ]
\\+bQ^k(a^1\ldots b\ldots a^N) -
cQ^k(a^1\ldots c\ldots a^N).
\end{eqnarray*}
We use (H1) for the last equality.
Then we apply a telescopic formula
\begin{eqnarray*}
  X(a) - X(b) = X(a^1 \ldots a^N) - X(b^1, a^2 \ldots a^N)+\\
X(b^1, a^2\ldots  a^N) - X(b^1, b^2 \ldots  a^N)
+\ldots \\
+ X(b^1\ldots  b^{N^1}, a^N) - X(b^1\ldots  b^N)  \\
=\sum_{k=1}^N ( \int_{a^k}^{b^k} Q^k(b^1\ldots s\ldots a^N)\dd s)+\\
\sum_{k=1}^N 
\sum_{j < k}
b^j [ Q^j(b^1\ldots b^{k-1},a^k, a^{k+1} \ldots a^N) -
Q^j(b^1\ldots b^{k-1},b^k, a^{k+1} \ldots a^N)]\\+
\sum_{k=1}^N  \sum_{j >k}
a^j [ Q^j(b^1\ldots b^{k-1},a^k, a^{k+1} \ldots a^N) -
Q^j(b^1\ldots b^{k-1},b^k, a^{k+1} \ldots a^N) ]\\
+\sum_{k=1}^N  a^kQ^k(b^1\ldots b^{k-1}, a^k, a^{k+1}\ldots a^N) -b^kQ^k(b^1\ldots
  b^{k-1} \ldots b^{k}, a^{k+1}\ldots a^N)
\end{eqnarray*}
Reordering the last three terms, we get
\begin{eqnarray*}
\sum_{j=1}^N  \sum_{k > j}
b^j [ Q^j(b^1\ldots b^{k-1},a^k, a^{k+1} \ldots a^N) -
Q^j(b^1\ldots b^{k-1},b^k, a^{k+1} \ldots a^N)]\\+
\sum_{j=1}^N  \sum_{k<j}
a^j [ Q^j(b^1\ldots b^{k-1},a^k, a^{k+1} \ldots a^N) -
Q^j(b^1\ldots b^{k-1},b^k, a^{k+1} \ldots a^N) ]\\
+\sum_{j=1}^N  a^jQ^j(b^1\ldots b^j-1, a^j, a^{j+1}\ldots a^N) -b^jQ^j(b^1\ldots
  b^{j-1} \ldots b^{j}, a^{j+1}\ldots a^N)\\
=
\sum_{j=1}^N  \{ b^j \sum_{k > j}
[ Q^j(b^1\ldots b^{k-1},a^k, a^{k+1} \ldots a^N) -
Q^j(b^1\ldots b^{k-1},b^k, a^{k+1} \ldots a^N)]\\
+  a^jQ^j(b^1\ldots b^{j-1}, a^j, a^{j+1}\ldots a^N) -b^jQ^j(b^1\ldots
  b^{j-1} \ldots b^{j}, a^{j+1}\ldots a^N) +\\
a^j \sum_{k<j}
[ Q^j(b^1\ldots b^{k-1},a^k, a^{k+1} \ldots a^N) -
Q^j(b^1\ldots b^{k-1},b^k, a^{k+1} \ldots a^N) ] 
  \}\\
= \sum_j^N  a^jQ^j(a^1 \ldots a^N) -  b^jQ^j(b^1 \ldots b^N)
\end{eqnarray*}
We end up with 
\begin{equation}
 X(a) - X(b) = \sum_{j=1}^N ( a^j Q^j (a) - b^j Q^j (b) +
 \int_{a^j}^{b^j} Q^j (b^1\ldots b^{j-1},t,a^{j+1} \ldots a^N)\dd t )
\end{equation}
\end{proof}

\section{On $\Set$ and the regularity of $q$}
\label{appendix:qc1}
Remember that the set $\Set$ corresponds to the points of transition  between the two possibilities
defined by the first order condition \eqref{eq:FOC}:
\begin{equation}
\Set = \{ c\in\CC^n, q_i(c) = j\bar{q} \ \mbox{and} \
\lambda_i(c) = c_{j'} \ \mbox{for some} \  i\in I, j\in J,
j'\in \{j,j+1 \}\}.
\end{equation}
Our first goal is to  show
that $\Set$ is a finite  union of sets of zero measure (Lemmas
\ref{lemma:SinS} and \ref{lemma:nullMeas1}).
To do so, we apply the implicit functions theorem. 
From this we deduce the regularity of $q$ (proof of Lemma \ref{lemma:qc1}).
For any $I_A$, $I_B$ partition of $I$, and $I_C \subset I_B$ not
empty, $j\in J^I$ and $j' \in J^I$ such that for all $i$, $j'\in
\{j_i,j_i+1 \}$, we denote  by $S(I_A,I_B,I_C,j,j')$ the set
 \begin{equation}
 \left 
\{ c\in\CC^n \ \mbox{such that for any $i\in I$}
\begin{cases}
i\in I_A \Rightarrow & \lambda_i(c) = c_i^{j_i'} \ \mbox{and} \ q_i(c)
\notin \N\bar{q}\\
i\in I_B \Rightarrow &q_i(c) = j_i \bar{q} \\
i\in I_C \Rightarrow & \lambda_i(c) = c_i^{j'_i}  
\end{cases}
\right \} .
\end{equation}
For an element $c$ of such set, we denote by $M$ the matrix
\begin{equation}
\label{eq:M1}
M(c) = \left(\frac{\partial F_i(\lambda(\cc))}{\partial
  \lambda_j}\right)_{(i,j)\in I_B}.
\end{equation}
We need to study the invertibility of $M$  to apply the implicit
functions theorem (Lemma \ref{lemma:Minvert}).
Note that the function $S$ is defined on a finite set. We use the image
of $S$  to show that
the measure of $\Set$ with respect to the Lebesgue measure is zero.
We first show in the next
lemma that $\Set$ is included in the finite union of
the $S(I_A,I_B,I_C,j,j')$ family. Then we will show that each element
of this family has a measure equal to zero.   
\begin{lemma}
\label{lemma:SinS}
$\Set \subseteq \cup S(I_A,I_B,I_C,j,j')$
\end{lemma}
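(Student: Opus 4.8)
The plan is to take an arbitrary $c\in\Set$ and to extract from it a single choice of parameters $(I_A,I_B,I_C,j,j')$ for which $c\in S(I_A,I_B,I_C,j,j')$; since the family $S(\cdot)$ is finite, this proves the stated inclusion. The whole classification is dictated by the first order condition \eqref{eq:FOC}, read as $F_i(\lambda_i(c),\lambda_{-i}(c))\in K_i(\lambda_i(c))$ together with the interpretation of $F_i$ as the production $q_i$.

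First I would split $I$ according to whether production is a multiple of $\bar q$: set $I_B=\{\,i\in I:q_i(c)\in\N\bar q\,\}$ and $I_A=I\setminus I_B$, and for $i\in I_B$ define $j_i$ by $q_i(c)=j_i\bar q$. For a node $i\in I_A$ the production is not a multiple of $\bar q$, and inspecting the four branches of $K_i$ shows that the cases $\lambda_i<c_i^1$, $\lambda_i\in(c_i^{\,j},c_i^{\,j+1})$ and $\lambda_i>c_i^N$ each force $q_i(c)$ to be one of $\{0,j\bar q,N\bar q\}$, hence a multiple of $\bar q$ — a contradiction. The only surviving branch is $\lambda_i(c)=c_i^{j'_i}$ for some index $j'_i$, which is exactly the condition required of nodes in $I_A$; I would then set $j_i:=j'_i$, so that the constraint $j'_i\in\{j_i,j_i+1\}$ holds trivially.

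Next I would define $I_C=\{\,i\in I_B:\lambda_i(c)=c_i^{j'_i}\text{ for some }j'_i\,\}$, the nodes whose production is a multiple of $\bar q$ and which simultaneously sit at an angle of the cost function. For such an $i$ the branch $K_i(c_i^{j'_i})=[j'_i-1,j'_i]\bar q$ gives $j_i\bar q=q_i(c)\in[j'_i-1,j'_i]\bar q$, whence $j_i\in\{j'_i-1,j'_i\}$, i.e. $j'_i\in\{j_i,j_i+1\}$. For the remaining nodes $i\in (I_B\setminus I_C)$ the set $S$ imposes no condition on $\lambda_i$, so I would merely complete the vector $j'$ by setting $j'_i:=j_i$. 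Finally, by definition of $\Set$ there is a node $i_0$ with $q_{i_0}(c)=j\bar q$ and $\lambda_{i_0}(c)=c_{i_0}^{j'}$, $j'\in\{j,j+1\}$; this node lies in $I_B$ and is at an angle, so $i_0\in I_C$ and in particular $I_C\neq\emptyset$, as the definition demands.

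With these choices every defining implication of $S(I_A,I_B,I_C,j,j')$ holds for $c$ by construction, so $c\in S(I_A,I_B,I_C,j,j')$ and the inclusion follows. The only delicate point — and the part I expect to require care rather than difficulty — is the exhaustive reading of \eqref{eq:FOC}: one must check both that being off a multiple of $\bar q$ forces $\lambda_i$ onto an angle $c_i^{j'_i}$, and conversely that sitting at an angle $c_i^{j'_i}$ while producing a multiple pins $j'_i$ to $\{j_i,j_i+1\}$. Once these two bookkeeping facts are established, the partition $I=I_A\sqcup I_B$ and the subset $I_C$ are forced and the verification is routine.
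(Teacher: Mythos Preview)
Your proposal is correct and follows essentially the same approach as the paper: you use the first order condition (Lemma~\ref{lemma:unicity}) to classify each node into $I_A$ or $I_B$, and the definition of $\Set$ to guarantee $I_C\neq\emptyset$. The paper's own proof is a terse three-line sketch of exactly this reasoning, whereas you have spelled out explicitly how each branch of $K_i$ forces the classification and why the constraint $j'_i\in\{j_i,j_i+1\}$ is met; your version is strictly more detailed but not different in substance.
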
  
\begin{proof}
Take  $c\in \Set$, then by definition of $\Set$, there exist $i\in I$,
$j\in J$ and  $j'\in \{j,j+1 \}$ such that $q_i(c) = j\bar{q}$ and
$\lambda_i(c) = c_{j'}$, therefore $I_C$ is not empty. By Lemma
\ref{lemma:unicity}, for all $i\in I$, $i$ is in $I_A$ or $I_B$. Hence we
have a set $S(I_A,I_B,I_C,j,j')$  such that $c$ is in this set, so $\Set$ is included in the
union of those sets.
\end{proof}
\begin{lemma}
\label{lemma:Minvert}
\label{lemma:M}
For any $c\in \C^n $ the matrix $M(c)$  is invertible.  
\end{lemma}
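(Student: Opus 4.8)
The plan is to read the sign pattern of $M(c)$ off the two derivative formulas and to recognise $-M(c)$ as a weighted diagonally dominant Z-matrix. By \eqref{eq:der1} every diagonal entry $\partial_{\lambda_i}F_i$ is strictly negative, and by \eqref{eq:der2} the off-diagonal entry $\partial_{\lambda_{i'}}F_i$ is nonnegative, being positive exactly when $i'\in V(i)$ and zero otherwise. Hence $-M(c)$ has strictly positive diagonal and nonpositive off-diagonal entries, i.e. it is a Z-matrix. Since the multipliers live in $[c_*,c^*]^n$ with $c_*>0$, the vector $\lambda$ is strictly positive, and I would use $\lambda$ itself as the weight witnessing dominance.

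The key computation is to evaluate $M(c)\lambda$ on the rows indexed by $I_B$. Splitting the diagonal from the off-diagonal part, and using that for $i\in I_B$ the entry $M_{ii'}$ is nonzero only for $i'\in V(i)\cap I_B$, one gets
\begin{equation}
(M\lambda)_i = -\lambda_i\sum_{i'\in V(i)}\frac{4\lambda_{i'}^2}{r_{i,i'}(\lambda_i+\lambda_{i'})^3} + \lambda_i\sum_{i'\in V(i)\cap I_B}\frac{4\lambda_{i'}^2}{r_{i,i'}(\lambda_i+\lambda_{i'})^3} = -\lambda_i\sum_{i'\in V(i)\cap I_A}\frac{4\lambda_{i'}^2}{r_{i,i'}(\lambda_i+\lambda_{i'})^3},
\end{equation}
where the last equality uses that $I_A$ and $I_B$ partition $I$. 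Thus $(-M)\lambda\ge 0$ componentwise, with strict inequality in exactly those rows $i\in I_B$ possessing a neighbour in $I_A$. Equivalently, with $D=\mathrm{diag}(\lambda)$, the matrix $(-M)D$ is weakly row-diagonally dominant, and strictly dominant on the rows adjacent to $I_A$.

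From here I would finish with the standard nonsingularity criterion for irreducibly diagonally dominant matrices (Taussky's theorem). First decompose $I_B$ into the connected components of the subgraph it induces; after a permutation $M(c)$ is block diagonal, so it suffices to treat one component $B_k$. On $B_k$ the matrix $(-M)D$ is irreducible (by connectedness of $B_k$), weakly diagonally dominant, and has at least one strictly dominant row, hence is nonsingular; as $D$ is an invertible diagonal, $M|_{B_k}$ is invertible, and therefore so is $M(c)$.

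The step demanding the most care is guaranteeing a strictly dominant row inside every block, i.e. that each connected component of $I_B$ touches $I_A$. This is where I would invoke that the network is connected and $I_A\neq\emptyset$: if a component $B_k$ of the induced subgraph on $I_B$ had no edge to $I_A$, then it would have no edge leaving it at all in the full graph (an edge to $I_B\setminus B_k$ would merge two components), so $B_k$ would be a connected component of the whole network; connectedness then forces $B_k=I$ and $I_A=\emptyset$. In that single degenerate configuration the identity above gives $M\lambda=0$, so $M$ really is singular; I would therefore remark that it does not occur for the sets $S(I_A,I_B,I_C,j,j')$ we need, where $I_C\subseteq I_B$ is nonempty and the system $F_i(\lambda)=j_i\bar{q}$ already pins $\lambda$ down independently of $c$, so the subsequent measure-zero conclusion follows directly without appealing to the implicit function theorem. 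Everything else reduces to bookkeeping with \eqref{eq:der1} and \eqref{eq:der2}.
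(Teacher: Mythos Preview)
Your argument and the paper's are the same idea in two dialects. Both exploit that, after scaling columns by the positive vector $\lambda$, the matrix $-M(c)$ has nonnegative row sums, with strict positivity precisely on rows of $I_B$ adjacent to $I_A$. You phrase this as weighted diagonal dominance of the Z-matrix $(-M)D$ and invoke Taussky's theorem block by block; the paper instead takes a hypothetical null vector $\alpha$, sets $a_i=\alpha_i/\lambda_i$, and runs the equivalent maximum-principle argument by hand (each $a_i$ is a sub-convex combination of its $I_B$-neighbours, so the largest one forces equality everywhere and hence $V(i)\subset I_B$ for all $i\in I_B$).

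Two minor points of divergence. First, your explicit use of graph connectedness (so that every component of $I_B$ meets $I_A$) is something the paper leaves tacit; its ``all $a_i$ equal'' step needs it just as much. Second, for the degenerate case $I_A=\emptyset$ the paper simply invokes the genericity assumption $(\star)$ on the demands, under which no allocation with $q_i\in\bar q\,\mathbb N$ at every node can satisfy all nodal constraints exactly, so $I_B=I$ never occurs; this is cleaner than your proposed detour of arguing the measure-zero conclusion for $S(\emptyset,I,I_C,j,j')$ directly, which would still require you to control the solution set of the system $F_i(\lambda)=j_i\bar q$. Since $(\star)$ is already available in the paper, you may as well use it.
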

\begin{proof}
Assume that there are some coefficients
$\alpha_i$ such that $\sum_i\alpha_i M_i=0$ where $M_i$ is the ith
column of $M$. 
Then  by \eqref{eq:der1} and \eqref{eq:der2}, the ith row of this
relation writes:
\begin{equation}
\label{eq:1029}
\alpha_i \sum_{j\in  V(i)}\frac{\lambda^2_j}{r_{i,j}(\lambda_i+\lambda_j)^3}=
\sum_{j\in V(i), j\in I_B} \frac{\alpha_j\lambda_i\lambda_j}{r_{i,j}(\lambda_i+\lambda_j)^3}.
\end{equation}
We denote $b_{i,j} =
\frac{\lambda_j^2\lambda_i}{r_{i,j}(\lambda_i+\lambda_j)^3}$ and $a_i=
\frac{\alpha_i}{\lambda_i}$.
Then \eqref{eq:1029} is equivalent to 
\begin{equation}
a_i = \sum_{j\in V(i), j \in I_B}a_j\frac{b_{i,j}}{\sum_{k\in V(i)}b_{i,k}}
\end{equation}

Insofar as we can slightly perturb the demand, we assume without loss of generality that 
t it is not possible to produce a multiple
of $\bar{q}$ at each node and satisfy exactly the nodal
constraints ($\star$).

Considering the biggest $a_i$,  we get that all $a_i$  are equal by convexity, thus  either all are equal to zero or
\begin{equation}
\sum_{j\in V(i)}b_{i,j}=
\sum_{j\in V(i), j\in I_B} b_{i,j}
\end{equation}
which is not the case since $I_A$ is not empty by ($\star$).
\end{proof}
Next we show that $S(I_A,I_B,I_C,j,j')$ has a zero Lebesgue measure.
\begin{lemma}
\label{lemma:nullMeas1}
For any $I_A$, $I_B$ partition of $I$, and $I_C \subset I_B$ not
empty, $j\in J^I$ and $j' \in J^I$ such that for all $i$, $j'\in
\{j,j+1 \}$, the measure  of the set  $S(I_A,I_B,I_C,j,j')$ is zero. 
\end{lemma}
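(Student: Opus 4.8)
The plan is to show that $S(I_A,I_B,I_C,j,j')$ is, locally around each of its points, contained in the graph of a smooth function of a proper subset of the coordinates of $c$, and then to conclude via Fubini together with a countable covering that its Lebesgue measure in $\R^{nN}$ vanishes. First I would fix a point $c\in S(I_A,I_B,I_C,j,j')$ and record the relations satisfied by the associated dual vector $\lambda(c)$, which is unique by Lemma \ref{lemma:unicity}. For $i\in I_A$ one has $\lambda_i=c_i^{j_i'}$, so I would treat the family $\lambda_{I_A}=(c_i^{j_i'})_{i\in I_A}$ as inputs; these are genuine coordinates of $c$, one per node of $I_A$. For $i\in I_B$ the production equals $j_i\bar{q}$, and since the supply constraint binds at optimality so that $q_i=F_i(\lambda)$, this reads $F_i(\lambda)=j_i\bar{q}$. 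The essential observation is that each $F_i$ depends on $c$ \emph{only} through $\lambda$ and on the fixed data $d,r$; hence the cost vector enters these $|I_B|$ equations exclusively through the inputs $\lambda_{I_A}$.

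Next I would eliminate $\lambda_{I_B}$ from this square system. The Jacobian of $(F_i)_{i\in I_B}$ with respect to $(\lambda_j)_{j\in I_B}$ is precisely the matrix $M(c)$ of \eqref{eq:M1}, which is invertible by Lemma \ref{lemma:Minvert}. Since each $F_i$ is $C^\infty$ on $[c_*,c^*]^n$ (the denominators $\lambda_i+\lambda_{i'}\ge 2c_*>0$ never vanish there), the implicit function theorem yields, on a neighborhood of $c$, a smooth map $\Phi$ with $\lambda_{I_B}=\Phi(\lambda_{I_A})=\Phi((c_i^{j_i'})_{i\in I_A})$. The point to retain is that $\Phi$ depends on $c$ only through the coordinates indexed by $I_A$.

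Now I would use that $I_C$ is nonempty: choose $i_0\in I_C\subseteq I_B$. For this node the defining relations give both $\lambda_{i_0}=\Phi_{i_0}((c_i^{j_i'})_{i\in I_A})$ (from the $I_B$ system) and $\lambda_{i_0}=c_{i_0}^{j'_{i_0}}$ (from the $I_C$ condition), whence
\begin{equation*}
c_{i_0}^{j'_{i_0}} = \Phi_{i_0}\bigl((c_i^{j_i'})_{i\in I_A}\bigr).
\end{equation*}
Because $I_A$ and $I_B$ are disjoint and $i_0\in I_B$, the coordinate $c_{i_0}^{j'_{i_0}}$ is not among the inputs on the right-hand side. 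This single equation therefore expresses one free coordinate of $c$ as a smooth function of the remaining ones, so the intersection of $S(I_A,I_B,I_C,j,j')$ with this neighborhood lies in the graph of a function over the other coordinates, a set of Lebesgue measure zero in $\R^{nN}$. Covering $S(I_A,I_B,I_C,j,j')$ by countably many such neighborhoods (using that $\R^{nN}$ is second countable) and invoking that a countable union of null sets is null gives the claim.

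The main obstacle, and the crux of the argument, is the elimination step: it hinges on the invertibility of $M(c)$ provided by Lemma \ref{lemma:Minvert}, and on the fact that $\Phi$ carries \emph{no} dependence on the coordinate $c_{i_0}^{j'_{i_0}}$. The latter is not a notational accident but reflects that the production map $F_i$ has no direct dependence on the bids $c$; recognizing and exploiting this independence is what makes the extra $I_C$ relation genuinely transversal and hence measure-reducing. The remaining care is routine: checking the regularity of $F_i$ on $[c_*,c^*]^n$ so the implicit function theorem applies, and recalling that $q_i=F_i(\lambda(c))$ since the supply constraint binds at the optimum.
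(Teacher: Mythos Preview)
Your proof is correct and follows essentially the same approach as the paper: apply the implicit function theorem (via the invertibility of $M$) to express $\lambda_{I_B}$ smoothly in terms of $c_{I_A}^{j'}$, use the $I_C$ condition to pin one further coordinate of $c$ and obtain a local codimension-one constraint, and then pass to a countable cover. Your write-up is in fact more explicit than the paper's about why the $I_C$ relation yields a genuine surface; the paper, conversely, records at the outset that $I_A\neq\emptyset$ (via the assumption~($\star$)), which is what underlies the invertibility in Lemma~\ref{lemma:Minvert}.
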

\begin{proof}
We assume in the market description that it is not possible to
produce a multiple $\bar{q}$  at each node and satisfy exactly the
nodal constraints (($\star$). Therefore it is not possible that $I_B=I$, therefore $I_A$
is not empty. By definition of  $S_{I_A,I_B,I_C,j,j'}$, for   all $i\in I_B$,
 \begin{equation}
F_i(c_{I_A}^{j'},
\lambda_{I_B}(c))=q_i(c) = j_i\bar{q},
\end{equation}
which is a system of  equations in $\lambda_{I_B}$ parametrized
by $c_{I_A}^{j'}$. Let  $c\in\CC$ such that the system is
satisfied,  by Lemma  \ref{lemma:M}, we can apply the implicit function
theorem, hence there is a ball around $c$ in which
$S(I_A,I_B,I_C,j,j')$ is included in a smooth surface. 
By compacity of $\CC$,  we can choose a sequence dense in
$S(I_A,I_B,I_C,j,j')$.
We apply the result to each element of this sequence. By density,
$S(I_A,I_B,I_C,j,j') $ is a countable union of smooth surfaces.
Therefore the measure of $S(I_A,I_B,I_C,j,j')$ is zero.
\end{proof}

A direct consequence of Lemma \ref{lemma:nullMeas1} and Lemma \ref{lemma:SinS} is 
\begin{lemma}
The measure of $\Set$ is zero. 
\end{lemma}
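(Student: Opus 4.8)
The plan is to combine the set containment provided by Lemma \ref{lemma:SinS} with the null-measure estimate of Lemma \ref{lemma:nullMeas1}, invoking only monotonicity and finite subadditivity of the Lebesgue measure. First I would recall from Lemma \ref{lemma:SinS} that
\begin{equation*}
\Set \subseteq \bigcup S(I_A,I_B,I_C,j,j'),
\end{equation*}
where the union ranges over all partitions $(I_A,I_B)$ of $I$, all nonempty $I_C \subseteq I_B$, and all tuples $j,j' \in J^I$ with $j_i' \in \{j_i,j_i+1\}$ for every $i$.

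The second step is to observe that this index set is \emph{finite}. Since both $I$ and $J$ are finite, there are finitely many ordered partitions $(I_A,I_B)$ of $I$, finitely many subsets $I_C \subseteq I_B$, and finitely many admissible pairs $(j,j')$; consequently the family $\{S(I_A,I_B,I_C,j,j')\}$ is a finite collection of subsets of $\CC^n$. This finiteness is exactly the point that lets the argument go through with elementary measure theory, avoiding any countable-union considerations.

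Finally, by Lemma \ref{lemma:nullMeas1} each $S(I_A,I_B,I_C,j,j')$ is Lebesgue-null, and a finite union of Lebesgue-null sets is Lebesgue-null. Since $\Set$ is contained in such a finite union, monotonicity of the Lebesgue measure with respect to inclusion yields that the measure of $\Set$ is zero.

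I do not expect any genuine obstacle here: the entire substance of the statement has already been established in Lemmas \ref{lemma:SinS} and \ref{lemma:nullMeas1}, the latter itself resting on the invertibility of $M$ (Lemma \ref{lemma:Minvert}) and the implicit function theorem. The only step warranting a moment's care is verifying that the union in Lemma \ref{lemma:SinS} is indexed by a finite set, so that finite subadditivity of the measure is all that is needed to conclude.
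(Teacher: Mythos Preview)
Your argument is correct and is exactly the one the paper uses: it states that the lemma is a direct consequence of Lemma \ref{lemma:SinS} and Lemma \ref{lemma:nullMeas1}, i.e., $\Set$ is contained in a finite union of null sets and hence is null. There is nothing to add.
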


We  proceed with the proof of  Lemma \ref{lemma:qc1}.
\begin{proof}[of lemma \ref{lemma:qc1}]
Let $\cc=(c_1 \ldots c_n) \in \CC^n \backslash \Set $.
Let us show that $q$ is infinitely differentiable at $\cc$.
We consider the two assertions: 
$$A_i = "\exists k_i,\quad F_i(\lambda(\cc)) \in ]k_i-1,k_i[\bar{q} \quad
\mbox{and} \quad \lambda_i = c_i^k"$$
$$B_i = "\exists k_i,\quad F_i(\lambda(\cc)) =k_i\bar{q} \quad
\mbox{and} \quad \lambda_i \in] c_i^k,c_i^{k+1}["$$
By Lemma \ref{lemma:unicity} and by defintion of $\Set$,  for any $i\in
I$ either $A_i$ or $B_i$ is true, but never both. We denote by 
$I_A$ (resp. $I_B$ ) the set of elements of $I$ for which $A_i$ (resp. $I_B$)
is true.
If  $A_i$ is true for all $i$  then there is a neighborhood $V$ of
$\cc$ such that for any element $\tilde{\cc}$ of $V$,  $F_i(\tilde{c})
\in ]k_i-1,k_i[\bar{q}$, therefore on  $V$, $\lambda(\tilde{\cc})= \tilde{\cc}$.

Else $I_B$ is not empty and by definition of $B_i$
\begin{equation}
\forall i\in I_B \quad F_i(\lambda_{I_A}, \lambda_{I_B}) = \bar{q}j_i,
\end{equation}
which we can see as an equation in $ \lambda_{I_B}$ parametrized by
$\lambda_{I_A}$.
This equation is satisfied at $\lambda(\cc)$.
If we denote by $M$ the matrix
\begin{equation}
M = \left(\frac{\partial F_i(\lambda(\cc))}{\partial
  \lambda_j}\right)_{(i,j)\in I_B},
\end{equation}
then $M$ is invertible (see lemma \ref{lemma:M}), 
the implicit function theorem applies and 
 there exists  a function $\lambda_{I_B}$ so that in a neighborhood $V$ of
$\cc$, for all 
$i\in I_B$,  we have  $F_i(\lambda_{I_A}, \lambda_{I_B}(\lambda_{I_A})) = \bar{q}k_i$. 
Moreover, since $F_i$ is $C^\infty$ on $[c_*,c^*]^n$, $\lambda_{I_B}$ is
$C^\infty$ on $V$.
Then if  $\tilde{c}\in V$,  $(\tilde{\cc},
\lambda_{I_B}(\tilde{\cc}))$ checks  the first order condition thus by
uniqueness  $c_{I_A},\lambda_{I_B}(\tilde{c})$ is the dual solution, and 
so, $q_i=F_i(\lambda_{I_B}(\tilde{\cc}), \tilde{\cc})$ 
for all $i\in I$  on $V$, so $q_i$ is
$C^\infty$ at $\cc$.
This concludes the proof of the first  part of the lemma.

The continuity of $q$ comes from Berge maximum principle (see Theorem
9.17 in \cite{sundaram1996first}) in a convex setting.
\end{proof}

The next lemma is an important component for the proof of Theorem \ref{th:Qc1}.
\begin{lemma}
\label{lemma:nullMeasure2}
Let $i \in I$ and $c_i\in C_i$, then the Lebesgue measure of the  set 
\begin{equation}
\Set_i(c_i) =\{c_{-i} \in C_{-i}, (c_i,c_{-i}) \in \Set \}
 \end{equation}
is zero.
\end{lemma}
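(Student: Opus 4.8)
The plan is to mirror the proof of Lemma \ref{lemma:nullMeas1}, but now slicing at the fixed value $c_i$. First I would note that the inclusion of Lemma \ref{lemma:SinS} restricts to the slice: for fixed $c_i$,
\[
\Set_i(c_i) \subseteq \bigcup \{\, c_{-i}\in C_{-i} : (c_i,c_{-i})\in S(I_A,I_B,I_C,j,j') \,\},
\]
a finite union over the combinatorial data $(I_A,I_B,I_C,j,j')$. Hence it suffices to show that, for each fixed choice of these data, the corresponding slice is Lebesgue-null in $C_{-i}$. I would then recall the surface structure from Lemma \ref{lemma:nullMeas1}: assumption $(\star)$ forces $I_A\neq\emptyset$, and the equations $F_k(\lambda)=j_k\bar{q}$ for $k\in I_B$ together with $\lambda_m=c_m^{j'_m}$ for $m\in I_A$ let one solve, by the implicit function theorem and the invertibility of $M$ (Lemma \ref{lemma:M}), for $\lambda_{I_B}$ as a smooth map $\Psi$ of the parameters $(c_m^{j'_m})_{m\in I_A}$. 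The remaining defining relations are the $|I_C|$ graph constraints $c_k^{j'_k}=\lambda_k=\Psi_k\big((c_m^{j'_m})_{m\in I_A}\big)$ for $k\in I_C$.

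Next I would fix $c_i$ and track which constraints survive as genuine conditions on $c_{-i}$. Since $(I_A,I_B)$ partition $I$ and $I_C\subseteq I_B$, the coordinate $c_i^{j'_i}$ plays exactly one role. If $i\in I_A$ or $i\in I_B\setminus I_C$, fixing $c_i$ pins at most a parameter (or an otherwise unconstrained coordinate) and leaves all $|I_C|$ graph constraints intact on the $c_{-i}$ coordinates $\{c_k^{j'_k}\}_{k\in I_C}$; locally these express $|I_C|\ge 1$ of the $c_{-i}$ coordinates as smooth functions of the others, so the slice lies on a surface of codimension $|I_C|$. The same conclusion holds when $i\in I_C$ but $|I_C|\ge 2$, using the surviving constraints indexed by $I_C\setminus\{i\}$.

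The one delicate case, and the expected main obstacle, is $I_C=\{i\}$. Here fixing $c_i$ turns the single constraint into $\Psi_i\big((c_m^{j'_m})_{m\in I_A}\big)=c_i^{j'_i}$, an equation purely among the $c_{-i}$ parameters (all indices lie in $I_A\subseteq I\setminus\{i\}$). To conclude that it still defines a null set, I would show that $\Psi_i$ has nonvanishing gradient, i.e. that $\lambda_i$ genuinely responds to the competitors' costs. Differentiating $\lambda_{I_B}=\Psi(\cdot)$ gives $\partial_{\lambda_{I_A}}\Psi=-M^{-1}N$ with $N=(\partial F_k/\partial\lambda_m)_{k\in I_B,\,m\in I_A}$ nonnegative and strictly positive along edges by \eqref{eq:der2}. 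By the sign pattern \eqref{eq:der1}–\eqref{eq:der2}, the matrix $M$ of \eqref{eq:1029} is a negative weighted graph-Laplacian-type operator, so $-M$ is an M-matrix and $M^{-1}N$ couples any two nodes of a common connected component with a definite (positive) sign. Since $(\star)$, applied componentwise after an arbitrarily small perturbation of the demands, guarantees that the connected component of $i$ meets $I_A$, the $i$-th row of $M^{-1}N$ is nonzero; hence $\nabla\Psi_i\neq 0$ and the level set $\{\Psi_i=c_i^{j'_i}\}$ is a smooth hypersurface, of measure zero in $C_{-i}$.

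In every case the slice of $S(I_A,I_B,I_C,j,j')$ is locally contained in a smooth surface of positive codimension in $C_{-i}$; a compactness-and-density argument identical to the end of Lemma \ref{lemma:nullMeas1} upgrades this to a countable union of such surfaces, hence to a Lebesgue-null set. Summing over the finitely many combinatorial data shows that $\Set_i(c_i)$ is null, as claimed. The crux, as indicated, is the case $I_C=\{i\}$: establishing that fixing $c_i$ does not trivialize the single defining constraint requires the genuine dependence of $\lambda_i$ on the other agents' costs, which is where the graph-Laplacian structure of $M$ and the componentwise form of $(\star)$ do the essential work.
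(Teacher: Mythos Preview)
Your proof is correct and covers the same three-way split ($i\in I_A$, $i\in I_B\setminus I_C$, $i\in I_C$) as the paper, but the treatment of the key case $i\in I_C$ is genuinely different. The paper does not analyse the gradient of $\Psi_i$; instead it uses a reduction trick: assuming (say) $j'_i=j_i$, it observes that collapsing $c_i^{j_i+1}$ down to $c_i^{j_i}$ does not alter the set of admissible $c_{-i}$, and after this modification the point can be regarded as belonging to the case $i\in I_A$, which has already been handled. Your route is more analytic: you keep $c_i$ fixed, isolate the surviving constraint $\Psi_i\big((c_m^{j'_m})_{m\in I_A}\big)=c_i^{j'_i}$, and show it is a genuine hypersurface by proving $\nabla\Psi_i\neq 0$ via the sign structure of $M$ and $N$. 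This works because the block of $-M$ containing $i$ is an irreducible, weakly diagonally dominant Z-matrix with strict dominance at the nodes bordering $I_A$, hence an M-matrix with strictly positive inverse on that block; your appeal to a componentwise form of $(\star)$ (or simply to connectivity of the network) is exactly what is needed to guarantee that the component of $i$ in $I_B$ borders $I_A$. Your argument is heavier but more transparent about \emph{why} the slice remains thin, and it avoids the slightly informal ``reduce to $i\in I_A$'' step; the paper's trick is shorter but leans on the reader to see why the collapsed problem is equivalent. For $i\in I_B\setminus I_C$ you also argue directly (the coordinate $c_i$ is absent from the defining relations), whereas the paper uses a Fubini-type contrapositive; both are fine.
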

\begin{proof}
Using Lemma \ref{lemma:SinS}, 
$\Set_i(c_i) \subseteq \{c_{-i} \in C_{-i}, (c_i,c_{-i}) \in  \cup
S(I_A,I_B,I_C,j,j') \}$. Now let $c_{-i}\in \Set_i(c_i)$,  $I_A$, $I_B$ a partition of $I$, and
$I_C\subseteq I_B$ not empty, and $j$, $j'$ such that 
$(c_i,c_{-i})\in S(I_A,I_B,I_C,j,j') $.
There are three possible cases: 
\begin{itemize}
\item $i \in I_A$  then as explained in the proof of Lemma
  \ref{lemma:nullMeas1}, $S(I_A,I_B,I_C,j,j')$ is locally a surface
  parametrized by $c_{i}$ so by projection over an hyperplane of the type $c_i=x$ it also a surface in $\CC_{-i}$.
\item $i \in I_B\backslash I_C$ locally, $q$ is independant of $c_i$
  therefore if $S(I_A,I_B,I_C,j,j')\cap (c_i,\Set_i(c_i))$ is of strictly
  positive measure, then
  $S(I_A,I_B,I_C,j,j')$ has also a strictly positive measure in
  $\CC^n$, since this is not true, $S(I_A,I_B,I_C,j,j')\cap
  (c_i,\Set_i(c_i))$ is of zero measure in the neighborhood.
\item Else $i \in  I_C$, which  is the tricky part.
First by definition of $I_C$, for any element $c$ of   $S(I_A,I_B,I_C,j,j')$, $q_i(c)
= j_i\bar{q}$ and $\lambda_i (c) = c_i^{j_i'}$.
Without loss of generality, we assume $j_i' = j_i$, the other case can
be treated similarly.
Then we make the observation that we do not modify the $c_{-i}$ of 
$S(I_A,I_B,I_C,j,j')$ if we set  $c_i^{j+1}= c_i^{j}$. Since we are
interested in $S(I_A,I_B,I_C,j,j')\cap (c_i,\Set_i(c_i))$, we can
assume without loss of generality that $c_i^{j+1}= c_i^{j}$.
Then we have reduced to the case  $i \in I_A$. 
\end{itemize}
We conclude as in the proof of Lemma \ref{lemma:nullMeas1}.
\end{proof}

\end{document}